\let\proof\relax
\let\endproof\relax
\newtheorem{fact}{Fact}
\pgfplotsset{width=14cm,compat=1.9}
\begin{document}

\title{Strategic Contention Resolution in Multiple Channels
}

\titlerunning{Strategic Contention Resolution in Multiple Channels}

\author{George Christodoulou\inst{1} \and Themistoklis Melissourgos\inst{1} \and Paul G. Spirakis\inst{1,2}\thanks{The work of this author was partially supported by the ERC Project ALGAME.}}

\authorrunning{G. Christodoulou, T. Melissourgos, and P. G. Spirakis}

\institute{Department of Computer Science, University of Liverpool, UK\\
	\email{\{G.Christodoulou,T.Melissourgos,P.Spirakis\}@liverpool.ac.uk}
	\and
	Computer Engineering \& Informatics Department, University of Patras, Greece
}

\maketitle

\begin{abstract}
	We consider the problem of resolving contention in communication networks with selfish users. In a \textit{contention game} each of $n \geq 2$ identical players has a single information packet that she wants to transmit using one of $k \geq 1$ multiple-access channels. To do that, a player chooses a slotted-time protocol that prescribes the probabilities with which at a given time-step she will attempt transmission at each channel. If more than one players try to transmit over the same channel (collision) then no transmission happens on that channel. Each player tries to minimize her own expected \textit{latency}, i.e. her expected time until successful transmission, by choosing her protocol. The natural problem that arises in such a setting is, given $n$ and $k$, to provide the players with a common, anonymous protocol (if it exists) such that no one would unilaterally deviate from it (equilibrium protocol). 
	
	All previous theoretical results on strategic contention resolution examine only the case of a single channel and show that the equilibrium protocols depend on the feedback that the communication system gives to the players. Here we present multi-channel equilibrium protocols in two main feedback classes, namely \textit{acknowledgement-based} and \textit{ternary}. In particular, we provide equilibrium characterizations for more than one channels, and give specific anonymous, equilibrium protocols with finite and infinite expected latency. In the equilibrium protocols with infinite expected latency, all players transmit successfully in optimal time, i.e. $\Theta(n/k)$, with probability tending to 1 as $n/k \to \infty$.

	\keywords{Contention resolution, Multiple channels, Acknowledgement-based protocol, Ternary feedback, Game theory}
\end{abstract}

\section{Introduction and Motivation}\label{Intro}

In the last fifteen years a great number of works in the Electrical and Electronics Engineering community has been devoted to designing medium access control (MAC) protocols that achieve high throughput. Their main approach is to consider, instead of the initial single-channel scheme, multi-channel schemes (\textit{multi-channel} MAC protocols) which resolve contention caused by packet collisions (e.g. \cite{MSW08,CSY03,SV04,SWM07,ZZHSS07,SLW15}). Apart from high throughput, an additional benefit of introducing more channels in such a system is robustness, meaning no great dependence on a single node's functionality. 
However, to the authors' knowledge, \textit{strategic} behaviour in multi-channel systems is limited to the Aloha protocol (\cite{MW03}), contrary to the case of single-channel systems (e.g. \cite{AAJ04,CLP14,FMN07,CGNRS16,CGNRS17}). In this paper, we examine the problem of \textit{strategic contention resolution} in multi-channel systems, where obedience to a suggested protocol is not required. We seek only \textit{anonymous}, equilibrium protocols, that is, protocols which do not use player IDs. If a player's protocol depended on her ID, then equilibria are simple, but can be unfair as well; scheduling each player's transmission through a priority queue according to her ID is an equilibrium.

We provide two types of equilibrium protocols. The first type, called \textit{FIN-EQ}, describes an anonymous, equilibrium protocol that yields finite expected time of successful transmission (\textit{latency}) to a player. Similarly, the second type, called \textit{IN-EQ}, describes an anonymous, equilibrium protocol which yields infinite expected latency to a player but is also \textit{efficient}, i.e, all players transmit successfully within $\Theta(\frac{\#players}{\#channels})$ time with high probability. We study equilibria for two classes of feedback protocols: (a) acknowledgement-based protocols, where the user gets just the information of whether she had a successful transmission or not, only when she tries to transmit her packet, and (b) protocols with ternary feedback, where the user is informed about the number of pending players in each time-step regardless of whether she attempted transmission or not. Previous results on these classes of protocols have been produced only for the case of a single transmission channel (\cite{CGNRS16,FMN07}). Here we investigate the multiple-channels case. 

In the last part of the paper we seek efficient protocols for both feedback classes. Due to an impossibility result that we show (Theorem \ref{thm: impos}), the technique used in \cite{FMN07} by Fiat et al. for the single-channel setting in order to provide a FIN-EQ that is also efficient, cannot be applied when there are more than one channels. This fact discourages us from searching for efficient FIN-EQ protocols and, instead, points to the search for efficient IN-EQ protocols, which indeed we find. One could argue that an anonymous protocol with infinite expected time until successful transmission, such as the IN-EQ protocols we provide, does not incentivize a player to participate in such a communication system. To this we reply that exponential waiting-time for a large amount of players (see protocol in Subsection \ref{ter: history-indep char}) is equally bad for a player, since waiting for e.g. $e^{10}$ msec is like waiting forever in Real-Time-Communications.  

\subsection{Our results}

The main contributions of this work are the characterizations of FIN-EQ and IN-EQ protocols in the two aforementioned feedback classes. Note that in the current bibliography regarding the single-channel setting, there are no characterizations of equilibrium in acknowledgement-based protocols. Also, in the single-channel setting the existence of a symmetric equilibrium with finite expected latency in the class of acknowledgement-based protocols remains an open problem, even for three players. However, for the settings with 2 and 3 transmission channels, we present simple anonymous FIN-EQ protocols for up to 4 and 5 players respectively which appeared in \cite{CMS18}. Furthermore, these protocols are memoryless, while the only known FIN-EQ protocol in the single-channel setting (\cite{CGNRS16}) is not.

The paper is organized in three main parts. Section \ref{section: ack} deals with FIN-EQ protocols in the acknowledgement-based feedback setting. In that section we give two characterizations of equilibrium and also provide FIN-EQ protocols for specific numbers of players and channels. Section \ref{section: ter} deals with FIN-EQ protocols in the ternary feedback setting and extends the corresponding results for the single-channel setting by Fiat et al. \cite{FMN07}. Finally, in Section \ref{eff prot}, IN-EQ protocols with deadline are provided with the property that the time until all $n$ players transmit successfully is $\Theta(n/k)$ with high probability, when there are $k$ channels. The latter result makes clear the advantage (with respect to time efficiency) that multiple channels bring to a system with strategic users, which is that the time until all players transmit successfully with high probability is inversely proportional to the number of available channels.

\subsection{Related work}

Contention in telecommunications is a major problem that results to poor throughput due to packet collisions. Motivated mainly by this problem, many works studying conflict-resolution protocols emerged in the late 70's (\cite{R75, C79, CJ79, H78, TM78}). Their approach is to resolve a collision when it occurs, and only then allow further transmissions on the channel. In those works the user's packets are assumed either to be generated by some stochastic process, or to appear at the same time in a worst-case scenario. Here, we consider the latter setting, i.e. a worst-case model of slotted time, where at any time-step all users have a packet ready to be transmitted (for an example of a similar bursty-input case, see \cite{BFHKL05}). As stated in \cite{LAG02}, even though real implementations of multiple-access channels do not fit precisely within the slotted-time model, it can be shown (e.g. \cite{KM87, ALR96}) that results obtained in this model do apply to realistic multiple-access channels.

Also, many works have examined multiple-channel communication protocols. In the data link layer, a Medium Access Control (MAC) protocol is responsible for the flow of data through a multiple-access medium. Our multiple-channels model is motivated by theoretical and experimental results which have shown that higher throughput and lower delay is achieved by using ``multi-channel'' MAC protocols (see \cite{NZD99,SWM07,MSW08,SV04}). In \cite{SV04},the \textit{multi-channel hidden terminal problem} is raised which, additionally to increased packet collisions, results to incapability of the users to ``sense'' more than one channels at a time (possibly none); therefore a user might not know whether another user transmitted successfully or not (see also \cite{TK75} for the classical ``hidden terminal problem''). This motivates us for the consideration of feedback protocols with minimum feedback, i.e. ``acknowledgement-based'' protocols (see par.2, Section \ref{Intro}). Also, settings with stronger feedback have been studied (e.g. the Aloha protocol in \cite{MW03}) in which a user is informed about the number of users that have not transmitted successfully yet. This is why we consider ``ternary feedback'' protocols (see par.2, Section \ref{Intro}).

Apart from the latter, all of the aforementioned works assume that the users blindly follow the given protocol, i.e. the users are not strategic. Contention resolution with strategic users has been studied only in single-channel settings or in the special case of the multiple-channel Aloha protocol. Some interesting cooperative and noncooperative models of slotted Aloha have been analysed in \cite{ABBA05,MMR06,MW03}. Aiming to understand the properties of contention resolution under selfishness, apart from various feedback settings, many cost functions have also been studied. One of the most meaningful cost functions is the one that models non-zero transmission costs as in \cite{CLP14} (and also \cite{AAJ04, MW03}). 

The theoretical works that relate the most to the current paper are the seminal paper by Fiat, Mansour and Nadav \cite{FMN07} and two by Christodoulou et al. \cite{CGNRS16, CGNRS17} which study protocols for strategic contention resolution with zero transmission costs. These works examine the case of a single transmission channel only. In \cite{FMN07} the feedback is ternary. In that work, a characterization of symmetric equilibrium is provided, along with an efficient FIN-EQ protocol that puts an extremely costly equilibrium after a deadline in order to force users to be obedient. The feedback model of \cite{CGNRS16} and \cite{CGNRS17} is the acknowledgement-based. Among other results, \cite{CGNRS16} provides the unique FIN-EQ protocol for the case of two players and a deadline IN-EQ protocol for at least three players.


\section{The Model and Definitions}\label{Model}

\paragraph{Game structure.} We define a \textit{contention game} as follows. Let $N = \{1,2,\dots,n\}$ be the set of players, also denoted by $[n]$, and $K=\{1,2, \dots, k\}$ the set of channels. Each player has a single packet that she wants to send through a channel in $K$, without caring about the identity of the channel. All players know $n$ and $K$. We assume synchronous communications with discretized time, i.e. time slots $t=1,2,\dots$. The players that have not yet successfully transmitted their packet are called \textit{pending} and initially all $n$ players are pending. At any given time slot $t$, a pending player $i$ has a set $A=\{0,1,2, \dots, k\}$ of \textit{pure strategies}: a pure strategy $a \in A$ is the action of choosing channel $a \in K$ to transmit her packet on, or no transmission ($a=0$). At time $t$, a \textit{(mixed) strategy} of a player $i$ is a probability distribution over $A$ that potentially depends on information that $i$ has gained from the process based on previous transmission attempts. If exactly one player transmits on a channel in a given slot $t$, then her transmission is \textit{successful}, the successful player exits the game (i.e. she is no longer pending), and the game continues with the rest of the players. On the other hand, whenever two or more players try to access the same channel (i.e. transmit) at the same time slot, a \textit{collision} occurs and their transmissions fail, in which case the players remain in the game. The game continues until all players have successfully transmitted their packets.

\paragraph{Transmission protocols.} Let $X_{i,t} \in A$ be the channel-indicator variable that keeps track of the identity of the channel where player $i$ attempted transmission at time $t$; value $0$ indicates no transmission attempt. For any $t \geq 1$, we denote by $\vv{X}_t$ the transmission vector at time $t$, i.e. $\vv{X}_t = (X_{1,t},X_{2,t}, \dots , X_{n,t})$. 

An \textit{acknowledgement-based} protocol uses very limited channel feedback. After each time step $t$, only players that attempted a transmission receive feedback, and the rest get no information. In fact, the information received by a player $i$ who transmitted during $t$ is whether her transmission was successful (in which case she gets an acknowledgement and exits the game) or whether there was a collision.

In a protocol with \textit{ternary feedback} every pending player in every round is informed about the number of remaining players $m \leq n$. This information is given to the players regardless of their transmission history.

Let $\vv{h}_{i,t}$ be the vector of the \textit{personal transmission history} of player $i$ up to time $t$, i.e. $\vv{h}_{i,t} = (X_{i,1},X_{i,2}, \dots , X_{i,t})$. We also denote by $\vv{h}_t$ the transmission history of all players up to time $t$, i.e. $\vv{h}_t = (\vv{h}_{1,t},\vv{h}_{2,t}, \dots \vv{h}_{n,t})$. A \textit{decision rule} $f_{i,t}$ for a pending player $i$ at time $t$, is a function that maps $\vv{h}_{i,t-1}$ to a strategy $\vv{P}_{i,t}$, with elements Pr$(X_{i,t}=a | \vv{h}_{i,t-1})$ for all $a \in A$. When the transmission probability on some $a' \in A$ is not stated in a decision rule it is because it can be deduced from the stated ones. 

For a player $i \in N$, a \textit{(transmission) protocol} $f_i$ is a sequence of decision rules $f_i = \{f_{i,t}\}_{t \geq 1} = f_{i,1}, f_{i,2}, \dots$. Given a protocol $f_i$ for player $i$, when her decision rules depend on the number of pending players and the personal history of $i$, then we describe them by the player's probability distribution on the action set $A$. In this case, we denote by $p_{m,t}^{i,a}$ the probability of player $i$ choosing action $a$ at time $t$ given her personal history $h_{t-1}$ when $m$ players are pending right before $t$. When the context is clear enough we will drop some of the indices accordingly. 

When we state that the players use an \textit{anonymous} protocol $f$, we will mean that they follow a common protocol $f$($=f_1 = \cdots = f_n$) whose decision rules do not depend on any ID of the player (in our setting players do not have IDs), i.e. the decision rule assigns the same strategy to all players with the same personal history. In particular, for any two players $i \neq j$ and any $t \geq 0$, if $\vv{h}_{i,t-1} = \vv{h}_{j,t-1}$, it holds that $f_{i,t}(\vv{h}_{i,t-1}) = f_{j,t}(\vv{h}_{j,t-1})$. In this case, we drop the subscript $i$ in the notation and write $f$ instead of $f_i$.


A protocol $f_i$ for player $i$ is a \textit{deadline protocol with deadline} $t_0$ if and only if there exists a finite $t_0 \geq 1$ such that a particular channel $a_i \in K$ is assigned (deterministically or stochastically) to player $i$ at some time $t \leq t_0$ and Pr$(X_{i,t}=a_i | \vv{h}_{i,t-1}) = 1$ for every time slot $t\geq t_0$ and any history $\vv{h}_{i,t-1}$.

\paragraph{Efficiency.} Assume that all $n$ players follow an anonymous protocol $f$. We will call $f$ \textit{efficient} if and only if all players will have successfully transmitted by time $\Theta(n/k)$ with high probability (i.e. with probability tending to 1, as $n \to \infty$).

\paragraph{Individual utility.} By \textit{protocol profile} $\vv{f} = (f_1 , f_2 , \dots , f_n)$ we will call the n-tuple of the players' protocols. For a given transmission sequence $\vv{X}_1, \vv{X}_2, \dots$, which is consistent with $\vv{f}$, define the \textit{latency} of agent $i$ as $T_i \triangleq \inf\{t : X_{i,t} = a, X_{j,t} \neq a, \text{ for some } a \in K, \forall j \neq i \}$. That is, $T_i$ is the time at which $i$ successfully transmits. 
Also, define the \textit{finishing time} of $\vv{f}$ as $T \triangleq \sup_i \{ T_i \}$, i.e., the least time at which all players have successfully transmitted.
Given a transmission history $\vv{h}_t$, the $n$-tuple of protocols $\vv{f}$ induces a probability distribution over sequences of further transmissions. In that case, we write $C_i^{\vv{f}}(\vv{h}_t) \triangleq \mathbb{E}[T_i | \vv{h}_t, \vv{f}] = \mathbb{E}[T_i | \vv{h}_{i,t}, \vv{f}]$ for the expected latency of a pending agent $i$ given that her current history is $\vv{h}_{i,t}$ and from $t+1$ on she follows $f_i$. For anonymous protocols, i.e. when $f_1 = f_2 = \cdots = f_n = f$, we will simply write $C_i^f(\vv{h}_t)$ instead. Abusing notation slightly, we will also write $C_i^{\vv{f}}(\vv{h}_0)$ for the \textit{unconditional} expected latency of player $i$ induced by $\vv{f}$. We also define the expected future latency $F_{i}^{\vv{f}}(\vv{h}_t) \triangleq C_i^{\vv{f}}(\vv{h}_t) - t$ and again, whenever clear from the context, we omit redundant indices or vectors from the notation.

\paragraph{Equilibria.} The objective of every player is to minimize her expected latency. 
We call a protocol $g_{i}$ a \textit{best response} of player $i$ to the \textit{partial protocol profile} $\vv{f}_{-i}$ if for any transmission history $\vv{h}_t$, player $i$ cannot decrease her expected latency by unilaterally deviating from $g_{i}$ after $t$. That is, for all time slots $t$, and for all protocols $f_{i}'$ for player $i$, we have
\begin{align*}
C_i^{(\vv{f}_{-i}, g_i)}(\vv{h}_t) \leq C_i^{(\vv{f}_{-i}, f'_i)}(\vv{h}_t),
\end{align*}
where $(\vv{f}_{-i}, g_i)$ (respectively, $(\vv{f}_{-i}, f'_i)$) denotes the \textit{protocol profile} where every player $j \neq i$ uses protocol $f_j$ and player $i$ uses protocol $g_i$ (respectively $f'_i$). For an anonymous protocol $f$, we denote by $(f_{-i},g_i)$ the profile where player $j \neq i$ uses protocol $f$ and player $i$ uses protocol $g_i$.

We say that $\vv{f} = (f_1, f_2, \dots f_n)$ is an \textit{equilibrium} if for any transmission history $\vv{h}_t$ the players cannot decrease their expected latency by unilaterally deviating after $t$; that is, for every player $i$, $f_i$ is a best response to $\vv{f}_{-i}$. 

\paragraph{FIN-EQ and IN-EQ protocols.} We call an anonymous protocol \textit{FIN-EQ} if it is an equilibrium protocol and yields finite expected latency to a player. Similarly, we call an anonymous protocol \textit{IN-EQ} if it is an equilibrium protocol, yields infinite expected latency to a player, and is also efficient.

\section{Equilibrium for Acknowledgement-based Protocols}\label{section: ack}

\subsection{Nash equilibrium characterizations}\label{ack-based characterizations}

The following equilibrium characterizations for the class of acknowledgement-based protocols help us check whether the protocols we subsequently guess are equilibrium protocols. The characterizations are for symmetric and asymmetric equilibria, arbitrary number of channels $k \geq 1$ and number of players $n \geq 2$.

In an acknowledgement-based protocol, the actions of player $i$ at time $t$ depend only (a) on her personal history $\vv{h}_{i, t-1}$ and (b) on whether she is pending or not at $t$. 
Let $\vv{f}=(f_1, f_2, \dots, f_n)$ be a tuple of acknowledgement-based protocols (not necessarily anonymous) for the $n$ players.
For a (finite) positive integer $\tau^*$, and a given history $h_{i,\tau^*}=(a_{i,1}, a_{i,2}, \dots, a_{i,\tau^*})$, define for player $i$ the protocol
\begin{align}\label{prot: g}
g_i = g_{i}(h_{i,\tau^*}) \triangleq 
\begin{cases}  
& \left( \text{Pr}\{X_{i,t} = a_{i,t} \} = 1, \text{ } \text{Pr}\{X_{i,t} \neq a_{i,t} \} = 0 \right) \quad \text{, for } 1 \leq t \leq \tau^* \\
& f_{i,t}, \quad \text{for } t > \tau^* .  
\end{cases}
\end{align}

A personal history $\vv{h}_{i,\tau^*}$ is \textit{consistent with} the protocol profile $\vv{f}$ if and only if there is a non-zero probability that $\vv{h}_{i,\tau^*}$ will occur for player $i$ under $\vv{f}$.
Protocol $g_{i}(h_{i,\tau^*})$ is \textit{consistent} with $\vv{f}$ if and only if $h_{i,\tau^*}$ is consistent with $\vv{f}$, and when clear from the context we write $g_i$ instead. We denote the set of all $g_i$'s, that is, all $g_{i}(h_{i,t})$'s for all $t \geq 1$, which are consistent with $\vv{f}$, by $\mathcal{G}_{i}^{\vv{f}}$. If $f_i=f$ $\forall i$ (i.e. $f$ is anonymous), then instead of $g_i$ and $\mathcal{G}_{i}^{\vv{f}}$ we write $g$ and $\mathcal{G}^{f}$ respectively. 

\begin{lemma}[Equilibrium characterization 1]\label{eq char 1}
	Consider a profile ~\\ $\vv{f} = (f_1, f_2, \dots f_n)$ of acknowledgement-based protocols and a protocol ~\\ $g_i = g_{i}(h_{i,\tau^*})$ for some $\tau^* \geq 1$. The following statements are equivalent:\\
	(i) $\vv{f}$ is an equilibrium. \\
	(ii) For every player $i \in [n]$, \\
	\text{\quad } if $g_i \in \mathcal{G}_{i}^{\vv{f}}$ then ${C_{i}^{(\vv{f}_{-i},g_i)}(\vv{h}_0) = \min\limits_{f_{i}'} C_{i}^{(\vv{f}_{-i},f_{i}')}(\vv{h}_0) = C_{i}^{\vv{f}}(\vv{h}_0)}$.
\end{lemma}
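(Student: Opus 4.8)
The statement is an equivalence, so the plan is to prove both implications, unified by viewing player $i$'s problem of minimizing her latency against the fixed profile $\vv{f}_{-i}$ as a dynamic optimization problem whose information state is her personal history $\vv{h}_{i,t}$. Writing $V_i(\vv{h}_{i,t}) \triangleq \min_{f_i'} C_i^{(\vv{f}_{-i},f_i')}(\vv{h}_{i,t})$ for the optimal latency-to-go, condition (i) is exactly Bellman optimality of $f_i$ at every consistent information state: there $f_i$ places probability only on latency-minimizing actions and $C_i^{\vv{f}}(\vv{h}_{i,t}) = V_i(\vv{h}_{i,t})$. The protocol $g_i(\vv{h}_{i,\tau^*})$ is just ``commit, while pending, to the consistent action prefix $a_{i,1},\dots,a_{i,\tau^*}$ and then revert to $f_i$''.

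The single structural fact I would isolate first, and which I expect to be the crux, is a consequence of the feedback being acknowledgement-based: each other player $j$ reacts only to her own acknowledgements, and these depend on player $i$ solely through whether $i$ collided with $j$ on a channel. Hence, conditioned on player $i$'s realized channel-history being exactly $\vv{h}_{i,\tau^*}$, the joint state of the remaining players is distributed identically whether those channels were committed in advance (under $g_i$) or sampled online (under $f_i$): player $i$'s own choice rule enters only through the realized channels, which the conditioning fixes. I would prove this by induction on $t$ and use it to decompose $C_i^{(\vv{f}_{-i},g_i)}(\vv{h}_0)$ into the identical ``early-success'' branches (in which a forced attempt happens not to collide and $i$ exits before $\tau^*$) plus the probability of reaching $\vv{h}_{i,\tau^*}$ under the committed prefix --- positive by consistency --- times the genuine continuation value $C_i^{\vv{f}}(\vv{h}_{i,\tau^*})$.

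For (i) $\Rightarrow$ (ii) I would invoke the optimality principle. Consistency of $\vv{h}_{i,\tau^*}$ forces each committed action $a_{i,s}$ to lie in the support of $f_{i,s}$ along the path, and equilibrium makes every support action latency-minimizing; thus playing value-minimizing actions for $\tau^*$ steps and then the optimal continuation $f_i$ stays optimal. A short induction on $\tau^*$ that combines the one-step Bellman recursion with the decomposition above yields $C_i^{(\vv{f}_{-i},g_i)}(\vv{h}_0) = V_i(\vv{h}_0) = C_i^{\vv{f}}(\vv{h}_0)$, which is precisely the chain of equalities in (ii).

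For the converse I would argue by contraposition. If $f_i$ is not a best response at some consistent history $\vv{h}_{i,t}$, pick $f_i'$ with $C_i^{(\vv{f}_{-i},f_i')}(\vv{h}_{i,t}) < C_i^{\vv{f}}(\vv{h}_{i,t})$, and compare $g_i = g_i(\vv{h}_{i,t}) \in \mathcal{G}_i^{\vv{f}}$ (commit to the prefix, then $f_i$) with the protocol $\tilde{g}_i$ that commits to the same prefix but then plays $f_i'$. The two coincide until $\vv{h}_{i,t}$ is reached, so by the decomposition their latencies differ only in the continuation, weighted by the positive probability of reaching $\vv{h}_{i,t}$; hence $C_i^{(\vv{f}_{-i},\tilde{g}_i)}(\vv{h}_0) < C_i^{(\vv{f}_{-i},g_i)}(\vv{h}_0)$. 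Since $\tilde{g}_i$ is a legitimate deviation its latency is at least $\min_{f_i'} C_i^{(\vv{f}_{-i},f_i')}(\vv{h}_0)$, contradicting the equality $C_i^{(\vv{f}_{-i},g_i)}(\vv{h}_0) = \min_{f_i'} C_i^{(\vv{f}_{-i},f_i')}(\vv{h}_0)$ granted by (ii). What remains is only bookkeeping for histories that are inconsistent under $\vv{f}$, at which the best-response requirement imposes no extra constraint beyond the global minimum already controlled above.
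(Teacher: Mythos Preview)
Your proposal is correct, and it rests on the same structural observation the paper uses: conditioning on player $i$'s realized personal history $\vv{h}_{i,\tau^*}$, the environment created by $\vv{f}_{-i}$ is the same whether $i$ committed to that prefix in advance or sampled it online. However, the paper's argument around this observation is considerably shorter and more elementary than yours. Rather than setting up value functions, Bellman optimality, and an induction on $\tau^*$, the paper writes the single identity
\[
C_{i}^{\vv{f}}(\vv{h}_0) \;=\; \sum_{\vv{h}_{i,\tau^*}} C_{i}^{(\vv{f}_{-i},g_{i}(h_{i,\tau^*}))}(\vv{h}_0)\;\Pr\{\vv{h}_{i,\tau^*}\text{ happens for }i\}
\]
(i.e.\ the unconditional latency under $\vv{f}$ is a convex combination of the committed-prefix latencies) and then uses a pure averaging argument: for (i)$\Rightarrow$(ii), if some $g_i$ gave latency different from $C_i^{\vv{f}}(\vv{h}_0)$ then by averaging another consistent $g_i'$ would give strictly smaller latency, contradicting equilibrium; for (ii)$\Rightarrow$(i), substituting the hypothesis into the identity immediately gives $C_i^{\vv{f}}(\vv{h}_0)=\min_{f_i'}C_i^{(\vv{f}_{-i},f_i')}(\vv{h}_0)$. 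Your dynamic-programming route buys a cleaner conceptual picture (and in fact handles the best-response requirement at every consistent intermediate history, not just $\vv{h}_0$, more explicitly than the paper does), but the paper's averaging proof is a few lines and needs no induction or value-function machinery.
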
 

\begin{proof}
	
	To show that $\vv{f}$ being an equilibrium is a sufficient condition, we use the same argument as in Lemma 4 of \cite{CGNRS16}. In particular, for a player $i$, due to the Tower Property we have,
	\begin{align}\label{expected 1}
	C_{i}^{\vv{f}}(\vv{h}_0) &= \mathbb{E}[T_i | \vv{h}_{i,0},\vv{f}] \nonumber \\
	&= \sum_{\vv{h}_{i,\tau^*}} \mathbb{E}[T_i | \vv{h}_{i,0},(\vv{f}_{-i},g_{i}(h_{i,\tau^*}))] \text{Pr}\{\vv{h}_{i,\tau^*} \text{ happens for } i \} .
	\end{align}
	
	For short, we will denote $g_{i}(h_{i,\tau^*})$ by $g_i$, thus we denote $\mathbb{E}[T_i | \vv{h}_{i,0},(\vv{f}_{-i},g_{i}(h_{i,\tau^*}))]$ by $C_{i}^{(\vv{f}_{-i},g_{i})}(\vv{h}_0)$.
	%
	%
	Then, suppose that $\vv{f}$ is an equilibrium and assume for the sake of contradiction that there is a transmission history $\vv{h}_{i,\tau*}$ for player $i$ such that $C_{i}^{(\vv{f}_{-i},g_{i})}(\vv{h}_0) \neq C_{i}^{\vv{f}}(\vv{h}_0)$. Obviously, if $C_{i}^{(\vv{f}_{-i},g_{i})}(\vv{h}_0) < C_{i}^{\vv{f}}(\vv{h}_0)$ this would mean that protocol $g_i(\tau^*)$ is better than $f_i$, thus $\vv{f}$ is not an equilibrium. If, on the other hand, $C_{i}^{(\vv{f}_{-i},g_{i})}(\vv{h}_0) > C_{i}^{\vv{f}}(\vv{h}_0)$, then from (\ref{expected 1}) there must exist another transmission history $\vv{h}_{i,\tau^*}'$ such that $C_{i}^{(\vv{f}_{-i},g_{i}(\vv{h}_{i,\tau^*}'))}(\vv{h}_0) < C_{i}^{\vv{f}}(\vv{h}_0)$. Therefore, we conclude that $C_{i}^{\vv{f}}(\vv{h}_0) = C_{i}^{(\vv{f}_{-i},g_{i})}(\vv{h}_0)$ which also equals $\min\limits_{f_{i}'} C_{i}^{(\vv{f}_{-i},f_{i}')}(\vv{h}_0)$ by definition of the equilibrium, for every transmission history $\vv{h}_{i,\tau^*}$ that is consistent with $\vv{f}$.
	
	To show that $\vv{f}$ being an equilibrium is also a necessary condition, assume that $g_i \in \mathcal{G}_{i}^{\vv{f}}$ implies $C_{i}^{(\vv{f}_{-i},g_i)}(\vv{h}_0) = \min\limits_{f_{i}'} C_{i}^{(\vv{f}_{-i},f_{i}')}(\vv{h}_0)$. Then, equality (\ref{expected 1}) becomes
	\begin{align*}
	C_{i}^{\vv{f}}(\vv{h}_0) & = \sum_{\vv{h}_{i,\tau^*}} C_{i}^{(\vv{f}_{-i},g_{i}(h_{i,\tau^*}))}(\vv{h}_0) \text{Pr}\{\vv{h}_{i,\tau^*} \text{ happens for } i \} \\
	& = \sum_{\vv{h}_{i,\tau^*}} \min\limits_{f_{i}'} C_{i}^{(\vv{f}_{-i},f_{i}')}(\vv{h}_0) \text{Pr}\{\vv{h}_{i,\tau^*} \text{ happens for } i \} \\
	& = \min\limits_{f_{i}'} C_{i}^{(\vv{f}_{-i},f_{i}')}(\vv{h}_0)
	\end{align*}
	and thus $\vv{f}$ is by definition an equilibrium.
\end{proof}

\begin{corollary}[Best response]\label{best response}
	Consider a profile $\vv{f} = (f_1, f_2, \dots f_n)$ of acknowledgement-based protocols. For a fixed protocol $f_{i}'$ of player $i \in [n]$ and some $h_{i,\tau^*}=(a_{i,1}, a_{i,2}, \dots, a_{i,\tau^*})$ consistent with $(\vv{f}_{-i},f_{i}')$, define the following protocol. 
	\begin{align}\label{prot: r}
	r_i = r_{i}(h_{i,\tau^*}) \triangleq 
	\begin{cases}  
	& \left( \text{Pr}\{X_{i,t} = a_{i,t} \} = 1, \quad \text{Pr}\{X_{i,t} \neq a_{i,t} \} = 0 \right) \quad \text{, for } 1 \leq t \leq \tau^* \\
	& f_{i,t}', \quad \text{, for } t > \tau^*. 
	\end{cases}
	\end{align}
	If for player $i$ there exists a finite $\tau^* \geq 1$ such that $C_{i}^{(\vv{f}_{-i},r_{i}(h_{i,\tau^*}))}(\vv{h}_0) \geq C_{i}^{(\vv{f}_{-i},f_{i})}(\vv{h}_0)$ for every $h_{i,\tau^*}$,
	then $C_{i}^{(\vv{f}_{-i},f_{i}')}(\vv{h}_0) \geq C_{i}^{(\vv{f}_{-i},f_{i})}(\vv{h}_0)$.
\end{corollary}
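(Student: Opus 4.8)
The plan is to reuse the law of total expectation (Tower Property) decomposition that drives Lemma~\ref{eq char 1}, but to apply it to the \emph{deviating} profile $(\vv{f}_{-i}, f_i')$ and to the prefix-commitment protocols $r_i(h_{i,\tau^*})$ in place of the $g_i$'s. The protocol $r_i(h_{i,\tau^*})$ defined in~(\ref{prot: r}) is built precisely so that it agrees with $f_i'$ from time $\tau^*+1$ onward and only pins down the first $\tau^*$ actions; this is exactly the structure needed to represent a conditioning on player $i$'s personal history up to $\tau^*$, just as $g_i$ did for $f_i$ in the lemma.

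First I would write, in direct analogy with~(\ref{expected 1}),
\begin{align*}
C_i^{(\vv{f}_{-i}, f_i')}(\vv{h}_0) = \sum_{h_{i,\tau^*}} C_i^{(\vv{f}_{-i}, r_i(h_{i,\tau^*}))}(\vv{h}_0)\, \text{Pr}\{h_{i,\tau^*} \text{ happens for } i\},
\end{align*}
where the sum ranges over all personal histories of length $\tau^*$ and the probabilities are taken under $(\vv{f}_{-i}, f_i')$. The key identity to establish is that conditioning the process $(\vv{f}_{-i}, f_i')$ on player $i$'s personal history being $h_{i,\tau^*}$ yields the same latency as committing to that prefix and then continuing with $f_i'$, i.e. that the conditional expectation equals $C_i^{(\vv{f}_{-i}, r_i(h_{i,\tau^*}))}(\vv{h}_0)$. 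Granting this, the rest is immediate: by hypothesis each factor $C_i^{(\vv{f}_{-i}, r_i(h_{i,\tau^*}))}(\vv{h}_0)$ is at least the constant $C_i^{(\vv{f}_{-i}, f_i)}(\vv{h}_0)$, so, since the weights $\text{Pr}\{h_{i,\tau^*} \text{ happens for } i\}$ are nonnegative and sum to $1$, I can pull the constant out of the sum and conclude $C_i^{(\vv{f}_{-i}, f_i')}(\vv{h}_0) \ge C_i^{(\vv{f}_{-i}, f_i)}(\vv{h}_0)$.

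The main obstacle is justifying the conditional identity, and this is where the acknowledgement-based assumption is essential. Because in such protocols player $i$'s action at each step depends only on her own personal history (and on whether she is still pending), fixing $h_{i,\tau^*}$ completely determines $i$'s behaviour over $[1,\tau^*]$, while the other players, following the fixed $\vv{f}_{-i}$, react only through the realized collision pattern, which is itself determined by $i$'s actions; hence the conditioned process and the $r_i$-process induce the same joint distribution of transmissions from $\tau^*+1$ onward, and in particular the same distribution of $T_i$. I would also note that histories $h_{i,\tau^*}$ inconsistent with $(\vv{f}_{-i}, f_i')$ carry zero weight and may be dropped, which is why it suffices to invoke the hypothesis only on consistent histories (on which $r_i$ is well defined), and that the same argument accommodates histories in which $i$ happens to transmit successfully before $\tau^*$, since those simply contribute the corresponding early value of $T_i$.
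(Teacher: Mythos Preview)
Your proof is correct and follows essentially the same approach as the paper: both invoke the Tower Property decomposition (equation~(\ref{expected 1})) applied to the profile $(\vv{f}_{-i},f_i')$ at horizon $\tau^*$, bound each summand by the hypothesis, and use that the weights sum to $1$. Your write-up is in fact more careful than the paper's, which simply cites~(\ref{expected 1}) without rejustifying the conditional identity for the $r_i$'s.
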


\begin{proof}
	By definition of the expected latency (equation (\ref{expected 1})) for a fixed $\tau^*$ we have:
	\begin{align*}
	C_{i}^{(\vv{f}_{-i},f_{i}')}(\vv{h}_0) &= \sum_{\vv{h}_{i,\tau^*}} C_{i}^{(\vv{f}_{-i},r_{i}(h_{i,\tau^*}))}(\vv{h}_0) \text{Pr}\{\vv{h}_{i,\tau^*} \text{ happens for } i \} \\ 
	&\geq \sum_{\vv{h}_{i,\tau^*}} C_{i}^{(\vv{f}_{-i},f_{i})}(\vv{h}_0) \text{Pr}\{\vv{h}_{i,\tau^*} \text{ happens for } i \} \\
	&=C_{i}^{(\vv{f}_{-i},f_{i})}(\vv{h}_0).
	\end{align*}
\end{proof}


\begin{lemma}[Equilibrium characterization 2]\label{eq char 2}
	Consider a profile ~\\ $\vv{f} = (f_1, f_2, \dots f_n)$ of acknowledgement-based protocols. The following statements are equivalent: \\
	(i) $\vv{f}$ is an equilibrium. \\
	(ii) For every player $i \in [n], \\
	\text{\quad } \begin{cases}  
	\text{(a)} \quad	C_{i}^{(\vv{f}_{-i},g_i)}(\vv{h}_0) = C_{i}^{(\vv{f}_{-i},r_i)}(\vv{h}_0) = C_{i}^{\vv{f}}(\vv{h}_0), \quad \forall g_i, r_i \in \mathcal{G}_{i}^{\vv{f}}, \text{ and} \\
	\text{(b)} \quad  C_{i}^{(\vv{f}_{-i},g_i)}(\vv{h}_0) \leq C_{i}^{(\vv{f}_{-i},r_i)}(\vv{h}_0), \quad \forall g_i \in \mathcal{G}_{i}^{\vv{f}}, r_i \notin \mathcal{G}_{i}^{\vv{f}}. 
	\end{cases}	
	$		
\end{lemma}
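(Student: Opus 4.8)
The plan is to derive both implications of Lemma~\ref{eq char 2} from the first characterization (Lemma~\ref{eq char 1}) together with the best-response Corollary~\ref{best response}, using the Tower-property decomposition~(\ref{expected 1}) as the main computational engine.

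For the direction (i)$\Rightarrow$(ii) I would argue directly. Assuming $\vv f$ is an equilibrium, Lemma~\ref{eq char 1} immediately gives $C_i^{(\vv f_{-i},g_i)}(\vv h_0)=C_i^{\vv f}(\vv h_0)$ for every $g_i\in\mathcal G_i^{\vv f}$; applying this to both $g_i$ and $r_i$ when both lie in $\mathcal G_i^{\vv f}$ yields part~(a). For part~(b), note that any $r_i\notin\mathcal G_i^{\vv f}$ is still a legitimate deviation, so since $f_i$ is a best response we have $C_i^{(\vv f_{-i},r_i)}(\vv h_0)\ge C_i^{(\vv f_{-i},f_i)}(\vv h_0)=C_i^{\vv f}(\vv h_0)$, and the right-hand side equals $C_i^{(\vv f_{-i},g_i)}(\vv h_0)$ by~(a); this is exactly (b). This direction is routine.

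The substantive direction is (ii)$\Rightarrow$(i). Here I would again invoke Lemma~\ref{eq char 1}: it suffices to show that for every $i$ and every $g_i\in\mathcal G_i^{\vv f}$ one has $C_i^{(\vv f_{-i},g_i)}(\vv h_0)=\min_{f_i'}C_i^{(\vv f_{-i},f_i')}(\vv h_0)=C_i^{\vv f}(\vv h_0)$. The first equality is precisely~(a), so the whole task reduces to showing that no deviation $f_i'$ can beat $C_i^{\vv f}(\vv h_0)$. Fix $f_i'$ and a horizon $\tau^*$ and decompose as in~(\ref{expected 1}): conditioning on the realized length-$\tau^*$ personal history $h_{i,\tau^*}$, the cost of the truncation that plays $f_i'$ up to $\tau^*$ and then $f_i$ is a weighted average of the costs of the deterministic-prefix-then-$f_i$ protocols with prefix $h_{i,\tau^*}$. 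Each such protocol lies in $\mathcal G_i^{\vv f}$ when $h_{i,\tau^*}$ is consistent with $\vv f$, in which case (a) bounds its cost by $C_i^{\vv f}(\vv h_0)$, and otherwise it is one of the $r_i\notin\mathcal G_i^{\vv f}$, in which case (b) bounds its cost from below by the same common value $C_i^{\vv f}(\vv h_0)$. Averaging over $h_{i,\tau^*}$ therefore yields $C_i^{(\vv f_{-i},f_i')}(\vv h_0)\ge C_i^{\vv f}(\vv h_0)$ whenever $f_i'$ coincides with $f_i$ after $\tau^*$, i.e. for every \emph{finite} deviation.

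The last step, which I expect to be the genuine obstacle, is passing from finite deviations to an arbitrary $f_i'$. I would close this with Corollary~\ref{best response}: for a finite $\tau^*$ it reduces $C_i^{(\vv f_{-i},f_i')}(\vv h_0)\ge C_i^{(\vv f_{-i},f_i)}(\vv h_0)$ to the family of deterministic-prefix protocols $r_i(h_{i,\tau^*})$, propagating the finite-horizon bound to the full protocol. The delicate point is precisely that the protocols controlled by~(a)/(b) continue with $f_i$ whereas $f_i'$ continues with $f_i'$; comparing them amounts to controlling the tail contribution $\mathbb E[T_i\,\mathbf 1_{T_i>\tau^*}]$, which is immediate for finite deviations but in general requires the best-response Corollary (equivalently, a uniform-integrability/limiting argument on the continuation latency under $f_i$). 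I would therefore present the finite-deviation case cleanly via~(a)/(b) and then quote Corollary~\ref{best response} to settle the general case.
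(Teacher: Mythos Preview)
Your approach is essentially the paper's. For (i)$\Rightarrow$(ii) the arguments coincide: part~(a) is read off Lemma~\ref{eq char 1}, and part~(b) follows by the obvious contradiction (if some $r_i\notin\mathcal G_i^{\vv f}$ beat a $g_i\in\mathcal G_i^{\vv f}$, then $r_i$ would beat $f_i$). For (ii)$\Rightarrow$(i) the paper proceeds exactly by the Tower decomposition~(\ref{expected 1}) you describe, but in a single stroke and by contradiction: assume some $f_i'$ with $C_i^{(\vv f_{-i},f_i')}(\vv h_0)<C_i^{\vv f}(\vv h_0)$, expand both sides at a common $\tau^*$, and note that term-by-term the left sum (prefixes consistent with $(\vv f_{-i},f_i')$) cannot undercut the right sum (prefixes consistent with $\vv f$, all equal to $C_i^{\vv f}(\vv h_0)$ by~(a)) once (a) and (b) are in force.

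The difference is only in explicitness: the paper does \emph{not} separate finite deviations from arbitrary ones, does not route through Lemma~\ref{eq char 1} in the reverse direction, and does not invoke Corollary~\ref{best response} or any limiting/tail argument; it simply writes down the decomposed inequality and declares the contradiction. The issue you flag as ``delicate'' (that the deterministic-prefix protocols on the $f_i'$ side continue with $f_i'$ rather than $f_i$) is precisely the step the paper takes for granted. So your plan matches the paper's proof, with additional scaffolding around a point the paper leaves implicit.
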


\begin{proof}
	Sufficiency of $\vv{f}$ being an equilibrium for condition (ii-a) comes directly from Lemma \ref{eq char 1}; for condition (ii-b), for the sake of contradiction suppose $\vv{f}$ is an equilibrium and that there exist some protocols $g_i \in \mathcal{G}_{i}^{\vv{f}}$ and $ r_i \notin \mathcal{G}_{i}^{\vv{f}}$ such that $C_{i}^{(\vv{f}_{-i},g_i)}(\vv{h}_0) > C_{i}^{(\vv{f}_{-i},r_i)}(\vv{h}_0)$. This means that $r_i$ is a better protocol than $f_i$, thus $(\vv{f}_{-i},f_i)$ is not an equilibrium, which is a contradiction.
	
	To prove necessity of $\vv{f}$ being an equilibrium under conditions (ii-a) and (ii-b), for the sake of contradiction, suppose (ii-a) and (ii-b) hold and $\vv{f}$ is not an equilibrium. Then there must exist some protocol $f_{i}'$ such that $C_{i}^{(\vv{f}_{-i},f_{i}')}(\vv{h}_0) < C_{i}^{\vv{f}}(\vv{h}_0)$. Using (\ref{expected 1}) the latter inequality can be written as
	\begin{align*}
	\sum_{\vv{h}_{i,\tau^*}} C_{i}^{(\vv{f}_{-i},r_{i}(h_{i,\tau^*}))}(\vv{h}_0) \text{Pr}\{\vv{h}_{i,\tau^*} \text{ happens for } i \} < \sum_{\vv{h}_{i,\tau^*}} C_{i}^{(\vv{f}_{-i},g_{i}(h_{i,\tau^*}))}(\vv{h}_0) \text{Pr}\{\vv{h}_{i,\tau^*} \text{ happens for } i \},
	\end{align*}
	where $g_{i}(h_{i,\tau^*})$ is consistent with $\vv{f}$ and $r_{i}(h_{i,\tau^*})$ is consistent with $(\vv{f}_{-i}, f_{i}')$. Given the conditions (ii-a) and (ii-b) the latter inequality is a contradiction.
\end{proof}

%


\subsection{Acknowledgment-based FIN-EQ protocols}\label{ack: FIN-EQ}

Regarding the search for FIN-EQ protocols, there is no straight-forward way for our equilibrium characterizations (previous subsection) to be used in order to \textit{find} an equilibrium protocol. However, they allow us to \textit{check} whether the protocols discussed in this subsection are equilibrium protocols. In this subsection we give FIN-EQ protocols for $k=2$ and $k=3$.

We define the following anonymous, memoryless protocol for $k \geq 2$ channels.

\begin{tcolorbox}[ams gather]
	\textbf{\underline{Protocol $\mathbf{\emph{f}^{\ k}}$}}: \text{For player $i$, every $t \geq 1$ and any history $\vv{h}_{i,t-1}$,} \nonumber \\
	f_{i,t}^k = \left(\text{Pr}\{X_{i,t} = 0 \} = 0, \quad \text{Pr}\{X_{i,t} = a \} = \frac{1}{k}, \quad \forall a \in K \right). \label{protocol(n,2)}
\end{tcolorbox}

\paragraph{\textbf{n players - 2 transmission channels.}}
Here, we first give an example of a method for checking equilibria (Theorem \ref{3 pl - 2 ch}). Then, with a better approach, by employing our characterizations of the previous subsection, we prove that $f^2$ is an equilibrium protocol for $n \in \{2,3,4\}$ players and $k=2$ channels (Theorem \ref{thm: eq (n,2)}). 

%
%
%

%

\begin{lemma}\label{exp latency (n,2)}
	When all $n \geq 2$ players use protocol $f^2$ the expected latency of any player is $2^{n}/n$.
\end{lemma}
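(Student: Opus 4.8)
The plan is to exploit the memoryless, anonymous nature of $f^2$ together with symmetry among players: since every pending player transmits on each of the two channels with probability $1/2$ at every step regardless of her history, the expected remaining latency of a pending (tagged) player depends only on the current number $m$ of pending players. Denote this quantity by $C_m$; by symmetry the expected latency of \emph{any} player equals $C_n$, so it suffices to show $C_n = 2^n/n$. I would establish this by a first-step analysis yielding a recurrence in $m$, followed by induction.

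For the first-step analysis, fix the tagged player and condition on the channel she picks (say channel $1$, by symmetry). Let $j$ be the number of the other $m-1$ players who also pick channel $1$; then $j \sim \mathrm{Bin}(m-1,1/2)$. The tagged player transmits successfully in this round iff $j=0$, which happens with probability $2^{-(m-1)}$. When she remains pending ($j\ge 1$), channel $1$ carries a collision, so the only possible success is on channel $2$, and it occurs iff channel $2$ carries a unique player, i.e. iff $j = m-2$. Hence, conditioned on the tagged player staying pending, the pending count drops from $m$ to $m-1$ with probability $(m-1)2^{-(m-1)}$ (the probability that $j=m-2$, valid for $m\ge 3$) and otherwise stays at $m$; in particular it can decrease by at most one while she is still pending.

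Collecting these cases and charging one time slot to the current round gives, for $m\ge 3$, the recurrence
\begin{align*}
C_m = 1 + (m-1)2^{-(m-1)}\,C_{m-1} + \bigl(1 - m\,2^{-(m-1)}\bigr)\,C_m,
\end{align*}
which, after moving the last term to the left and using that the total pending-with-tagged coefficient on $C_m$ equals $m\,2^{-(m-1)}$, simplifies to $C_m = \frac{2^{m-1}}{m} + \frac{m-1}{m}\,C_{m-1}$. The base case $m=2$ must be treated separately, since there the event ``the count drops'' coincides with ``the tagged player succeeds'': a direct computation gives $C_2 = 1 + \tfrac12 C_2$, i.e. $C_2 = 2 = 2^2/2$. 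Plugging the induction hypothesis $C_{m-1} = 2^{m-1}/(m-1)$ into the recurrence then yields $\frac{2^{m-1}}{m} + \frac{2^{m-1}}{m} = 2^m/m$, completing the induction and giving $C_n = 2^n/n$.

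I expect the main obstacle to be the clean bookkeeping of the transition probabilities—specifically, recognizing that while the tagged player stays pending the pending count can decrease by at most one (because her own channel necessarily collides), and that this decrease happens exactly when the other channel is occupied by a single player. The $m=2$ boundary case, where this decrease event degenerates into the tagged player's own success and the recurrence therefore fails, is the other place where care is needed, and it is precisely the base case that anchors the induction.
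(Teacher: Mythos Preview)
Your proposal is correct and follows essentially the same approach as the paper: a first-step Markov-chain analysis from the viewpoint of a tagged player, with states indexed by the number $m$ of pending players, the same transition probabilities $p_m^{\times}=2^{-(m-1)}$, $p_m^{m-1}=(m-1)2^{-(m-1)}$, $p_m^{m}=1-m\,2^{-(m-1)}$ for $m\ge 3$, and the separate base case $C_2=2$. The only cosmetic difference is that the paper writes ``by solving this system of linear equations we get $h_n^{\times}=2^n/n$'' whereas you explicitly simplify the recurrence to $C_m=\tfrac{2^{m-1}}{m}+\tfrac{m-1}{m}C_{m-1}$ and verify it by induction.
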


\begin{proof}
	The process from the perspective of an arbitrary player $i$ can be modelled as the following Markov chain; the states are named after the number of remaining players including $i$, and state $\langle \times \rangle$ is the state where $i$ finds herself after successful transmission.
	
	We write $p_{x}^{y}$ to denote the transition probability to go from state $\langle x \rangle$ to state $\langle y \rangle$. We have
	\begin{equation}\label{eq (n,2) 1}
	\begin{rcases*}
	p_{m}^{\times} = \left(\frac{1}{2}\right)^{m-1} \\
	p_{m}^{m-1} = (m-1)\left(\frac{1}{2}\right)^{m-1} \\
	p_{m}^{m} = 1 - m \left(\frac{1}{2}\right)^{m-1}
	\end{rcases*}
	\forall 3 \leq m \leq n \text{ , and }
	\end{equation}
	\begin{align}\label{eq (n,2) 2}
	p_{2}^{\times} = \frac{1}{2}, \quad p_{2}^{2} = \frac{1}{2}.
	\end{align}
	
	The expected absorption time from state $\langle n \rangle$ to state $\langle \times \rangle$ is found from the following set of equations:
	\begin{align*}
	& h_{m}^{\times} = 1 + p_{m}^{m} h_{m}^{\times} + p_{m}^{m-1} h_{m-1}^{\times} , \quad \text{for all } 3 \leq m \leq n,  \\
	\text{and} \quad & h_{2}^{\times} = 1 + p_{2}^{2} h_{2}^{\times} ,
	\end{align*}
	where $h_{x}^{y}$ denotes the expected hitting time from state $\langle x \rangle$ to state $\langle y \rangle$. By solving this system of linear equations we get
	\begin{align*}
	h_{n}^{\times} = \frac{2^n}{n}, \quad \text{for } n\geq 2. 
	\end{align*}
\end{proof}

In the next theorem we will give an example of a method for checking whether a given protocol profile is an equilibrium, which however could be inconclusive in some cases. Suppose you we want to check whether an arbitrary protocol profile $\vv{f}$ is an equilibrium. By definition of the equilibrium, we can fix all protocols except player $i$'s, i.e. $\vv{f}_{-i}$ and check if $f_i$ is a best response to them, and repeat this for every player $i$. By fixing $\vv{f}_{-i}$ we create a stochastic environment for player $i$ who can be considered to be free to take sequential decisions through time. These decisions correspond to decision rules of $f_{i}$. Since, due to the feedback limitations, $i$ has no information about the number of pending players, this situation from her point of view is modeled as an infinite state Partially Observable Markov Decision Process (POMDP). $f_i$ is a best response to $\vv{f}_{-i}$ if and only if $f_i$ is an optimal policy of the POMDP, that is, a set of decisions through time that minimize her expected latency. 

However for this kind of POMDPs there are no known techniques to find an optimal policy. In order to circumvent this problem, we can assume that player $i$ is an advantageous player that always knows how many players are pending. This turns the infinite state POMDP into a finite state Markov Decision Process (MDP), whose optimal policy we can find through known techniques (e.g. \cite{N98}). One can see that the optimal policy in the MDP of the advantageous player $i$ yields at most the expected latency of the optimal policy in the POMDP of the initial player $i$. Thus, if the best policy in the MDP yields the same expected latency as what $\vv{f}$ gives to $i$, then we know that $f_i$ is a best response; however, if the best policy of the MDP yields smaller expected latency, then we get no information about whether $f_i$ is a best response in the POMDP or not. The proof of the next theorem demonstrates the method and shows that protocol $f^2$ of (\ref{protocol(n,2)}) is an equilibrium protocol for 3 players.

\begin{theorem}\label{3 pl - 2 ch}
	For 3 players and 2 channels, $f^2$ is an equilibrium protocol with expected latency $8/3$.
\end{theorem}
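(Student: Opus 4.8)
The plan is to follow the MDP method described just above the theorem: fix the two other players to protocol $f^2$ and argue that the focal player $i$ cannot beat the latency $8/3$ that $f^2$ already gives her (Lemma \ref{exp latency (n,2)} with $n=3$). Concretely, I would first pass from $i$'s POMDP to the MDP of an \emph{advantageous} player who, at each step, knows the number $m$ of pending players including herself. Since this player has strictly more information, the optimal expected latency $V_3$ of her MDP lower-bounds the optimum of $i$'s POMDP, which in turn is at most $8/3$ because following $f^2$ is one admissible POMDP policy. Hence it suffices to prove $V_3 = 8/3$: the chain $8/3 = V_3 \le (\text{POMDP optimum}) \le 8/3$ then collapses, showing $f^2$ is a best response, and by anonymity/symmetry of the profile this makes $f^2$ an equilibrium.

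Next I would set up the MDP. Its states are $\langle 3\rangle, \langle 2\rangle, \langle 1\rangle$ and the absorbing state $\langle\times\rangle$. The key simplification is channel symmetry: because the two opponents choose each channel with probability $1/2$, the only feature of $i$'s action that matters is her total transmission probability $p\in[0,1]$ (the channel she picks is irrelevant, and her success probability in $\langle m\rangle$ equals $p\,(1/2)^{m-1}$). I would then compute, as functions of $p$, the one-step transitions out of each state by enumerating the joint channel choices. The cases to get right are: in $\langle 2\rangle$, either $i$ transmits (then either $i$ succeeds or a collision keeps the state at $\langle 2\rangle$) or $i$ abstains (then the lone opponent transmits successfully and we move to $\langle 1\rangle$); in $\langle 3\rangle$, when $i$ abstains the state drops to $\langle 1\rangle$ exactly when the two opponents split across the channels, and otherwise stays at $\langle 3\rangle$. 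A useful sanity check is that with two channels and three transmitters, two opponents can never both succeed while $i$ also transmits, so there is no $\langle 3\rangle\to\langle 1\rangle$ move contributed by the branch in which $i$ transmits.

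With the transitions in hand I would solve the Bellman optimality equations bottom-up. State $\langle 1\rangle$ gives $V_1=1$ (transmit and succeed with certainty). For $\langle 2\rangle$ the equation $V_2 = 1 + (p/2)V_2 + (1-p)V_1$ yields $V_2 = 2$ \emph{for every} $p$, so the player is indifferent there. Substituting $V_1=1,\,V_2=2$ into the equation for $\langle 3\rangle$ gives $V_3(p) = 2(3+p)/(2+p)$, which is strictly decreasing in $p$ and hence minimized at $p=1$, where $V_3 = 8/3$. This identifies $p=1$ (i.e.\ protocol $f^2$ itself) as an optimal stationary policy and pins the MDP optimum at $8/3$, closing the sandwich.

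I expect the main obstacle to be the careful bookkeeping of the transition probabilities — in particular correctly separating the ``$i$ transmits'' and ``$i$ abstains'' branches and verifying which target states are reachable — since a single miscounted case would corrupt $V_3$ and break the argument. The indifference $V_2=2$ is a welcome feature (it shows there is slack in $\langle 2\rangle$), but one must confirm it holds identically in $p$ rather than merely at $p=1$, so that no deviation exploiting $\langle 2\rangle$ can help. Everything else is routine linear algebra once the MDP is laid out.
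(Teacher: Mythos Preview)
Your proposal is correct and follows essentially the same route as the paper: reduce the deviator's POMDP to the MDP of an advantageous player who sees $m$, solve the Bellman equations over stationary policies, and verify the optimum from $\langle 3\rangle$ equals $8/3$. The only cosmetic difference is that the paper carries separate parameters $(q_p,z_p)$ per state and observes the channel-choice $z_p$'s drop out, whereas you invoke channel symmetry upfront and do the DP bottom-up; the computations and conclusions match.
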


\begin{proof}
	Consider the Markov Decision Process (MDP) $(T,S_t, A_{s,t}, p_{t}(j | s,a), r_{t}(s,a))$, where $S_t$ is the state space for time $t$; $A_{s,t}$ is the set of possible actions that can be taken after observing state $s$ at time $t$; $p_{t}(j | s,a)$ defines the transition probability to state $j \in S_{t+1}$ at time $t+1$, and only depends on the state $s$ and chosen action $a$ at time $t$; $r_{t}(s,a)$ is the cost function that determines the immediate cost for the agent's choice of action $a$ while in state $s$. When the state $s$ cannot be observed with certainty at time $t$, the agent only knows a probability distribution, called \textit{belief state}, over $S_t$. The process then is called Partially Observable Markov Decision Process (POMDP). An \textit{optimal policy} $\pi: S \to A$ is a function that rules, for each state or belief state, which action to perform, with an objective to minimize the expected cost.
	
	For the proof of the above theorem we will use the following property of POMDPs. This property comes directly from the fact that an agent optimizing over all policies that every time consider her exact state gets a better policy than an agent that knows a probability distribution on the state space (belief states).  
	\begin{proposition}\label{POMDP-MDP}
		An optimal policy $\pi_1$ of an agent in a POMDP yields as expected cost at least the expected cost of the optimal policy $\pi_2$ of the corresponding MDP, in which at any time $t$ the agent observes her exact state.
	\end{proposition}
	%
	
	To prove Theorem \ref{3 pl - 2 ch} we think as follows. Let us fix protocol $f^2$ as defined in (\ref{protocol(n,2)}) for two players, and let the remaining player $i$ have an arbitrary protocol $g_{i}$. Then let us find the optimal policy for $i$. If and only if the optimal policy yields expected cost strictly lower than what protocol $f^2$ would yield for player $i$ (due to Lemma \ref{exp latency (n,2)}, that is $8/3$), then $f^2$ is not an equilibrium protocol. 
	The game stated at Theorem \ref{3 pl - 2 ch}, from player $i$'s perspective, is modelled by a POMDP where each state is determined by the number of pending players, with an additional absorbing state - where $i$ goes after successfully transmitting - and $i$'s transmission history for every $t \geq 1$. Player $i$'s belief state at any time $t$ is determined by her belief state at time $t-1$, the action she chose at time $t-1$, and her observation (e.g. her transmission history up to $t-1$). This is a POMDP with infinite states, for which, to the best of our knowledge, currently there are no methods in the literature for finding an optimal policy.
	
	However, we will find the best policy and the expected cost of the corresponding MDP, where player $i$ knows in what state she finds herself after an action and observation. This expected cost is a lower bound on the expected cost of the optimal policy of the original POMDP (see Proposition \ref{POMDP-MDP}). In the MDP we create, player $i$ knows at any time $t$ how many players are pending and her transmission history up to time $t$.

	
	Let $p \in \{1,2,3\}$ indicate the number of pending players. Observe that the time steps at which the process has a given $p$ are consecutive; without loss of generality assume that for some $p$, the process is in the discrete time interval $[\tau_{p} , \tau_{p-1} -1]$, where we set $\tau_3 = 1$. Consider now the set $S_p$ of all states $s_{p}(\vv{h}_{i,t})$ of the MDP, where the number of pending players $p \in \{1,2,3\}$ is fixed, whereas the transmission history $\vv{h}_{i,t}$ for $\tau_{p} \leq t < \tau_{p-1}$ can vary. Because of the protocol $f$ being memoryless, the same action (probability distribution over action space $A$) of $i$ chosen at any state in $S_p$ produces the same transition probabilities. Therefore, choosing the optimal policy makes the set $S_p$ of states collapse to a single state $s_p$, where $p \in \{1,2,3\}$. The resulting MDP is a finite MDP with states $s_1, s_2, s_3$ and $s_{\times}$, where the latter is an absorption state to which player $i$ goes after a successful transmission. Denote the expected cost of the MDP's optimal policy given that the initial state is $s_p$ by $c(s_p)$. In our problem the immediate cost for any combination of state and action is 1, since we count the number of rounds in which $i$ is pending. Using Lemma 5.4.2 and Theorem 5.4.3 of \cite{N98} we can find $c(s_3)$ by solving the following system of linear equations
	\begin{align}\label{best_policy1}
	c(s_p) = 1 + \sum_{s' \in \{s_1,s_2,s_3\}} \text{Pr}(s_p \text{ to } s' | \text{ policy } \pi) c(s') .
	\end{align}
	Then, by minimizing each $c(s_p)$ over policies $\pi$ we get the optimal expected costs $C(s_p)$, $p \in\ \{1,2,3\}$. As a byproduct of the minimization we find the best policy $\pi^*$.
	
	In our problem, a policy $\pi$ is a tuple ($q_1, z_1, q_2, z_2, q_3, z_3$), where $q_p$, $p \in \{1,2,3\}$ determines the probability that player $i$ will attempt a transmission, and $z_p$, $p \in \{1,2,3\}$ determines the probability that she will attempt the transmission on channel $a=1$. To give a small example, for a given state $s_p$, $(\text{Pr}(X_t = 0), \text{Pr}(X_t = 1), \text{Pr}(X_t = 2))=(1-q_p, q_p z_p, q_p (1-z_p))$. By solving system (\ref{best_policy1}), we get that 
	\begin{align*}
	c(s_1) = \frac{1}{q_1}, \quad c(s_2)=\frac{2+2q_1-2q_2}{2q_1-q_1 q_2}, \quad c(s_3)=2+\frac{4-2q_2 - 2q_2 q_3 + 2q_1 q_2 q_3}{4q_1 - 2q_1 q_2 + 2q_1 q_3 - q_1 q_2 q_3} 
	\end{align*}
	which implies that a policy does not depend on any of the $z_p$'s. Now, by minimizing the above expected costs we get $C(s_1)=1, C(s_2)=2$ and $C(s_3)=8/3$ for $q_1=1$ and $q_3=1$. Note that the optimal policy allows $z_1,z_2,z_3$ and $q_2$ to be arbitrary probabilities. $q_2$ being even 0 is not a contradiction since in our MDP the player is always aware of the pending players (state); in the case where $q_2=0$, when the player is in state $s_2$, she waits one round until the other player transmits successfully and then realizes that she is alone pending in $s_1$; in the next round she transmits with probability 1.  
	
	We have shown that a best policy of an advantageous player gives her the same expected latency as protocol $f^2$ defined in (\ref{protocol(n,2)}) (the expected latency of $f^2$ is given by Lemma \ref{exp latency (n,2)}). This, combined with Proposition \ref{POMDP-MDP} completes the proof of Theorem \ref{3 pl - 2 ch}.
\end{proof}

We subsequently exploit the lack of memory and the anonymity of our protocol $f^2$ defined in equation (\ref{protocol(n,2)}) and show more general results on equilibria (Theorem \ref{thm: eq (n,2)}), using the characterizations from Subsection \ref{ack-based characterizations}.

\begin{theorem}\label{impossibility 1}
	In a contention game with $k=2$ channels, consider an anonymous, memoryless protocol of player $i$ with the property: Pr$\{ X_{i,t} = 0 \} = 0$, for every $t \geq 1$. For more than 4 players any such protocol is not an equilibrium protocol.
\end{theorem}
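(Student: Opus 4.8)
The plan is to first collapse the space of candidate protocols to a one-parameter family. Since the protocol is anonymous, memoryless, and never silent ($\Pr\{X_{i,t}=0\}=0$), at every step each pending player transmits on channel $1$ with a common probability $z$ and on channel $2$ with probability $1-z$, and memorylessness makes $z$ independent of the history; so a single $z\in[0,1]$ describes the whole protocol. Relabelling the channels lets me assume $z\le\tfrac12$. The degenerate case $z=0$ is immediate: all pending players collide forever on channel $2$, giving every player infinite latency, whereas the one-shot deviation ``transmit on channel $1$ in round $1$'' succeeds with latency $1$; hence $z=0$ is not an equilibrium. It then remains to treat $z\in(0,\tfrac12]$, and I would split this into the asymmetric case $z<\tfrac12$ and the symmetric case $z=\tfrac12$, because the profitable deviation is genuinely different in the two regimes.

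For $z<\tfrac12$ I would exploit the load asymmetry between the channels. Let $C_{1}$ (resp.\ $C_{2}$) be the expected latency of the consistent one-step deviation ``transmit on channel $1$ (resp.\ $2$) in round $1$, then revert to $f$''. Conditioning honest play on player $i$'s first-round channel choice yields the identity $C_i^{\vv{f}}(\vv{h}_0)=z\,C_{1}+(1-z)\,C_{2}$, a strict convex combination since $z\in(0,1)$. Both deviations are consistent with $f$ (each channel has positive probability), so if $f$ were an equilibrium, Lemma~\ref{eq char 1} would force $C_{1}=C_{2}=C_i^{\vv{f}}(\vv{h}_0)$; it therefore suffices to show $C_{1}\neq C_{2}$. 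This should follow from the one-round recursion over the number of pending players: the first-round success probabilities are $(1-z)^{n-1}$ on channel $1$ and $z^{n-1}$ on channel $2$, which differ for $z<\tfrac12$, and unwinding the honest continuations shows the two values cannot coincide. This argument is uniform in $n\ge2$, and in particular covers $n>4$.

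For $z=\tfrac12$ the protocol is exactly $f^2$ of (\ref{protocol(n,2)}) and, by symmetry, $C_{1}=C_{2}$, so the previous deviation is useless; here I would instead use a deferral. Let $g_i$ be the deviation ``stay silent in round $1$, then play $f^2$'', which is a legitimate acknowledgement-based protocol and the natural candidate because, with at least five players all transmitting, the first round is dominated by collisions and it is worth sacrificing one slot to thin the field. Writing $L_m=2^m/m$ for the honest latency of a pending player facing $m$ players (Lemma~\ref{exp latency (n,2)}) and letting $s\in\{0,1,2\}$ be the number of the other $n-1$ players who succeed in round $1$ (a binomial computation over the two channels), the deviation has latency $1+\sum_{s}\Pr[s]\,L_{n-s}$. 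I would then show this is strictly smaller than $L_n$ for every $n\ge5$; a direct evaluation shows the deviation fails to improve on honest play for $n\le4$ (a useful consistency check with the companion equilibrium Theorem~\ref{thm: eq (n,2)}) but strictly improves it for all $n\ge5$. By Lemma~\ref{eq char 1} (equivalently Corollary~\ref{best response}) this exhibits a profitable deviation, so $f^2$ is not an equilibrium for $n\ge5$.

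The conceptual reductions above are routine; the two quantitative steps are the crux. For $z=\tfrac12$ the main obstacle is establishing $1+\Pr[s{=}0]\,L_n+\Pr[s{=}1]\,L_{n-1}+\Pr[s{=}2]\,L_{n-2}<2^n/n$ for \emph{all} $n\ge5$ at once, since it pits the binomial tail probabilities against the factor $2^m/m$; obtaining exactly the threshold $n>4$, rather than some larger constant, is the delicate point that pins the result to Theorem~\ref{thm: eq (n,2)}. For $z<\tfrac12$ the subtlety is that $C_{1}$ and $C_{2}$ have different continuation distributions once $n\ge3$ (a deviator sitting on the sparse channel blocks different successes than one on the dense channel), so proving $C_{1}\neq C_{2}$ cleanly, rather than merely noting that the two first-round success probabilities differ, is where the real work lies.
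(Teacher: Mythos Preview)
Your decomposition and the core deviation for $z=\tfrac12$ match the paper exactly: stay silent in round~1, then revert to $f^2$, and compare to $L_n=2^n/n$ from Lemma~\ref{exp latency (n,2)}. The arithmetic you flag as the ``delicate point'' is in fact one line. For $n\ge 5$ one has $\Pr[s{=}2]=0$ (with $n-1\ge 4$ players and only two channels, two singletons would leave the remaining $n-3\ge 2$ players nowhere to go), and the balls-in-bins formula~(\ref{prob P_n(x)}) gives $\Pr[s{=}0]=1-(n-1)2^{-(n-2)}$ and $\Pr[s{=}1]=(n-1)2^{-(n-2)}$, whence the deviation's expected latency is
\[
1+\Bigl(1-\tfrac{n-1}{2^{n-2}}\Bigr)\tfrac{2^n}{n}+\tfrac{n-1}{2^{n-2}}\cdot\tfrac{2^{n-1}}{n-1}
\;=\;\tfrac{2^n}{n}+\tfrac{4}{n}-1,
\]
which is strictly below $2^n/n$ precisely when $n>4$. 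So the threshold is not delicate; it drops out immediately and dovetails with Theorem~\ref{thm: eq (n,2)} as you hoped.

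Where you diverge from the paper is the asymmetric case $z\in(0,\tfrac12)$. The paper does not attempt to prove $C_1\neq C_2$; it argues in one sentence that if the two channel probabilities differ at some step, shifting all mass to the less congested channel strictly raises the one-round success probability and ``as a consequence'' strictly lowers expected latency, concluding that only $p_1=p_2=\tfrac12$ can survive. You are right that this inference is not fully justified---the continuation distribution conditional on failure does depend on which channel the deviator occupied, exactly as you note---so your instinct to route the argument through Lemma~\ref{eq char 1} and $C_1\neq C_2$ is the more scrupulous one. But you have traded the paper's under-argued one-liner for a claim you yourself leave open, and completing your route does require the recursion you flag. In summary: on the crucial $z=\tfrac12$ case you and the paper coincide and the work is lighter than you fear; on the $z<\tfrac12$ case you are more careful than the paper, at the cost of an unfinished step.
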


\begin{proof}
	Assume that an anonymous protocol $f$ as stated in the theorem is an equilibrium protocol for $n \geq 5$ players. We will show that condition (ii-b) of Lemma \ref{eq char 2} does not hold. That is, if $n \geq 5$ players use a protocol $f$ with the property that in each time its decision rule assigns zero probability to ``no transmission'', then there exists a best response that yields strictly better expected latency for an arbitrary player.
	
	Suppose $f$ is an equilibrium protocol. $f$ consists of a decision rule for each time slot $t$, i.e. a probability distribution on the available channels (with probability 0 of ``no transmission'' as the theorem's statement requires). Since all players use this protocol, in an arbitrary time $t$ all players have the same distribution on the channels. For the sake of contradiction, suppose there is some $t'$ for which the decision rule is other than $\left(\text{Pr}\{X_{i,t} = 1 \} = \frac{1}{2}, \quad \text{Pr}\{X_{i,t} = 2 \} = \frac{1}{2} \right)$. Without loss of generality, we have $\text{Pr}\{X_{i,t} = 1 \} > \text{Pr}\{X_{i,t} = 2 \}$. Thus, an arbitrary player $i$, at time $t$, can unilaterally change her distribution to $\left(\text{Pr}\{X_{i,t} = 1 \} = 0, \quad \text{Pr}\{X_{i,t} = 2 \} = 1 \right)$ and increase her probability of transmitting successfully in the specific round. As a consequence her expected latency would strictly decrease, hence a protocol with a decision rule with different probabilities on each channel cannot be in a symmetric equilibrium. Therefore, the anonymous, equilibrium protocol $f$, with the property Pr$\{ X_{i,t} = 0 \} = 0$ for every $t \geq 1$, prescribes $\left(\text{Pr}\{X_{i,t} = 1 \} = \frac{1}{2}, \quad \text{Pr}\{X_{i,t} = 2 \} = \frac{1}{2} \right)$ for every $t \geq 1$. The expected latency of a player using such a protocol, when there are $n$ pending players, is found in Lemma \ref{exp latency (n,2)} to be $2^n / n$.
	
	We will show that, when the number of pending players at $t=0$ is $n \geq 5$, protocol
	\begin{align*}
	g_i \triangleq 
	\begin{cases}  
	& \left( \text{Pr}\{X_{i,1} = 1 \} = 0, \quad \text{Pr}\{X_{i,1} = 2 \} = 0 \right)  \\
	& \left( \text{Pr}\{X_{i,t} = 1 \} = \frac{1}{2}, \quad \text{Pr}\{X_{i,t} = 2 \} = \frac{1}{2} \right), \quad \text{for } t \geq 2 ,  
	\end{cases}
	\end{align*}
	is a better response for an arbitrary player $i$, that is, $C_{i}^{(f_{-i},g_{i})}(\vv{h}_{i,0}) < C_{i}^{f}(\vv{h}_{i,0}) = 2^n / n$. 
	
	Suppose player $i$ uses protocol $g_i$ when there are $n \geq 5$ pending players at $t=0$. At time $t=2$ she is not aware of the number of players that remain pending. However, there are two cases, either $n$ players are pending in case none of the other $n-1$ players in $t=1$ transmitted successfully, or $n-1$ players remain in case only one of the other $n-1$ players transmitted successfully in $t=1$. Note that there is no way that two players cannot simultaneously transmit successfully in round $t=2$ due to the given protocol $f$ and the number of pending players. The probability for each of the two aforementioned events is,
	\begin{align*}
	P_{n-1}(x) = \sum_{r=x}^{n-1} (-1)^{r-x} \binom{r}{x} \binom{2}{r}\binom{n-1}{r}r!\left(\frac{1}{2}\right)^r \left(1 - \frac{r}{2}\right)^{n-1-r}
	\end{align*}	
	where $x$ is the number of players that transmit successfully, $0^0 \triangleq 1$, and $\binom{a}{b} \triangleq 0$ for $a < b$. To see how this formula is produced, please refer to the proof of Lemma \ref{balls-bins} (Section \ref{eff prot}), up to equation (\ref{prob P_n(x)}). Here, equation (\ref{prob P_n(x)}) is used for $z=1$ and $k=2$.
	
	In order to capture the dependence of the expected future cost (after history $h_{t-1}$) on the number of pending players $n$, when player $i$ uses $g_i$ and the rest of the players use $f$, we denote it by $F_{i,n}^{(\vv{f}_{-i},g_{i})}(\vv{h}_{i,t - 1})$. Similarly, we denote the expected latency by $C_{i,n}^{(\vv{f}_{-i},g_{i})}(\vv{h}_{i,t - 1})$. We have,

	\begin{align}\label{equation: 6}
	C_{i,n}^{(f_{-i},g_{i})}(\vv{h}_{i,0}) = F_{i,n}^{(f_{-i},g_{i})}(\vv{h}_{i,0}) = 1 &+ P_{n-1}(0)F_{i,n}^{(f_{-i},g_{i})}(\vv{h}_{i,1}) + P_{n-1}(1)F_{i,n-1}^{(f_{-i},g_{i})}(\vv{h}_{i,1})  \nonumber \\
	= 1 &+ P_{n-1}(0) \frac{2^n}{n} + P_{n-1}(1) \frac{2^{n-1}}{n-1}.
	\end{align}
	
	For $n \geq 5$, our formula in (\ref{prob P_n(x)}) gives $P_{n-1}(0) = 1 - (n-1)\left(\frac{1}{2}\right)^{n-2}$ and $P_{n-1}(1) = (n-1)\left(\frac{1}{2}\right)^{n-2}$. Therefore (\ref{equation: 6}) becomes
	\begin{align*}
	C_{i,n}^{(f_{-i},g_{i})}(\vv{h}_{i,0}) & = 1 + \left[ 1 - (n-1)\left(\frac{1}{2}\right)^{n-2} \right] \frac{2^n}{n} + (n-1)\left(\frac{1}{2}\right)^{n-2} \frac{2^{n-1}}{n-1}   \\
	& = \frac{2^n}{n} + \frac{4}{n} - 1  \\
	& < \frac{2^n}{n} \quad , \text{ for } n > 4 \\
	& = C_{i,n}^{f}(\vv{h}_{i,0}).
	\end{align*}
	Thus protocol $g_i$ yields strictly smaller expected latency than $f_i$ for player $i$ when $n \geq 5$, and this means that $f$ is not a symmetric equilibrium for $n \geq 5$.  
\end{proof}

Since protocol $f^2$ belongs to the class of protocols defined in the statement of Theorem \ref{impossibility 1}, the following corollary is immediate.

\begin{corollary}
	For $n \geq 5$ players and $k=2$ channels, $f^2$ is not an equilibrium protocol. In fact, a better response for any player is to not transmit in $t=1$ and then follow $f^2$.
\end{corollary}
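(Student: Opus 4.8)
The plan is to reduce the statement to a direct application of Theorem~\ref{impossibility 1}, since the corollary is merely a specialization of that theorem to the concrete protocol $f^2$. First I would verify that $f^2$, as defined in~(\ref{protocol(n,2)}) with $k=2$, meets all three hypotheses of that theorem. It is \emph{anonymous}, because its decision rule is common to all players and references no player ID; it is \emph{memoryless}, because the prescribed distribution is the same at every $t \geq 1$ irrespective of the personal history $\vv{h}_{i,t-1}$; and it satisfies $\text{Pr}\{X_{i,t}=0\}=0$ for every $t \geq 1$ by construction, since $f^2$ places probability $\tfrac12$ on each of the two channels and nothing on ``no transmission''. Hence $f^2$ lies in precisely the class of protocols to which Theorem~\ref{impossibility 1} applies.

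With this membership established, the first assertion is immediate: since $n \geq 5 > 4$, Theorem~\ref{impossibility 1} guarantees that no anonymous, memoryless protocol on $k=2$ channels with $\text{Pr}\{X_{i,t}=0\}=0$ for all $t$ can be an equilibrium for that many players. As $f^2$ is such a protocol, $f^2$ is not an equilibrium protocol for $n \geq 5$.

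For the second assertion I would point to the explicit witness already constructed inside the proof of Theorem~\ref{impossibility 1}. That proof exhibits the deviation $g_i$ which idles at $t=1$ and then reverts to the uniform rule for $t \geq 2$, and it computes $C_{i,n}^{(f_{-i},g_i)}(\vv{h}_{i,0}) = \tfrac{2^n}{n} + \tfrac{4}{n} - 1 < \tfrac{2^n}{n} = C_{i,n}^{f}(\vv{h}_{i,0})$ whenever $n > 4$. Since ``do not transmit in $t=1$ and then follow $f^2$'' is exactly this $g_i$, the same $g_i$ serves as the strictly improving best response claimed in the corollary. Because every ingredient is inherited from the theorem, I expect no obstacle and no fresh computation; the only point needing a moment's care is confirming the never-idle hypothesis for $f^2$, which is transparent from its definition.

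\begin{proof}
Protocol $f^2$ defined in~(\ref{protocol(n,2)}) is anonymous and memoryless, and by construction satisfies $\text{Pr}\{X_{i,t}=0\}=0$ for every $t \geq 1$; hence it belongs to the class of protocols covered by Theorem~\ref{impossibility 1}. Since $n \geq 5 > 4$, that theorem implies $f^2$ is not an equilibrium protocol. Moreover, the proof of Theorem~\ref{impossibility 1} establishes that the protocol $g_i$ which assigns zero transmission probability at $t=1$ and follows $f^2$ for all $t \geq 2$ yields expected latency $\tfrac{2^n}{n}+\tfrac{4}{n}-1 < \tfrac{2^n}{n}$ for $n>4$, so this $g_i$ is a strictly better response for any player.
\end{proof}
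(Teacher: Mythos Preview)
Your proposal is correct and matches the paper's approach exactly: the paper also derives the corollary as an immediate consequence of Theorem~\ref{impossibility 1}, noting that $f^2$ belongs to the class of protocols covered by that theorem. Your additional explicit verification of the hypotheses (anonymity, memorylessness, never-idle) and your pointer to the specific deviation $g_i$ and its computed latency from the theorem's proof make the argument slightly more self-contained, but the substance is identical.
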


Now we prove two lemmata that, combined with our second characterization of equilibria (Lemma \ref{eq char 2}), result to one of this section's main theorems (Theorem \ref{thm: eq (n,2)}) that determines equilibrium protocols for $n \in \{2,3,4\}$ players and $k=2$ channels. In particular, we will show that for number of players $n=2$, $n=3$ and $n=4$, when $n-1$ players use $f$, if some deviator unilaterally chooses any possible protocol $g_i$ as defined in (\ref{prot: g}) that \textit{is} consistent with $\vv{f}$, she will suffer the same expected latency, namely $2^n / n$. Then, we will show that if the deviator unilaterally chooses any possible protocol as defined in (\ref{prot: g}) that \textit{is not} consistent with $\vv{f}$, she will suffer expected latency at least $2^n / n$. These two facts, by Lemma \ref{eq char 2}, show that $f$ is an equilibrium protocol for $n \in \{2,3,4\}$.

\begin{lemma}\label{cond ii-a}
	For $n \geq 2$ players and $k=2$ channels, any player $i$ that follows a protocol $g_i \in \mathcal{G}^{f^2}$ in the profile $(f_{-i}^2,g_i)$, where $f^2$ is defined in (\ref{protocol(n,2)}), has expected latency $2^n / n$.
\end{lemma}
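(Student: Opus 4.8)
The plan is to show that following any $g_i \in \mathcal{G}^{f^2}$ leaves player $i$'s expected latency equal to the value $2^n/n$ computed in Lemma~\ref{exp latency (n,2)}, by arguing that the deterministic channel choices prescribed by $g_i$ do not alter the Markov chain governing $i$'s latency. First I would pin down the shape of an arbitrary $g_i \in \mathcal{G}^{f^2}$: by~(\ref{prot: g}) it plays a fixed action $a_{i,t}$ with probability $1$ for $1 \le t \le \tau^*$ and reverts to $f^2$ afterwards, and consistency with the profile $(f^2,\dots,f^2)$ forces every $a_{i,t}\in\{1,2\}$, since $f^2$ (see~(\ref{protocol(n,2)})) assigns probability $0$ to idling and $\tfrac12$ to each channel. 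Hence $g_i$ differs from $f^2$ only in that, during its first $\tau^*$ steps, $i$ commits deterministically to one of the two channels instead of randomizing between them.

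Next I would reuse the Markov chain from the proof of Lemma~\ref{exp latency (n,2)}, whose states record the number $m$ of pending players including $i$ together with the absorbing state $\langle\times\rangle$. The key step is to verify that the transition probabilities out of $\langle m\rangle$ are identical whether $i$ transmits on channel $1$, on channel $2$, or randomizes $\tfrac12$-$\tfrac12$, provided the other $m-1$ players each transmit uniformly over $\{1,2\}$. Fixing $i$'s channel to $a$, the event ``$i$ succeeds'' is exactly ``no other player picks $a$'', of probability $(\tfrac12)^{m-1}$ independently of $a$, which reproduces $p_m^{\times}$ of~(\ref{eq (n,2) 1}); since there are only two channels, whenever $i$ is pending and fails her channel carries a collision, so at most one further success is possible that step, and exactly one other player succeeds iff precisely one of the $m-1$ others sits on the channel opposite $a$, a probability again independent of $a$ that reproduces $p_m^{m-1}$, with $p_m^m$ following by complementation. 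The boundary case $m=2$ of~(\ref{eq (n,2) 2}) is checked the same way.

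Finally I would conclude that, because these transition probabilities depend only on $m$ and never on $i$'s particular channel choices, replacing the first $\tau^*$ randomized steps of $f^2$ by the deterministic choices of $g_i$ yields the very same chain on $\{\langle2\rangle,\dots,\langle n\rangle,\langle\times\rangle\}$; thus the expected absorption time from $\langle n\rangle$ is unchanged and equals $2^n/n$ by Lemma~\ref{exp latency (n,2)}. The main obstacle is the symmetry verification of the middle paragraph: it is conceptually immediate from the channel-swap symmetry of the others' uniform distribution, but one must confirm all three transition probabilities (and the $m=2$ boundary) rather than merely $i$'s own success probability, crucially using that with $k=2$ the pending count can drop by at most one while $i$ is still in the game.
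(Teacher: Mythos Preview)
Your proposal is correct and follows essentially the same approach as the paper: you identify that any $g_i\in\mathcal{G}^{f^2}$ deterministically picks a channel in $\{1,2\}$ for $t\le\tau^*$, then verify that the one-step transition probabilities $p_m^{\times},p_m^{m-1},p_m^m$ out of state $\langle m\rangle$ are invariant under channel swaps and hence coincide with~(\ref{eq (n,2) 1})--(\ref{eq (n,2) 2}), so the chain of Lemma~\ref{exp latency (n,2)} is unchanged and the expected latency is $2^n/n$. The paper additionally frames the process as a POMDP that collapses to this Markov chain, but the substantive symmetry argument is the same as yours.
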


\begin{proof}
	%
	Consider the contention game with fixed number of players $n \geq 2$ and 2 channels. $n-1$ players use protocol $f^2$ and a player $i \in [n]$ uses some protocol $g_{i}(h_{i,\tau^*}) \in \mathcal{G}^{f^2}$ as defined in (\ref{prot: g}), for some $\tau^* \geq 1$. To make easier our reference to the expected future latency of a player in the special case where (almost) all players follow protocol $f^2$ of (\ref{protocol(n,2)}), and to capture the number of players in the notation, we will denote by $D(r_{i},n) \triangleq \mathbb{E}[T_i | \vv{h}_{i,0},(\vv{f}_{-i}^2,r_{i})]$ and $D(f_{i}^2,n) \triangleq \mathbb{E}[T_i | \vv{h}_{i,0},\vv{f^2}]$ the expected future latency of player $i$ when $n$ players participate.
	
	First we show that condition (ii-a) of Lemma \ref{eq char 2} holds for every $n \geq 2$. From Lemma \ref{exp latency (n,2)} we know that $D(f_{i}^2,n) = 2^n/n$, for every $i \in [n]$. Now observe that the set of all protocols $g_i(\tau^*)$ as defined in \ref{prot: g} that are consistent with $f_i^2$, consists of the protocols for which $a_t \neq 0$ for every $1 \leq t \leq \tau^*$ for any $\tau^* \geq 1$. That is, for all possible tuples $(a_1, a_2,\dots, a_{\tau^*})$ of a given $\tau^*$, there is no $t \leq \tau^*$ for which $a_t = 0$, and this is for all $\tau^* \geq 1$, since a history with ``no transmission attempt'' in it is not consistent with $f^2$. Given a tuple $h_{i,\tau^*}=(a_1, a_2,\dots, a_{\tau^*})$, denote by $x_t$ the indicator variable that equals 1 if player $i$ chooses channel 1, and 0 if she chooses channel 2 in round $t \leq \tau^*$. Formally, a protocol as described above is
	\begin{align*}
	g_i = g_{i}(h_{i,\tau^*}) \triangleq 
	\begin{cases}  
	& \left(\text{Pr}\{X_{i,t} = 1 \} = x_t , \quad \text{Pr}\{X_{i,t} = 2 \} = 1-x_t \right) \quad \text{, for } 1 \leq t \leq \tau^* \\
	& f_{i,t}^2 \quad \text{, for } t > \tau^* ,  
	\end{cases}
	\end{align*}
	This process where a single player $i$ uses some protocol $g_i$ and has a latency according to $g_i$ and the other players' fixed protocols, can be modelled as a Partially Observable Markov Decision Process (POMDP) with infinite states; in this POMDP, each state is determined by the transmission history of player $i$ and the number of pending players including $i$, with an additional absorbing state where $i$ goes after successfully transmitting; player $i$'s belief state at any time $t$ is determined by her belief state at time $t-1$, the action she chose at time $t-1$, and her observation (e.g. her transmission history up to $t-1$). 
	
	The fact that we consider acknowledgement-based protocols together with the fact that the partial protocol profile $f_{-i}^2$ which produces our POMDP consists of memoryless and time-independent protocols, make the states of our POMDP be independent of player $i$'s history. We now remark that, regardless of the action taken in some belief state from player $i$ playing $g_i$, the transition probabilities between belief states are independent of time. In particular, denote by $\langle m, t \rangle$ a state with $m$ pending players including player $i$ at time $t \geq 1$, and by $\langle \times \rangle$ the unique absorption state where $i$ finds herself after successful transmission. 
	We write $p_{x}^{y}$ to denote the transition probability to go from state $\langle x \rangle$ to state $\langle y \rangle$. It is easy to see that the transition probabilities among belief states with $1 \leq t \leq \tau^*$ are
	
	\begin{equation*}
	\begin{rcases*}
	p_{m,t}^{\times} = \left(\frac{1}{2}\right)^{m-1} \\
	p_{m,t}^{m-1, t+1} = (m-1)\left(\frac{1}{2}\right)^{m-1} \\
	p_{m,t}^{m, t+1} = 1 - m \left(\frac{1}{2}\right)^{m-1}
	\end{rcases*}
	\forall 3 \leq m \leq n, \quad 1 \leq t \leq \tau^* ,
	\end{equation*}
	\begin{align*}
	\text{and} \quad p_{2,t}^{\times} = \frac{1}{2}, \quad p_{2,t}^{2,t+1} = \frac{1}{2}.
	\end{align*}
	
	%
	
	Observe that the above transition probabilities of any state for which $1 \leq t \leq \tau^*$ are identical to those of equations (\ref{eq (n,2) 1}) and (\ref{eq (n,2) 2}) in the proof of Lemma \ref{exp latency (n,2)}; obviously for $t > \tau^*$ the same holds because player $i$ has switched back to protocol $f^2$.
	Since player $i$'s actions do not affect the transition probabilities of the resulting belief states, the above POMDP reduces to a Markov chain that is in fact identical to the one defined in the proof of Lemma \ref{exp latency (n,2)}, thus $D(g_{i},n) = D(f_{i}^2,n) = 2^n/n$. 
	
	The natural explanation for our POMDP resulting to the above Markov chain is that, if for a given round all players have a  given probability of transmission (not necessarily 1) uniformly distributed on the channels and a single deviator picks an arbitrary distribution on the channels for the same probability of transmission (in this case 1), then: (a) the probability with which she transmits successfully remains unchanged because each channel is blocked with equal probability $(1 - 1/2^{n-1})$ by the rest of the players, and (b) the probabilities with which a specific number $s$ of players (excluding $i$) transmit successfully remain unchanged because, the probability of $s$ players successfully transmitting conditional on $i$ choosing any of the channels is the same (due to the uniform distributions on the channels by the rest of the players) regardless of the channel chosen by $i$. 
	
	\underline{Remark:} The above arguments hold also in the case of any number $k \geq 1$ of channels when an anonymous, memoryless protocol $f$ is used by all players except $i$, where $f$ is defined by a probability $0 < z \leq 1$ that is split uniformly on the channels in every time-step (in our proof, $k=2$ and $z=1$ for all $t > 0$). In such a case the POMDP is reduced to a corresponding Markov chain that is produced when all players follow $f$.
\end{proof}

\begin{lemma}\label{cond ii-b}
	For $2 \leq n \leq 4$ players and $k=2$ channels, any player $i$ that follows protocol $r_{i} \notin \mathcal{G}^{f^2}$ in the profile $(f_{-i}^2,r_{i})$, where $f^2$ is defined in (\ref{protocol(n,2)}), has expected latency at least $2^n / n$.
\end{lemma}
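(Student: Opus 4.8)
The plan is to strip $r_i$ down to a pure transmit/skip schedule, recast the deviator's situation as a finite Markov chain on the number of pending players, and then run a backward induction showing that no schedule containing a skip can beat the all-transmit value $2^n/n$ when $n\le 4$.

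First I would invoke the channel-symmetry argument already established in Lemma~\ref{cond ii-a} (and its Remark): conditioned on the deviator transmitting in a given round, the particular channel she picks affects neither her own success probability nor the joint law of how many of the other players succeed, since those players spread their transmission mass uniformly over the two channels. Hence the latency of $r_i$ depends only on its sequence of \emph{transmit}/\emph{skip} decisions over $1\le t\le\tau^*$ (with at least one skip, as $r_i\notin\mathcal{G}^{f^2}$), followed by $f^2$. This lets me track a single integer state, the number $m$ of pending players including her; because the count never increases and $n\le 4$, only $m\in\{1,2,3,4\}$ occur.

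The crucial step is the one-step cost of a \emph{single} skip at state $m$ followed by $f^2$. Writing $V(m)$ for the $f^2$-continuation value, Lemma~\ref{exp latency (n,2)} gives $V(m)=2^m/m$ for $m\ge 2$ and $V(1)=1$. To evaluate a skip I must count how many of the other $m-1$ players succeed, using the combinatorial fact central to Theorem~\ref{impossibility 1}: with two channels two simultaneous successes force exactly two players split one-per-channel. Thus for $m=4$ the three others yield one success with probability $3/4$ and none with probability $1/4$, while for $m=3$ the two others either both collide or both succeed, each with probability $1/2$. A short calculation then gives
\begin{align*}
V_N(2) &= 1 + V(1) = 2, & V_N(3) &= 1 + \tfrac12 V(3) + \tfrac12 V(1) = \tfrac{17}{6}, & V_N(4) &= 1 + \tfrac14 V(4) + \tfrac34 V(3) = 4,
\end{align*}
so $V_N(m)\ge V(m)$ for every $m\in\{2,3,4\}$, with equality at $m=2,4$ and strict inequality at $m=3$: for at most four players a skip is never profitable.

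Finally I would close by backward induction on the prefix. Let $W_t(m)$ be the expected remaining latency when $m$ players are pending at prefix-time $t$, given that the deviator executes the fixed actions $a_t,\dots,a_{\tau^*}$ and then $f^2$; the base case $W_{\tau^*+1}(m)=V(m)$ holds by the previous paragraph. Assuming $W_{t+1}(m')\ge V(m')$ for all $m'\le 4$, I push the bound through one round. If $a_t$ is a transmit, then $W_t(m)=1+(m-1)(\tfrac12)^{m-1}W_{t+1}(m-1)+[1-m(\tfrac12)^{m-1}]W_{t+1}(m)\ge 1+(m-1)(\tfrac12)^{m-1}V(m-1)+[1-m(\tfrac12)^{m-1}]V(m)=V(m)$, the last equality being exactly the hitting-time recursion of Lemma~\ref{exp latency (n,2)} (with the one-line analogue $V(2)=1+\tfrac12 V(2)$ at $m=2$). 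If $a_t$ is a skip, monotonicity of the nonnegative transition coefficients together with $V_N(m)\ge V(m)$ gives $W_t(m)\ge V_N(m)\ge V(m)$. Hence $C_{i}^{(f_{-i}^2,r_i)}(\vv{h}_0)=W_1(n)\ge V(n)=2^n/n$, as required. The delicate point is precisely the break-even states $m=2,4$: a lone skip there costs nothing, so no strict per-round loss is available, and it is the backward induction—propagating the continuation lower bound $V(\cdot)$ through every round—that rules out a profitable combination of several cost-free skips.
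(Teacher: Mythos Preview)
Your argument is correct and in fact cleaner than the paper's. Both proofs rest on the same core computation: evaluating the cost of a single skip followed by $f^2$ at each state $m\in\{1,2,3,4\}$ and checking that it is at least $V(m)=2^m/m$ (your $V_N(m)$ values $2,\,17/6,\,4$ match the paper's numbers exactly). Where the two approaches diverge is in handling schedules with \emph{several} skips. The paper splits $r_i$ into four categories according to whether $a_{t_0}$ is a skip and whether further skips occur after $t_0$, then disposes of the multi-skip categories via a stationarity argument: since $f^2$ is time-independent, a best-response continuation must have the same value at $t_0-1$ and at $t_0$, otherwise the deviator would prefer to shift. Your backward induction bypasses all of this: you bound $W_t(m)\ge V(m)$ directly at every prefix step, for both transmit (using the hitting-time recursion of Lemma~\ref{exp latency (n,2)}) and skip (using $V_N(m)\ge V(m)$). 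This avoids the case split, avoids restricting attention to best responses, and makes the role of the break-even states $m=2,4$ completely transparent. The only cosmetic slip is that the phrase ``the base case $W_{\tau^*+1}(m)=V(m)$ holds by the previous paragraph'' should simply say ``by definition'' or ``by Lemma~\ref{exp latency (n,2)}'', since the previous paragraph established $V_N(m)\ge V(m)$, not the base case.
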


\begin{proof}
	Consider the contention game with fixed number of players $n \in \{2,3,4\}$ and 2 channels. $n-1$ players use protocol $f^2$ and a player $i \in [n]$ uses some protocol $r_{i}=r_{i}(h_{i,\tau^*}) \notin \mathcal{G}^{f^2}$ as defined in (\ref{prot: g}), for some $\tau^* \geq 1$. It is sufficient to show that the lemma holds, when $r_{i}$ is a best response to $f_{-i}^2$, where $r_{i}$ is constrained to be inconsistent with $(f_{-i}^2,f_i^2)$. Therefore, among such best responses $r_{i}$ there has to be one with a round $t < \infty$ for which Pr$\{X_{i,t} = 0\} > 0$ by definition of inconsistency. Let us focus on the smallest such $t$ which we will call from now on $t_0$, i.e. $t_0 \triangleq \inf\{t : \text{Pr}\{X_{i,t} = 0\} > 0\}$. Let us now define the set of protocols $r_{i}(h_{i,t_0}) \notin \mathcal{G}^{f^2}$ for the aforementioned $t_0$. There are two categories of such protocols: Category (1) has $a_{t_0} \neq 0$, and Category (2) has $a_{t_0} = 0$. Each of those categories is partitioned in two other categories: Category (I) has Pr$\{X_{i,t} = 0\} = 0$ for every $t > t_0$, and Category (II) has Pr$\{X_{i,t} = 0\} > 0$ for some $t > t_0$. The categories are presented in Table \ref{table: categories} below.
	
	\begin{table}[h!]
		\centering
		\begin{tabular}{|l|l|}
			\hline
			\textbf{Category 1} & $a_{t_0} \neq 0$ \\ \hline
			\textbf{Category 2} & $a_{t_0} = 0$ \\ \hline
		\end{tabular}
		\text{ } \text{ } \text{ }
		\begin{tabular}{|l|l|}
			\hline
			\textbf{Category I} & $ \forall t > t_0$: Pr$\{X_{i,t} = 0\} = 0$  \\ \hline
			\textbf{Category II} & $\exists t > t_0$: Pr$\{X_{i,t} = 0\} > 0$  \\ \hline
		\end{tabular}
		~\\
		~\\
		~\\
		\caption{The categories of protocol $r_{i}(h_{i,t_0})$.}\label{table: categories}
	\end{table}
	
	Right before time $t_0$ there are $n$ possible cases that could have occurred: $m$ players are pending including player $i$, for $1 \leq m \leq n$. In each of those cases we want to find the expected future latency of a player $i$ that unilaterally uses protocol $r_{i}(h_{i,t_0})$, given history $\vv{h}_{i,t_0 - 1}$, and given that the pending players right before time $t_0$ are $m$; we will denote this by $F_{i,m}^{(\vv{f^2}_{-i},r_{i})}(\vv{h}_{i,t_0 - 1})$. We will prove our claim step by step, starting from protocols of Category (I) which are easier to analyze, and move on to protocols of Category (II); we start the analysis from the case with the least possible players and build up to the required number of players.
	
	Starting with Category (1-I), the analysis of the proof of Lemma \ref{cond ii-a} implies that these protocols $r_{i}$ yield the same expected latency as $f_{i}^2$ in the tuple $f^2$, since their process' Markov chain is identical to this of the case $(f_{-i}^2,f_i^2)$. For Category (2-I), player $i$ does not transmit at $t_0$. Given that right before $t_0$ there are $m$ pending players including $i$, at $t_0$ either all $m$ players remain pending, or $m-1$, or $m-2$; the first event occurs when none of the $m-1$ players using protocol $f^2$ at $t_0$ transmitted successfully, the second when only one of them did, and the third when two of them did. The probability for each of those events is $P_{m-1}(x)$, where $x$ is the number of players that transmit successfully, and can be found in (\ref{prob P_n(x)}) for $k=2$ and $z=1$. Therefore we have,
	\begin{align}\label{equation: 7}
	F_{i,m}^{(\vv{f^2}_{-i},r_{i})}(\vv{h}_{i,t_0 - 1}) = 1 &+ P_{m-1}(0)F_{i,m}^{(\vv{f^2}_{-i},f_{i}')}(\vv{h}_{i,t_0}) + P_{m-1}(1)F_{i,m-1}^{(\vv{f^2}_{-i},f_{i}')}(\vv{h}_{i,t_0})  \nonumber \\
	&+ P_{m-1}(2)F_{i,m-2}^{(\vv{f^2}_{-i},f_{i}')}(\vv{h}_{i,t_0}) \nonumber \\
	= 1 &+ P_{m-1}(0)D(f_{i}^2,m) + P_{m-1}(1)D(f_{i}^2,m-1) + P_{m-1}(2)D(f_{i}^2,m-2),
	\end{align}
	where $f_{i}'$ is the protocol followed by $i$ for $t>t_0$.
	For $m=1$ it is $P_{0}(0) = 1$, and $P_{0}(1) = P_{0}(2) = 0$. For $m = 2$ it is $P_{1}(0) = P_{1}(2) = 0$, and $P_{1}(1) = 1$. For $m=3$ it is $P_{2}(0) = P_{2}(2) = \frac{1}{2}$, and $P_{2}(1) = 0$. For $m = 4$ it is $P_{3}(0) = 1 - 3\left(\frac{1}{2}\right)^{2}$, $P_{3}(1) = 3\left(\frac{1}{2}\right)^{2}$, and $P_{3}(2) = 0$.
	
	Now, using (\ref{equation: 7}), we can see that for $1 \leq m \leq 4$ it is $F_{i,m}^{(\vv{f^2}_{-i},r_{i})}(\vv{h}_{i,t_0 - 1}) \geq F_{i,m}^{(\vv{f^2}_{-i},f_{i})}(\vv{h}_{i,t_0 - 1}) = D(f_{i}^2,m) = 2^m / m$. In particular,
	\begin{align*}
	&\text{for } m=1: \quad 2 \geq 1 , \\
	&\text{for } m=2: \quad 2  \geq 2 , \\
	&\text{for } m=3: \quad \frac{17}{6}  \geq \frac{8}{3} , \text{ and} \\
	&\text{for } m=4: \quad 4  \geq 4 .
	\end{align*}
	Equivalently, $C_{i,m}^{(\vv{f^2}_{-i},r_{i})}(\vv{h}_{i,t_0 - 1}) \geq C_{i,m}^{(\vv{f^2}_{-i},f_{i})}(\vv{h}_{i,t_0 - 1})$, and therefore, due to (\ref{expected 1}),  $C_{i,m}^{(\vv{f^2}_{-i},r_{i})}(\vv{h}_{i,0}) \geq C_{i,m}^{(\vv{f^2}_{-i},f_{i})}(\vv{h}_{i,0})$ for $1 \leq m \leq 4$. Thus, for Category (I) and $2 \leq n \leq 4$, condition (ii-b) of Lemma \ref{eq char 2} holds.
	
	For Category (1-II), we prove our claim for $1 \leq m \leq 4$ pending players right before $t_0$. For $m=1$, obviously $F_{i,1}^{(\vv{f^2}_{-i},r_{i})}(\vv{h}_{i,t_0 - 1}) = 1$, which is also the minimum possible when only one player is pending. For $m=2$, we have
	\begin{align*}
	F_{i,2}^{(\vv{f^2}_{-i},r_{i})}(\vv{h}_{i,t_0 - 1}) &= 1 + \text{Pr}\{\text{No player transmits successfully}\} F_{i,2}^{(\vv{f^2}_{-i},f_{i}')}(\vv{h}_{i,t_0})
	\end{align*}
	Now, given that the protocol $f^2$ used by all players apart from $i$ is time-independent, it should be $F_{i,2}^{(\vv{f^2}_{-i},r_{i})}(\vv{h}_{i,t_0 - 1}) = F_{i,2}^{(\vv{f^2}_{-i},f_{i}')}(\vv{h}_{i,t_0})$. Because if \newline $F_{i,2}^{(\vv{f^2}_{-i},r_{i})}(\vv{h}_{i,t_0 - 1}) < F_{i,2}^{(\vv{f^2}_{-i},f_{i}')}(\vv{h}_{i,t_0})$ or $F_{i,2}^{(\vv{f^2}_{-i},r_{i})}(\vv{h}_{i,t_0 - 1}) > F_{i,2}^{(\vv{f^2}_{-i},f_{i}')}(\vv{h}_{i,t_0})$, then $f_{i}'$ is not a best response; in the former situation player $i$ would prefer $r_{i}(h_{i,t_0})$ over $f_{i}'$; in the latter situation she would prefer a modified protocol $r_{i}(h_{i,t_0}')$ with Pr$\{X_{i,t_0} \neq 0\} = 0$ over the current $r_{i}(h_{i,t_0})$, respectively. The probability of no player transmitting successfully in $t_0$ is $1/2$, thus we get $F_{i,2}^{(\vv{f^2}_{-i},r_{i})}(\vv{h}_{i,t_0 - 1}) = 2 = F_{i,2}^{(\vv{f^2}_{-i},f_{i}^2)}(\vv{h}_{i,t_0 - 1})$, which implies $C_{i,2}^{(\vv{f^2}_{-i},r_{i})}(\vv{h}_{i,t_0 - 1}) = C_{i,2}^{(\vv{f^2}_{-i},f_{i}^2)}(\vv{h}_{i,t_0 - 1})$.
	
	For $m=3$, we have
	\begin{align}\label{equation: 4}
	F_{i,3}^{(\vv{f^2}_{-i},r_{i})}(\vv{h}_{i,t_0 - 1}) = 1 &+ \text{Pr}\{\text{No player transmits successfully}\} F_{i,3}^{(\vv{f^2}_{-i},f_{i}')}(\vv{h}_{i,t_0}) \nonumber \\
	&+ \text{Pr}\{\text{Exactly 1 player other than $i$ transmits successfully}\} F_{i,2}^{(\vv{f^2}_{-i},f_{i}')}(\vv{h}_{i,t_0})
	\end{align}
	From the previous step, we know that a best response to $f_{-i}$ when there are $2$ players pending including $i$ yields expected latency to $i$ equal to 2. Also, the probability that exactly one player other than $i$ transmits successfully when there are 3 players pending, is 1/2. So, (\ref{equation: 4}) gives
	\begin{align*}
	F_{i,3}^{(\vv{f^2}_{-i},r_{i})}(\vv{h}_{i,t_0 - 1}) \geq 2 &+ \text{Pr}\{\text{No player transmits successfully}\} F_{i,3}^{(\vv{f^2}_{-i},f_{i}')}(\vv{h}_{i,t_0})
	\end{align*}
	
	Again, given that the protocol $f$ used by all players apart from $i$ is time-independent, it should be $F_{i,3}^{(\vv{f^2}_{-i},r_{i})}(\vv{h}_{i,t_0 - 1}) = F_{i,3}^{(\vv{f^2}_{-i},f_{i}')}(\vv{h}_{i,t_0})$ for the same reasons explained in the case of $m=2$. The probability of no player transmitting successfully in $t_0$ is $1/2$, thus we get $F_{i,3}^{(\vv{f^2}_{-i},r_{i})}(\vv{h}_{i,t_0 - 1}) \geq 8/3 = F_{i,3}^{(\vv{f^2}_{-i},f_{i}^2)}(\vv{h}_{i,t_0 - 1})$, which implies $C_{i,3}^{(\vv{f^2}_{-i},r_{i})}(\vv{h}_{i,t_0 - 1}) \geq C_{i,3}^{(\vv{f^2}_{-i},f_{i}^2)}(\vv{h}_{i,t_0 - 1})$.
	
	Finally, for $m=4$, we have
	\begin{align}\label{equation: 5}
	F_{i,4}^{(\vv{f^2}_{-i},r_{i})}(\vv{h}_{i,t_0 - 1}) = 1 &+ \text{Pr}\{\text{No player transmits successfully}\} F_{i,4}^{(\vv{f^2}_{-i},f_{i}')}(\vv{h}_{i,t_0}) \nonumber \\
	&+ \text{Pr}\{\text{Exactly 1 player other than $i$ transmits successfully}\} F_{i,3}^{(\vv{f^2}_{-i},f_{i}')}(\vv{h}_{i,t_0})
	\end{align}
	From the previous step, we know that a best response to $f_{-i}^2$ when there are $3$ players pending including $i$ yields expected latency to $i$ at least $8/3$. Also, the probability that exactly one player other than $i$ transmits successfully when there are 4 players pending, is 3/8. So, (\ref{equation: 5}) gives
	\begin{align*}
	F_{i,4}^{(\vv{f^2}_{-i},r_{i})}(\vv{h}_{i,t_0 - 1}) \geq 2 &+ \text{Pr}\{\text{No player transmits successfully}\} F_{i,4}^{(\vv{f^2}_{-i},f_{i}')}(\vv{h}_{i,t_0})
	\end{align*}
	
	Again, given that the protocol $f^2$ used by all players apart from $i$ is time-independent, it should be $F_{i,4}^{(\vv{f^2}_{-i},r_{i})}(\vv{h}_{i,t_0 - 1}) = F_{i,4}^{(\vv{f^2}_{-i},f_{i}')}(\vv{h}_{i,t_0})$ for the same reasons explained for $m \in \{2,3\}$. The probability of no player transmitting successfully in $t_0$ is $1/2$, thus we get $F_{i,4}^{(\vv{f^2}_{-i},r_{i})}(\vv{h}_{i,t_0 - 1}) \geq 4 = F_{i,4}^{(\vv{f^2}_{-i},f_{i}^2)}(\vv{h}_{i,t_0 - 1})$, which implies $C_{i,4}^{(\vv{f^2}_{-i},r_{i})}(\vv{h}_{i,t_0 - 1}) \geq C_{i,4}^{(\vv{f^2}_{-i},f_{i}^2)}(\vv{h}_{i,t_0 - 1})$.
	Thus, for Category (1-II) and $2 \leq n \leq 4$, condition (ii-b) of Lemma \ref{eq char 2} holds.
	
	Now we proceed with the proof of the statement for $1 \leq m \leq 4$ for the final category, namely Category (2-II), using the results from Category (1-II). For every $m \geq 1$, equation (\ref{equation: 6}) holds. For $m=1$, we have
	\begin{align*}
	F_{i,1}^{(\vv{f^2}_{-i},r_{i})}(\vv{h}_{i,t_0 - 1}) = 1 + 1 \cdot F_{i,1}^{(\vv{f^2}_{-i},f_{i}')}(\vv{h}_{i,t_0}) \geq 2 ,
	\end{align*}
	where the above inequality comes from the fact that the minimum expected future latency for $m=1$ is 1 (found in Category (1-II)). By applying the same methodology for $2 \leq m \leq 4$ we have 
	\begin{align*}
	F_{i,m}^{(\vv{f^2}_{-i},r_{i})}(\vv{h}_{i,t_0 - 1}) \geq 3 + \left[1 - (m-1)\left(\frac{1}{2}\right)^{m-2}\right]\frac{2^m}{m} \geq \frac{2^m}{m} = F_{i,m}^{(\vv{f^2}_{-i},f_{i}^2)}(\vv{h}_{i,t_0 - 1}).
	\end{align*}
	
	Then, by taking into account our lower bounds for $F_{i}^{(\vv{f^2}_{-i},r_{i})}(\vv{h}_{i,t})$ when $0 \leq t \leq t_0 - 1$ and for all possible numbers $m$ of remaining players (including $i$), we get
	\begin{align*}
	F_{i}^{(\vv{f^2}_{-i},r_{i})}(\vv{h}_{i,t_0 - 1}) \geq F_{i}^{(\vv{f^2}_{-i},f_{i}^2)}(\vv{h}_{i,t_0 - 1}), \text{ which implies } C_{i}^{(\vv{f^2}_{-i},r_{i})}(\vv{h}_{i,t_0 - 1}) \geq C_{i}^{(\vv{f^2}_{-i},f_{i}^2)}(\vv{h}_{i,t_0 - 1}).
	\end{align*}
	Then, from Corollary \ref{best response} and equation (\ref{expected 1}) it is $C_{i}^{(\vv{f^2}_{-i},f_{i}')}(\vv{h}_{i,0}) \geq C_{i}^{(\vv{f^2}_{-i},f_{i}^2)}(\vv{h}_{i,0})$	and this completes the proof.
\end{proof}


\begin{theorem}\label{thm: eq (n,2)}
	For $n \in \{2,3,4\}$ players and $k = 2$ channels, $f^2$ is an equilibrium protocol with expected latencies $2$, $8/3$ and $4$, respectively.
\end{theorem}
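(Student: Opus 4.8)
The plan is to invoke the second equilibrium characterization (Lemma~\ref{eq char 2}) directly, since $f^2$ is anonymous and the contention game is fully symmetric across players. By that characterization, establishing that the profile in which all players use $f^2$ is an equilibrium reduces to verifying, for an arbitrary player $i$, the two conditions (ii-a) and (ii-b): that every consistent deterministic-prefix deviation $g_i \in \mathcal{G}^{f^2}$ yields exactly the same expected latency as $f^2$, and that every inconsistent deviation $r_i \notin \mathcal{G}^{f^2}$ yields expected latency no smaller than that of $f^2$. Because the protocol is anonymous, checking a single deviator suffices to cover all players, so no extra per-player bookkeeping is needed.

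First I would dispatch condition (ii-a) by appealing to Lemma~\ref{cond ii-a}, which already shows that for all $n \geq 2$ any consistent deviation $g_i \in \mathcal{G}^{f^2}$ induces a Markov chain identical to the one analysed in Lemma~\ref{exp latency (n,2)}, so that $C_{i}^{(f_{-i}^2, g_i)}(\vv{h}_0) = 2^n/n = C_{i}^{f^2}(\vv{h}_0)$; in particular every such $g_i$ ties with $f^2$. Next I would dispatch condition (ii-b) by appealing to Lemma~\ref{cond ii-b}, which shows that for $2 \leq n \leq 4$ any inconsistent deviation $r_i \notin \mathcal{G}^{f^2}$ satisfies $C_{i}^{(f_{-i}^2, r_i)}(\vv{h}_0) \geq 2^n/n$. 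These two facts are precisely hypotheses (ii-a) and (ii-b) of Lemma~\ref{eq char 2}, so that lemma immediately certifies that $f^2$ is an equilibrium for $n \in \{2,3,4\}$.

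Finally, the claimed expected latencies follow by specializing Lemma~\ref{exp latency (n,2)}: substituting $n = 2,3,4$ into $2^n/n$ gives $2$, $8/3$ and $4$, respectively. The genuine difficulty of the argument does not reside in this theorem but is already absorbed into Lemma~\ref{cond ii-b}, whose hard part is the case analysis for inconsistent deviations, where one must rule out that inserting a ``no-transmission'' slot, or reshuffling channel probabilities after such a slot, can ever help. The restriction to $n \leq 4$ is exactly where that case analysis stays favourable, matching the impossibility established in Theorem~\ref{impossibility 1} for $n \geq 5$; so in writing the proof I would stress that the combination of Lemmas~\ref{cond ii-a} and~\ref{cond ii-b} is tight and does not extend to larger $n$.
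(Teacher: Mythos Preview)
Your proposal is correct and mirrors the paper's own proof exactly: the paper proves the theorem by combining Lemma~\ref{cond ii-a}, Lemma~\ref{cond ii-b}, and the equilibrium characterization of Lemma~\ref{eq char 2}, with the latency values read off from Lemma~\ref{exp latency (n,2)}. Your additional commentary on why the argument stops at $n=4$ is accurate and consistent with Theorem~\ref{impossibility 1}.
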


\begin{proof}
	By combining Lemma \ref{cond ii-a}, Lemma \ref{cond ii-b} and the equilibrium characterization of Lemma \ref{eq char 2}.
\end{proof}

\paragraph{\textbf{n players - 3 transmission channels.}}
Here, by employing our characterizations from Subsection \ref{ack-based characterizations}, we give an acknowledgement-based, equilibrium protocol for $n \in \{2,3,4,5\}$ players and $k=3$ channels.  

\begin{theorem}
	For $n \in \{2,3,4,5\}$ players and $k = 3$ channels, $f^3$ defined in (\ref{protocol(n,2)}) is an equilibrium protocol with expected latencies $3/2$, $15/8$, $189/80$ and $597/200$, respectively.
\end{theorem}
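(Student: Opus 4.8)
The plan is to instantiate, at $k=3$, the exact three-step development already carried out for two channels (Lemma \ref{exp latency (n,2)}, Lemmata \ref{cond ii-a}--\ref{cond ii-b}, and Theorem \ref{thm: eq (n,2)}). By the equilibrium characterization of Lemma \ref{eq char 2}, it suffices to verify for each $n \in \{2,3,4,5\}$ that (ii-a) every consistent deviation $g_i \in \mathcal{G}^{f^3}$ yields player $i$ the same expected latency as $f^3$, and (ii-b) every inconsistent deviation $r_i \notin \mathcal{G}^{f^3}$ yields at least that latency. The common value under $f^3$ is exactly the number the theorem asserts, so the first task is to compute it.

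For the latency I would mirror the proof of Lemma \ref{exp latency (n,2)}: model the process from one player $i$'s viewpoint as a Markov chain whose states are the number of pending players (including $i$), plus an absorbing state $\langle \times \rangle$. The transition probabilities are read off from the balls-into-bins formula (equation \ref{prob P_n(x)} with $z=1$, $k=3$); for example $p_m^\times = (2/3)^{m-1}$, and when $i$ fails the count drops to $m-1$ or $m-2$ according to whether one or two of the remaining players succeed simultaneously (with three channels, if $i$ transmits and fails her channel is blocked, so at most two others can succeed in the same slot). Writing the absorption-time equations $h_m^\times = 1 + \sum_{j} p_m^{j}\, h_{j}^\times$ and solving the small triangular linear systems for $n=2,3,4,5$ produces the four claimed values $3/2,\,15/8,\,189/80,\,597/200$.

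Condition (ii-a) requires essentially no new work: the Remark closing the proof of Lemma \ref{cond ii-a} already records that the reduction of a consistent deviator's POMDP to the all-$f$ Markov chain holds for \emph{any} $k \geq 1$ under a uniform-splitting memoryless protocol, so every $g_i \in \mathcal{G}^{f^3}$ inherits the latency computed above. The substance lies in (ii-b), where I would reproduce the category partition of Lemma \ref{cond ii-b}: let $t_0$ be the first slot at which $r_i$ places positive probability on ``no transmission'', split into Categories (1)/(2) by whether $a_{t_0}\neq 0$ or $a_{t_0}=0$, and into (I)/(II) by whether further no-transmission slots occur. Category (1) collapses to the all-$f^3$ chain; for the rest one inducts on the number $m$ of players pending just before $t_0$, from $m=1$ to $m=5$, bounding the future latency via a recursion of the form (\ref{equation: 7}) and invoking the time-independence of $f^3$ exactly as in the $m=2,3,4$ steps for $k=2$. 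Corollary \ref{best response} then lifts the per-$m$ bounds to the unconditional inequality $C_i^{(f^3_{-i},f_i')}(\vv{h}_0) \geq C_i^{(f^3_{-i},f_i^3)}(\vv{h}_0)$.

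The main obstacle is this (ii-b) induction at $k=3$. Unlike the two-channel case, when the deviator abstains at $t_0$ the other $m-1$ players occupy all three channels and can produce up to \emph{three} simultaneous successes, so the recursion now carries four terms $P_{m-1}(0),\dots,P_{m-1}(3)$ and the chain can drop by up to three states at once; consequently each inductive step must compare $F_{i,m}$ against $F_{i,m-1}$, $F_{i,m-2}$ and $F_{i,m-3}$ together, and the delicate ``waiting never helps'' inequalities that held for $m\le 4$ when $k=2$ must be re-established for every $m$ up to $5$ with the heavier $k=3$ success probabilities. I expect these finitely many inequalities to hold, consistent with the theorem, but checking all of them --- in particular the tightest near-equality cases, analogous to the $m=2$ and $m=4$ lines of Lemma \ref{cond ii-b} --- is where the real computation resides.
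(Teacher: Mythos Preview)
Your proposal is correct and follows essentially the same approach the paper indicates: the paper explicitly omits this proof, stating that ``the proof idea is the same as that of Theorem \ref{thm: eq (n,2)}'' with the caveat that the latencies must be computed case-by-case for each $n$ since no closed form analogous to Lemma \ref{exp latency (n,2)} is available at $k=3$. You have accurately identified both the structure (Lemma \ref{eq char 2} via analogues of Lemmata \ref{cond ii-a}--\ref{cond ii-b}) and the chief new wrinkle, namely that at $k=3$ an abstaining deviator can see up to three simultaneous successes, so the (ii-b) recursion carries an extra term and the per-$m$ inequalities must be verified up to $m=5$.
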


We omit the proof because the proof idea is the same as that of Theorem \ref{thm: eq (n,2)}. However, the analysis here is done for each value of $n$ separately, since we do not have a closed form (similar to that of Lemma \ref{exp latency (n,2)}) for the expected latency of $n$ players using protocol $f^3$ for 3 channels. This is because, although using standard Markov chain techniques a linear recurrence relation of the expected latency is easily found, this recurrence relation has non-constant coefficients, for which - to our knowledge - there are no techniques in the literature to solve them\footnote{We note that reducing the recurrence relation to one with constant coefficients using already existing techniques did not work.}.

\section{Equilibria for Ternary Feedback Protocols}\label{section: ter}

In this section we consider anonymous protocols with \textit{ternary feedback}, that is, a pending player knows at every time $t$ the number $m \leq n$ of pending players. This knowledge is given to each player regardless of her transmission history.  

\subsection{Nash equilibrium characterization}\label{ter: history-dep char}

Here we give a characterization of FIN-EQ protocols for $n \geq 1$ players and $k=2$ channels in the general history-dependent case for ternary feedback.

\begin{theorem}
	There exists an anonymous, history-dependent, equilibrium protocol with ternary feedback for $n$ players and 2 transmission channels.
\end{theorem}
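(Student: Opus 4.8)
The plan is to construct the protocol explicitly by induction on the number $m$ of pending players, exploiting the fact that ternary feedback lets every player's decision rule depend on the currently observed $m$. As in the single-channel construction of Fiat et al.~\cite{FMN07}, the protocol at state $m$ will prescribe a common probability $p_m$ of attempting a transmission, split uniformly over the two channels; uniformity over the channels is forced by the same exchange argument as in Theorem~\ref{impossibility 1}, since any asymmetry between the channels would yield a profitable one-round deviation. The base case $m=1$ is trivial: the unique pending player transmits with probability $1$ and has latency $C_1=1$. Assuming the protocol, together with its finite equilibrium latencies $C_1,\dots,C_{m-1}$, has already been defined for all smaller states, I would determine $p_m$ from an indifference condition and thereby obtain $C_m$.

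The core step is to turn the one-round analysis at state $m$ into a solvable linear system. Fixing the common probability $p$ used by the other $m-1$ players, let $\beta_0(p),\beta_1(p),\beta_2(p)$ be the probabilities that exactly $0,1,2$ of those players transmit successfully (at most two, since $k=2$), and let $\gamma_0(p),\gamma_1(p)$ be the probabilities that the tagged player collides while $0$ respectively $1$ of the others succeed. Writing $W_m(p)$ and $T_m(p)$ for the tagged player's expected latency when she waits, respectively transmits, in this round and then reverts to the common protocol, and $V_m(p)=(1-p)W_m(p)+p\,T_m(p)$ for her value under the common strategy, I obtain
\begin{align*}
W_m(p) &= 1 + \beta_0(p)\,V_m(p) + \beta_1(p)\,C_{m-1} + \beta_2(p)\,C_{m-2}, \\
T_m(p) &= 1 + \gamma_0(p)\,V_m(p) + \gamma_1(p)\,C_{m-1},
\end{align*}
which, since $\beta_0(p)<1$, can be solved for $V_m(p)$ (hence for $W_m$ and $T_m$) in closed form in terms of the already-known $C_{m-1},C_{m-2}$. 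A common strategy $p_m$ is a best response to itself exactly when the indifference gap $\Phi_m(p)\triangleq W_m(p)-T_m(p)$ vanishes at an interior $p_m\in(0,1)$, or a boundary point is optimal; I would locate such a $p_m$ by analysing the sign of $\Phi_m$ at the endpoints and applying the intermediate value theorem, and then set $C_m\triangleq V_m(p_m)$, which is finite because the denominator $1-\beta_0(p_m)$ is bounded away from $0$.

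Finally, I would upgrade this per-state indifference into a genuine equilibrium of the whole game via a one-stage-deviation argument: since each player's expected latency is finite, no multi-round deviation can beat the protocol if no single-round deviation can, and the construction makes every one-round deviation (transmit versus wait, and either channel by symmetry) non-improving. The step I expect to be the main obstacle is guaranteeing that an admissible equilibrium $p_m$ exists at every state: one must rule out the degenerate root $p=0$ (under which nobody ever transmits and latency is infinite) and verify that $\Phi_m$ actually changes sign, or that the relevant endpoint is a true best response. This is precisely where the \emph{history-dependent} freedom becomes useful: if no single stationary $p_m$ simultaneously satisfies indifference and finiteness, I would instead let the decision rule at state $m$ depend on how many rounds the system has already spent in that state (a quantity each player can reconstruct from the ternary feedback), choosing a schedule of per-round probabilities whose induced continuation values still make waiting and transmitting indifferent. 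Checking that such a schedule can always be tuned to yield finite latency, and that its non-stationarity opens up no profitable deviation, is the technically delicate part of the argument.
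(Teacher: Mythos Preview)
Your proposal is correct and follows essentially the same route as the paper: force per-channel symmetry by the exchange argument, then write the one-round indifference between ``transmit'' and ``wait'' at state $m$ and solve for the common probability $p_m$ in terms of the already-computed continuation values $C_{m-1},C_{m-2}$. The paper's proof of this theorem is in fact sketchier than yours --- it only sets up the equations (\ref{F:transmission=1}) and (\ref{F:transmission=0}) and asserts that equating them determines $p_{m,t}$ --- and the rigorous existence/uniqueness argument you are worried about (showing that the indifference equation has a root in $(0,1/2]$, by analysing the sign and derivatives of the relevant function and inducting on $m$ from the base case $p_2=1/2$, $F_2=2$) is carried out only in the subsequent history-independent subsection (Theorem~\ref{ter: hist-ind}). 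Your fallback to a genuinely time-dependent schedule at state $m$ is therefore unnecessary: a stationary $p_m$ always exists, so the history-independent protocol already witnesses the theorem.
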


\begin{proof}
	Suppose $n \geq 2$ players use the same protocol $f$ in a system with 2 available transmission channels. At time $t$, the decision rule of a player with history $h_{t-1}$ among $m$ pending players is described by the probabilities with which she will transmit on channel 1 and channel 2, i.e. $p_{m,t}^{i,1}$ and $p_{m,t}^{i,2}$ respectively. These transmission probabilities, in general, depend on the history $h_t$ of the respective player, however $t$ is used instead as a subscript in order to make the notation lighter. 
	
	Also, suppose that the anonymous protocol $f$ is an equilibrium and also that $p_{m,t}^{i,1} \neq p_{m,t}^{i,2}$. Without loss of generality $p_{m,t}^{i,1} > p_{m,t}^{i,2}$. Then a player could unilaterally deviate at round $t$ and choose to transmit on channel 2 with probability 1, thus maximizing her probability of success. Therefore, in an anonymous, equilibrium protocol, for every history $h_{t-1}$ and every number $m \geq 2$ of pending players, each player assigns equal transmission probabilities to the channels. Hence we drop the channel indicator superscript - along with the player indicator superscript -  and write $p_{m,t}$. Note that $p_{m,t} \in [0,\frac{1}{2}]$.
	
	We will slightly abuse the notation here and write $C_{m}(h_{t})$ and $F_{m}(h_{t})$ for the \textit{expected cost} of a player (e.g. Alice) and the \textit{expected future cost} of a player respectively, at time $t \geq 0$, given history $h_t$, where there are $1 \leq m \leq n$ pending players. Note that, since the protocol is symmetric, we have replaced the subscript that indicates the player's identity with the one that indicates the number of pending players, and we also have omitted the superscript $f$.
	
	We have\footnote{The probabilities are correct by defining $0^0 = 1$.}
	\begin{align*}
	C_{m}(h_{t}) = P_{m}^{\times} \cdot (t+1) + P_{m}^{m-1} \cdot C_{m-1}(h_{t+1}) + P_{m}^{m-2} \cdot C_{m-2}(h_{t+1}) + P_{m}^{m} \cdot C_{m}(h_{t+1}) \\
	\text{or equivalently,} \quad F_{m}(h_{t}) = 1 + P_{m}^{m-1} \cdot F_{m-1}(h_{t+1}) + P_{m}^{m-2} \cdot F_{m-2}(h_{t+1}) + P_{m}^{m} \cdot F_{m}(h_{t+1}) ,
	\end{align*}	
	\begin{align*}
	\text{where for $m \geq 2$:} \quad P_{m}^{\times} =& \text{Pr}\{\text{Alice transmits successfully} \} \\
	=& 2p_{m,t}(1-p_{m,t})^{m-1} , \\ 
	P_{m}^{m-1} =& \text{Pr}\{\text{Exactly 1 player other than Alice transmits successfully} \} \\
	=& 2(m-1)p_{m,t}\left[ (1-p_{m,t})^{m-1} - (m-1)p_{m,t}(1 - 2p_{m,t})^{m-2} \right] , \\
	P_{m}^{m-2} =& \text{Pr}\{\text{Exactly 2 players other than Alice transmit successfully} \} \\
	=& (m-1)(m-2)p_{m,t}^2 (1 - 2p_{m,t})^{m-2} , \\
	P_{m}^{m} =& \text{Pr}\{\text{No player transmits successfully} \} = 1 - P_{m}^{\times} - P_{m}^{m-1} - P_{m}^{m-2} .
	\end{align*}
	For $m=1$ the pending player has probability of no transmission equal to zero, therefore $F_{1}(h_{t}) = 1$ for every history $h_t$.
	
	Now, given that $m \geq 2$ players are pending, the equilibrium protocol cannot assign to them probability $p_{m,t}=0$ at any time $t$. That is because a unilateral deviator that surely transmitted to a channel would be successful and therefore she would acquire strictly smaller latency than any other player. Since transmission to both channels is in the support of the decision rule of a player at time $t$, both sure transmission attempt to some channel and no transmission should yield the same expected latency to a player. In the sequel we will use the expected future latency $F_{m}(h_t)$ for our analysis. The expected future latency of Alice when she surely transmits on an arbitrary channel in round $t$ with $m \geq 2$ pending players (including herself) is
	\begin{align}\label{F:transmission=1}
	F_{m}(h_{t}) = 1 + Q_{m}^{m-1} \cdot F_{m-1}(h_{t+1}) + (1 - Q_{m}^{\times} - Q_{m}^{m-1}) \cdot F_{m}(h_{t+1}) ,
	\end{align}	
	\begin{align} \label{Q def}
	\text{where for $m \geq 3$:} \quad Q_{m}^{\times} =& \text{Pr}\{\text{Alice transmits successfully} \}  \nonumber \\
	=& (1-p_{m,t})^{m-1} , \nonumber \\ 
	Q_{m}^{m-1} =& \text{Pr}\{\text{Exactly 1 player other than Alice transmits successfully} \} \nonumber  \\
	=& (m-1)p_{m,t}\left[ (1-p_{m,t})^{m-2} - (1 - 2p_{m,t})^{m-2} \right] ,  \nonumber  \\
	Q_{m}^{m} =& \text{Pr}\{\text{No player transmits successfully} \} = 1 - Q_{m}^{\times} - Q_{m}^{m-1} , \nonumber \\
	\text{for $m=2$:} \quad Q_{2}^{\times} = & 1 - p_{m,t}, \quad  Q_{2}^{1} = 0, \quad  Q_{2}^{2} = p_{m,t} .
	\end{align}
	
	The expected future latency of Alice when she surely does not attempt transmission in round $t$ with $m \geq 2$ pending players (including herself) is
	\begin{align}\label{F:transmission=0}
	F_{m}(h_{t}) = 1 + S_{m}^{m-1} \cdot F_{m-1}(h_{t+1}) + S_{m}^{m-2} \cdot F_{m-2}(h_{t+1}) + (1 - S_{m}^{m-1} - S_{m}^{m-2}) \cdot F_{m}(h_{t+1}) ,
	\end{align}	
	\begin{align} \label{S def}
	\text{where for $m \geq 3$:} \quad S_{m}^{m-1} =& \text{Pr}\{\text{Exactly 1 player other than Alice transmits successfully} \}  \nonumber \\
	=& 2(m-1)p_{m,t}\left[ (1-p_{m,t})^{m-2} - (m-2)p_{m,t}(1 - 2p_{m,t})^{m-3} \right] , \nonumber  \\
	S_{m}^{m-2} =& \text{Pr}\{\text{Exactly 2 players other than Alice transmit successfully} \}  \nonumber \\
	=& (m-1)(m-2)p_{m,t}^2 (1 - 2p_{m,t})^{m-3} ,  \nonumber \\
	S_{m}^{m} =& \text{Pr}\{\text{No player transmits successfully} \} = 1 - S_{m}^{m-1} - S_{m}^{m-2} , \nonumber \\
	\text{for $m=2$:} \quad S_{2}^{1} = & 2p_{m,t}, \quad  S_{2}^{0} = 0, \quad  S_{2}^{2} = 1 - 2p_{m,t} .
	\end{align}
	
	By equating the right-hand sides of (\ref{F:transmission=1}) and (\ref{F:transmission=0}) we get the probability $p_{m,t}$ as a function of expected future costs $F_{m-1}(h_{t+1}), F_{m-2}(h_{t+1})$ and $F_{m}(h_{t+1})$. 
\end{proof}

The equilibrium probability that depends on the number of pending players $m$ and defines the equilibrium protocol, although guaranteed to exist when expected (future) latencies are finite, is difficult to be expressed in closed form. Contrary to the case of a single channel studied in \cite{FMN07}, where $p_{m,t}$ can be nicely expressed as a function of $F_{m-1}(h_{t+1})$ and $F_{m}(h_{t+1})$ in closed form, this does not seem to be the case in the current setting.

We should mention here that in the single-channel setting studied in \cite{FMN07} the decision rule $p_{m,t}=1$ for $m \geq 3$ is in equilibrium. However, in the case of two channels, a similar result (e.g. $p_{m,t}=1/2$) for any number of pending players does not seem to hold. Indeed, in time $t$ with $m \geq 5$ pending players playing $p_{m,t}=1/2$, the best response with strictly better expected latency is $p_{m,t}=0$.

\subsection{History-independent FIN-EQ protocols}\label{ter: history-indep char}

Let us now consider anonymous, history-independent protocols, that is, protocols whose decision rules depend only on the number $1 \leq m \leq n$ of pending players. Now, the decision rule $p_m$ of the players does not depend on their transmission history (and therefore on time as well), hence a player's expected future latency $F_m$ does not depend on her transmission history. In this class of protocols the following theorem fully characterizes the equilibria.

\begin{theorem}\label{ter: hist-ind}
	There exists a unique, anonymous, history-independent, equilibrium protocol with ternary feedback for $n$ players and $2$ transmission channels, which is: any player among $2 \leq m \leq n$ remaining players, for every $t \geq 1$ attempts transmission to each channel with equal probability $p_m$. This probability is $\Theta(\frac{1}{\sqrt{m}})$ and yields expected future latency $e^{\Theta(\sqrt{m})}$ for every player.
\end{theorem}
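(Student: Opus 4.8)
The plan is to reduce the entire statement to a single scalar recurrence and then analyse it. By the preceding theorem, in any anonymous equilibrium every pending player splits her transmission probability equally between the two channels, so a history-independent equilibrium is completely described by the vector $(p_m)_{2\le m\le n}$ of per-channel probabilities together with the induced expected future latencies $(F_m)$. The lone-player value is forced, $F_1=1$, and for $m\ge 2$ the equilibrium condition is exactly the indifference, already written out in (\ref{F:transmission=1}) and (\ref{F:transmission=0}), between surely transmitting and surely waiting while the other $m-1$ players use $p_m$. Eliminating $F_m$ from the two expressions yields one equation in the single unknown $p_m$, whose coefficients depend only on the previously determined values $F_{m-1}$ and, for $m\ge 3$, $F_{m-2}$. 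Thus everything follows by induction on $m$, once I show that this equation has a unique admissible root and then control its asymptotics.

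For existence and uniqueness I would fix $F_{m-1},F_{m-2}$ and study, as a function of $p\in(0,\tfrac12]$, the difference $D_m(p)$ between the waiting value (\ref{F:transmission=0}) and the transmitting value (\ref{F:transmission=1}), both evaluated with the other players using $p$. A boundary analysis gives the sign change: as $p\to 0^+$ almost nobody else transmits, so surely transmitting succeeds with probability tending to $1$ and is far cheaper than waiting, giving $D_m(p)>0$; at $p=\tfrac12$ the remark following the theorem shows (for the relevant $m$) that waiting is at least as good, giving $D_m(\tfrac12)\le 0$. Continuity together with a monotonicity argument for $D_m$ on $(0,\tfrac12]$ then pins down a unique root $p_m$, and substituting it back into (\ref{F:transmission=1}) defines $F_m$; the induction also carries the invariant $1\le F_{m-1}\le F_m<\infty$. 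Establishing strict monotonicity of $D_m$ (equivalently, that the marginal value of waiting crosses that of transmitting only once) is the technical content here and would rely on the monotonicity and convexity in $p$ of the success probabilities $Q_m^{\times},Q_m^{m-1},S_m^{m-1},S_m^{m-2}$.

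For the asymptotics I would work with the effective per-channel load $\lambda_m:=m\,p_m$ and the standard estimates $(1-p_m)^{m-1}=e^{-\lambda_m(1+o(1))}$, together with second-order expansions of $Q_m^{m-1}$, $S_m^{m-1}$ and $S_m^{m-2}$. Plugging the ansatz $p_m=\Theta(1/\sqrt m)$, i.e. $\lambda_m=\Theta(\sqrt m)$, into the transmit relation (\ref{F:transmission=1}), rewritten as $F_mQ_m^{\times}=1-Q_m^{m-1}(F_m-F_{m-1})$, yields a first-order recurrence of the shape $F_m-F_{m-1}=e^{\Theta(\sqrt m)}$, whose summation gives $F_m=e^{\Theta(\sqrt m)}$; feeding this back into the indifference equation shows that it can hold only when $\lambda_m=\Theta(\sqrt m)$, which is precisely $p_m=\Theta(1/\sqrt m)$ and closes the loop. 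To make this rigorous rather than heuristic I would run the induction with explicit two-sided bounds, postulating constants with $c_1/\sqrt m\le p_m\le c_2/\sqrt m$ and $e^{a\sqrt m}\le F_m\le e^{b\sqrt m}$, and verifying that the indifference equation preserves them in passing from $m-1$ to $m$.

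The main obstacle is this last, asymptotic step. The indifference equation balances the exponentially large latencies $F_{m-1},F_{m-2}$ against the exponentially small success probability $Q_m^{\times}$, and the quantity that actually determines $p_m$ is the near-cancellation $F_m-F_{m-1}$, which is of strictly lower order than $F_m$ itself. Extracting the clean scaling $p_m=\Theta(1/\sqrt m)$ therefore requires keeping the second-order terms in the binomial expansions and controlling the ratio $F_{m-2}/F_{m-1}=1-\Theta(1/\sqrt m)$ with enough precision; matching the constants $a,b,c_1,c_2$ so that the sandwich induction genuinely closes is where the real work lies, since naive leading-order approximations can produce spurious logarithmic corrections and must be handled with care.
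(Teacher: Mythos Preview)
Your overall plan --- induction on $m$ for uniqueness, then an ansatz-and-verify argument for the asymptotics --- is the same structure the paper uses, but your execution differs from the paper's in one key respect, and that difference is precisely what creates the obstacle you flag at the end.

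The paper does not work with the single indifference equation obtained by eliminating $F_m$ from (\ref{F:transmission=1}) and (\ref{F:transmission=0}) at level $m$ alone. Instead it combines those two relations at level $m$ with the transmit relation at level $m{-}1$, and solves for $F_m$ in closed form as a rational function of the transition probabilities $Q_m^{\cdot}, S_m^{\cdot}, Q_{m-1}^{\cdot}$ --- that is, of $p_m$ and $p_{m-1}$ only (equation~(\ref{F_m})). For uniqueness, the paper then sets $h(p_{m+1}) = (1-Q_{m+1}^{m+1})F_{m+1} - 1 - Q_{m+1}^m F_m$ with $F_{m+1}$ taken from (\ref{F_m}), and shows via a first/second-derivative analysis that $h(0)=0$, $h$ has a unique interior minimum, and $h(1/2)\ge 0$; this pins down one root in $(0,1/2]$. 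Your boundary claim $D_m(1/2)\le 0$, by contrast, rests on the remark following the theorem, which is stated only for $m\ge 5$; direct computation gives $D_3(1/2)>0$ (transmitting strictly beats waiting), so for $m\in\{3,4\}$ your sign-change scheme does not locate the root, which in fact sits at the corner $p_m=1/2$. This is fixable by treating small $m$ separately, but it is a real gap in the argument as written, and the monotonicity of $D_m$ that you defer is not obviously easier than the paper's derivative analysis of $h$.

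The more significant difference is in the asymptotic step. Because the paper has $F_m$ written explicitly as a ratio of transition probabilities, it can substitute $p_m\in\Theta(1/\sqrt m)$ directly into (\ref{F_m}) and read off $F_m = 1/\bigl(Q_m^{\times}\cdot h_1(m)\bigr)$ with $h_1\in\Theta(\sqrt m)$ and $Q_m^{\times}\in e^{-\Theta(\sqrt m)}$, hence $F_m\in e^{\Theta(\sqrt m)}$ immediately; it then merely verifies consistency with the recurrence (\ref{recur 1}). The difference $F_m-F_{m-1}$ never enters, so the near-cancellation you correctly identify as your main obstacle is simply absent. Your route through $F_m-F_{m-1}$ and a two-sided sandwich induction is not wrong in principle, but it is genuinely harder for exactly the reason you describe; the closed form (\ref{F_m}) is precisely the device the paper uses to sidestep it.
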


\begin{proof}
	By manipulating the equilibrium conditions (\ref{F:transmission=1}) and (\ref{F:transmission=0}) we find 
	
	\begin{align}\label{F_m}
	F_m = \frac{\left[ Q_{m-1}^{m-2} S_{m}^{m-1} + S_{m}^{m-2} (1 - Q_{m-1}^{m-1}) \right] - Q_{m}^{m-1} (Q_{m-1}^{m-2} - S_{m}^{m-2})}{(1 - Q_{m}^{m}) \left[ Q_{m-1}^{m-2} S_{m}^{m-1} + S_{m}^{m-2} (1 - Q_{m-1}^{m-1}) \right] - Q_{m-1}^{m-2} Q_{m}^{m-1} (1 - S_{m}^{m})} . 
	\end{align}
	
	From this we can also get $F_{m-1}$, thus, replacing these two in relation (\ref{F:transmission=1}), which, in the history-independent case becomes 
	\begin{align}\label{recur 1}
	(1 - Q_{m}^{m}) F_m = 1 + Q_{m}^{m-1} F_{m-1} ,
	\end{align}
	we get the recurrence relation for the transmission probability $p_{m}$ to each channel. The resulting recurrence relation of $p_m$ is non-linear with non-constant coefficients and for its form there is no methodology in the literature that solves it - to the authors' knowledge. However, we can find the asymptotic behaviour of $p_{m}$ in the following way. 
	
	First, we show by induction that $p_m$ is uniquely determined. The recurrence relation of $p_m$ holds for $m \geq 2$ since our probabilities $Q$ and $S$ are defined for this domain only. That is because probabilities $Q$ and $S$ stem from the requirement that ``transmission" and ``no transmission" are both in the support of the decision rule for a player, which is not true in the case of $m=1$. As a base case of our induction we use $m=2$, for which we find from (\ref{F:transmission=1}) and (\ref{F:transmission=0}) as unique solution the pair ($p_2 = 1/2$, $F_2 = 2$). Now consider some $m \geq 2$ and assume that all $p_{m'}$ are uniquely determined for every $m'$, $2 \leq m' \leq m$, and thus all $F_{m'}$ are uniquely determined by (\ref{F_m}). Let us replace $m$ with $m+1$ in (\ref{recur 1}), and fix $p_{m}$ and $F_{m}$ with the known ones. This gives us a rational univariate function - let us call it $h$ - of $p_{m+1}$, i.e. $h(p_{m+1}) = (1 - Q_{m+1}^{m+1}) F_{m+1} - 1 - Q_{m+1}^{m} F_{m}$. We would like to find the roots of $h$ in the interval $(0,1/2]$. By substituting $Q_{m+1}^{m+1}$, $Q_{m+1}^{m}$ and $F_m$ from (\ref{Q def}) and (\ref{F_m}) respectively, and then examining the first and second derivative of $h$, we can see that $h(0)=0$, $h$ has its unique minimum for some $p_{m+1}' \in (0,1/2)$, and it is strictly decreasing in $[0, p_{m+1}']$. In $[p_{m+1}',1/2]$ it is strictly increasing and $h(1/2) \geq 0$. Therefore, in $(0,1/2]$ there is a unique root $p_{m+1}^*$ of $h$. 
	
	Now we proceed in showing that the asymptotic behaviour in both sides of the recurrence relation (\ref{recur 1}) is the same for $p_m \in \Theta (1/\sqrt{m})$. First, we express the probabilities $Q$ and $S$ (see sets of equations (\ref{Q def}) and (\ref{S def})) in terms of $Q_{m}^{\times}$, and then we put $p_{m,t} = p_m \in \Theta (1/\sqrt{m})$. This gives: 
	%
	\begin{align*}
	& Q_{m}^{\times} \in e^{- \Theta (\sqrt{m})} , \quad  Q_{m}^{m-1} = Q_{m}^{\times} \cdot f_{1}(m), \quad  Q_{m}^{m-2} = 0, \quad  Q_{m}^{m} = 1 - Q_{m}^{\times} \cdot f_{2}(m) , \quad  \text{and} \\
	& S_{m}^{\times} = 0 , \quad  S_{m}^{m-1} = Q_{m}^{\times} \cdot g_{1}(m) , \quad  S_{m}^{m-2} = \left( Q_{m}^{\times} \right)^2 \cdot g_{2}(m), \quad  S_{m}^{m} = 1 - Q_{m}^{\times} \cdot g_{3}(m) , 
	\end{align*}
	where the functions $f_{1}(m), f_{2}(m), g_{1}(m), g_{3}(m)$ are in $\Theta(\sqrt{m})$ and $g_{2}(m)$ is in $\Theta(m)$. Now that we have described the asymptotic behaviour of the probabilities $Q$ and $S$, we can find the asymptotic behaviour of the expected future latency $F_m$ using (\ref{F_m}). By carefully simplifying the numerator and denominator in the right-hand side of (\ref{F_m}) we get 
	\begin{align*}
	F_m = \frac{1}{ Q_{m}^{\times} \cdot h_{1}(m) } , \quad \text{where } h_{1}(m) \in \Theta(\sqrt{m}).
	\end{align*}
	Recall that $Q_{m}^{\times} \in e^{- \Theta (\sqrt{m})}$, thus $F_m \in e^{\Theta (\sqrt{m})}$. The above formula for $F_m$ also implies that $F_{m-1} = 1/\left( Q_{m-1}^{\times} \cdot h_{2}(m) \right)$, where $h_{2}(m) \in \Theta(\sqrt{m})$. By substituting $F_m$ and $F_{m-1}$ in the recurrence relation (\ref{recur 1}), we show that the asymptotic behaviour in both sides of it are the same, in particular, $\Theta(1)$. This completes the proof.
\end{proof}

The latter result is analogous to the one in \cite{FMN07} that characterizes anonymous, history-independent, equilibrium protocols with ternary feedback for the case of a single channel. However here, the proof methodology is different due to the fact that there is no known technique to express the equilibrium transmission probabilities in closed form, therefore their asymptotic behaviour can only be extracted from a recurrence relation, which, contrary to the one in \cite{FMN07}, is quite complex. Using dynamic programming, we can compute the equilibrium probabilities in linear time; for up to $m=100$ the equilibrium probabilities are presented in Figure \ref{fig-bounds}.

\begin{figure}[h]
	\begin{center}
		\includegraphics[scale=0.52]{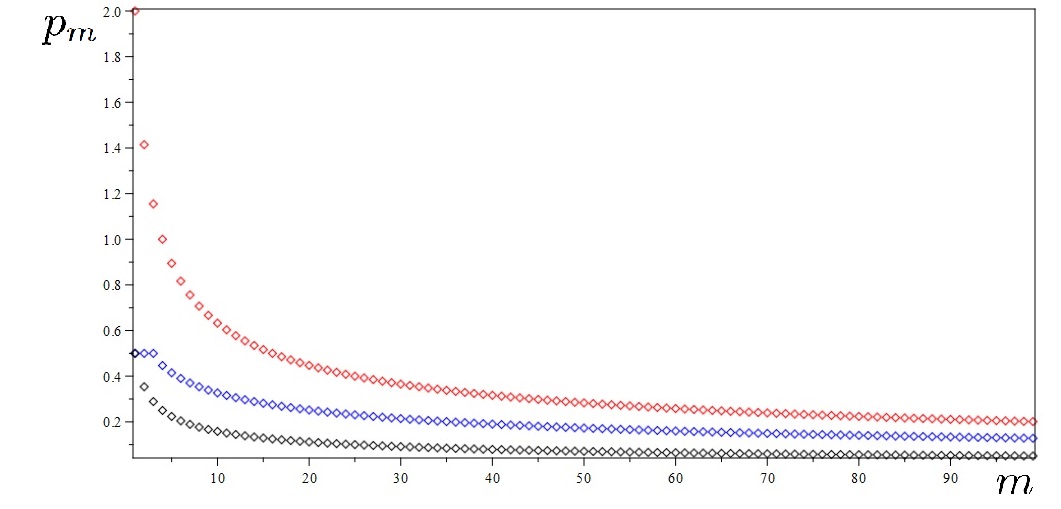}
	\end{center}
	\caption{Blue: the equilibrium probabilities $p_m$ for $2 \leq m \leq 100$. Red: experimental upper bound, function $\frac{2}{\sqrt{m-1}}$. Black: experimental lower bound, function $\frac{1}{2\sqrt{m-1}}$.}\label{fig-bounds}
\end{figure}

\section{IN-EQ Protocols for Both Feedback Classes}\label{eff prot}

Ideally, we would like to find an anonymous, equilibrium protocol that is efficient (i.e. the time until all players transmit successfully is $\Theta(n/k)$ with high probability) and also has finite expected latency. For the case of ternary feedback and a single channel such a protocol was found in \cite{FMN07}. That protocol set a deadline $t_0 \propto n$ after which it prescribed to the players to transmit with probability 1 on the channel at every time. It is easy to see that transmitting surely at every time is an equilibrium for more than 2 players. The idea of that protocol was to employ that ``bad equilibrium'' by putting it after the deadline so as to keep the players that were unsuccessful until $t_0$ for a very long (exponential in $n$) time. This works as a threat to the players, which they try to avoid by adopting a cooperative behaviour; using a history-dependent, equilibrium protocol they attempt transmission with probability low enough so that all of them are successful before the deadline with high probability. After the long part of the protocol, there is a last part that prescribes to the players to use a history-independent, equilibrium protocol (similar to the one we find for the 2-channel case) which has finite expected future latency. Since all three parts of the protocol are in equilibrium, the whole protocol is in equilibrium as well.

However, for the case of multiple channels in both the ternary feedback and acknowledgement-based feedback classes, a protocol with the above structure cannot be constructed as the following theorem shows.
First, let us define the following notion of equilibrium protocol:
By \textit{equilibrium with blocking step (EBS)} we call an anonymous, equilibrium protocol with the property that there exists a time-step ($<\infty$) of the protocol in which every pending player has probability of successful transmission equal to 0.

\begin{theorem}\label{thm: impos}
	In the classes of acknowledgement-based and ternary feedback protocols with $k \geq 2$ channels and $n \geq 2$ players, there exists no equilibrium protocol with blocking step (EBS) and finite expected latency.
\end{theorem}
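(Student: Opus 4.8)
The plan is to derive a contradiction by exhibiting a profitable unilateral deviation at the purported blocking step, exploiting that with $k \geq 2$ channels a blocked configuration must leave at least one channel empty. Suppose $f$ is an anonymous equilibrium with a blocking step at some finite time $t^*$, and let $m$ denote the number of players pending there. The crux is that anonymity lets me replace the (a priori arbitrary) history-induced configuration at $t^*$ by a fully symmetric one, at which the requirement ``every pending player has zero probability of successful transmission'' becomes rigid.

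First I would isolate a convenient positive-probability history. Running $f$ from the start, every player sharing a personal history plays the same mixed action; I would follow the realization in which, at each step up to $t^*$, all $n$ players happen to pick the \emph{same} action. Since a common action among $m \geq 2$ players yields either a joint collision or joint silence, nobody ever transmits successfully, so all $n$ players remain pending and retain \emph{identical} personal histories (and see identical ternary feedback) up to $t^*$; this realization has positive probability and is consistent with $f$. At $t^*$ the blocking property forces each player's success probability to be $0$, and by anonymity all $n$ players use one common distribution $(p^0, p^1, \dots, p^k)$. Symmetry then gives the common success probability $\sum_{a=1}^{k} p^a (1-p^a)^{n-1}$, which vanishes only if every $p^a \in \{0,1\}$ (here $n \geq 2$ is essential, so that $(1-p^a)^{n-1}=0$ iff $p^a=1$). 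As $\sum_a p^a \le 1$, at most one channel carries probability $1$, hence at least $k-1 \ge 1$ channels receive probability $0$ from every player.

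I would then close the argument with the deviation: a pending player who instead transmits with probability $1$ on such an empty channel at $t^*$ succeeds with certainty and attains latency $t^*$, strictly below her latency under $f$ (which exceeds $t^*$ precisely because her prescribed success probability at $t^*$ is $0$). This contradicts the best-response condition of the equilibrium at the reachable history just before $t^*$, and the same argument applies verbatim to both the acknowledgement-based and the ternary-feedback classes. The step I expect to be the main obstacle is justifying the reduction to the symmetric configuration: a priori, an asymmetric, history-induced configuration could block every channel by placing at least two deterministic transmitters on each of the $k$ channels (which needs $n \ge 2k$ and would survive a single deviation). The symmetric realization above is exactly what rules this out, since anonymity forces all players with equal histories onto a common channel and thereby re-creates an empty channel; making this reachability claim airtight, and pinning down that the blocking property must hold at \emph{this} history, is the delicate point. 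I note that the resulting contradiction never uses finiteness of the latency, so the argument in fact forbids \emph{any} anonymous equilibrium with a blocking step for $k \ge 2$, with the finite-latency statement as a special case.
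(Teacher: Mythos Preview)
Your proof is correct and uses the paper's key idea: pass to the positive-probability realization in which all players have taken identical actions up to the blocking step, so that anonymity forces a single common mixed action at $t^*$, and the blocking condition then pins every channel probability to $\{0,1\}$, leaving at least $k-1\geq 1$ channels empty. The paper reaches this same symmetric configuration more indirectly: it first argues, invoking finite expected latency, that at \emph{every} reachable history each channel must be surely blocked by at least two deterministic transmitters, and only then exhibits the symmetric history as a reachable counterexample; your direct computation at the symmetric history is cleaner and avoids that detour.

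On your closing remark: you are right that in \emph{your} version finiteness is never invoked, since the deviation at the symmetric history succeeds with certainty and hence $t^* < C_i^{f}(\vv{h}_{t^*-1})$ regardless of whether the right-hand side is finite. The paper's proof does use finiteness, but only for its indirect first step (a deviation to a channel that is merely free with positive probability may fail and then inherit a possibly infinite tail). Be aware, though, that your stronger conclusion collides with the paper's Section~\ref{eff prot} deadline protocols $g_1$ and $r$, which have blocking steps and are asserted to be equilibria precisely because their expected latency is infinite; this indicates that the paper's operative equilibrium notion for infinite-latency protocols is effectively weaker (checking only the unconditional expected latency at $\vv{h}_0$) than its formal ``for any transmission history'' definition, and under that weaker reading your finiteness-free extension would not go through.
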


\begin{proof}
	Assume for the sake of contradiction, that $f$ is an anonymous equilibrium protocol with finite expected latency and it has a blocking step. Suppose all $n$ players follow $f$, therefore the protocol profile is $\vv{f} = (f_1,f_2,\dots, f_n) = (f,f,\dots,f)$. Also, suppose that the blocking step is at $t = t_0$, which means that in any combination of personal histories $\vv{h}_{t_0} = (h_{1, t_0}, h_{2, t_0}, \dots, h_{m, t_0})$ of the $m \leq n$ pending players which happens with positive probability under $\vv{f}$, no player transmits successfully. Additionally, since $f$ is an equilibrium protocol, at $t_0$ the probability that some channel is ``free'' is 0, because if not, a player could deviate unilaterally at $t_0$ by choosing that channel with some positive probability, and thus improve her expected latency.
	
	Consider the set $H$ that contains all combinations of personal histories \newline $\vv{h}_{t_0 -1} = (h_{1, t_0 -1}, h_{2, t_0 -1}, \dots, h_{m, t_0 -1})$. Consider also the subset $H_{>}$ of $H$ that contains all combinations that happen with positive probability, and the subset $H_{0} = H \setminus H_{>}$ containing those that happen with probability 0. 
	For the reasons explained above, any combination $\vv{h}_{t_0 -1}' \in H_{>}$ is characterized by the property that the combination of decision rules of the $m$ pending players at $t_0$ that it produces necessarily has at least 2 players assigning probability 1 on each channel (so that every channel is surely blocked and no player can deviate unilaterally). Any combination $\vv{h}_{t_0 -1}'$ that does not have this property must be in $H_{0}$, otherwise a  player that under $f$ would assign probability 1 on a surely blocked channel at $t_0$, she could unilaterally deviate by assigning at $t_0$ an appropriate positive probability on a channel that is ``free'' with positive probability, and decreasing her expected latency.
	
	Now pick an arbitrary element $\vv{h}_{t_0 -1}'$ of $H_{>}$, and without loss of generality, suppose that players $1,2,\dots, 2k$ block all $k$ channels with probability 1 at $t_0$. That is
	\begin{align}\label{blocking_players}
	f_{1}(h_{1,t_0 -1}) &= \left( \text{Pr}\{X_{1, t_0} = 1 \} = 1, \text{ } \text{Pr}\{X_{1, t_0} \neq 1 \} = 0 \right)  ,  \\
	f_{2}(h_{2,t_0 -1}) &= \left( \text{Pr}\{X_{2, t_0} = 1 \} = 1, \text{ } \text{Pr}\{X_{2, t_0} \neq 1 \} = 0 \right)  , \nonumber \\
	f_{3}(h_{3,t_0 -1}) &= \left( \text{Pr}\{X_{3, t_0} = 2 \} = 1, \text{ } \text{Pr}\{X_{3, t_0} \neq 2 \} = 0 \right)  ,\nonumber  \\
	f_{4}(h_{4,t_0 -1}) &= \left( \text{Pr}\{X_{4, t_0} = 2 \} = 1, \text{ } \text{Pr}\{X_{4, t_0} \neq 2 \} = 0 \right)  , \nonumber \\
		\vdots \nonumber\\
	f_{2k-1}(h_{2k-1,t_0 -1}) &= \left( \text{Pr}\{X_{2k-1, t_0} = k \} = 1, \text{ } \text{Pr}\{X_{2k-1, t_0} \neq k \} = 0 \right)  , \nonumber \\
	f_{2k}(h_{2k,t_0 -1}) &= \left( \text{Pr}\{X_{2k, t_0} = k \} = 1, \text{ } \text{Pr}\{X_{2k, t_0} \neq k \} = 0 \right).  \nonumber
	\end{align}
	
	Consider now that the event $\vv{h}_{t_0 -1}^*$ where all players have the same history $h_{1,t_0 -1}$ right before $t_0$, and observe that happens with positive probability, because the game starts with all players having the exact same history, i.e. the empty history, and can continue having the same history by transmitting to the exact same channels for every $t \leq t_0 -1$ since they will be prescribed identical decision rules by protocol $f$. Therefore $\vv{h}_{t_0 -1}^* \in H_{>}$. In such an event, all players, just like player 1 in (\ref{blocking_players}) above, will transmit with probability 1 on channel 1 at $t_0$. Therefore the remaining $k-1$ channels will be ``free'' at $t_0$, therefore $\vv{h}_{t_0 -1}^* \in H_{0}$ which is a contradiction.
	
	In the above analysis, the arguments about a player being able to unilaterally deviate and decrease her expected latency, need the extra property that the expected latency of the player is finite. Because if the expected latency is infinite, unilateral deviation does not make it finite, therefore she has no incentive to deviate. The proof is complete.
\end{proof}

This impossibility result discourages the search for anonymous, efficient, multiple-channel, equilibrium protocols with the additional property of finite expected latency, since the only candidate that guarantees efficiency seems to be a deadline protocol. Whether no anonymous, efficient, equilibrium protocol with finite expected latency can be found for multiple channels is one of the most interesting open problems that stem from this work.

In the rest of this section we present IN-EQ protocols within the classes of acknowledgement-based and ternary feedback for the general case of $k\geq 1$ channels and any number of $n \geq 2k + 1$ players. For this, we employ the deadline idea introduced in \cite{FMN07} and consequently used in \cite{CGNRS16, CGNRS17}. Our protocols are efficient, even though the expected latency is infinite.

\subsection{Acknowledgemnt-based feedback}\label{ack: eff prot}

%

In the aforementioned companion short paper \cite{CMS18} we provided an efficient deadline protocol with infinite expected latency for $k \geq 1$ channels and $n \geq 2k+1$ players. This protocol generalizes the efficient protocol of \cite{CGNRS16} which deals with a single channel and at least 3 players. The general protocol we present uses their idea, that is, estimating the number of pending players (since it is not known in the acknowledgement-based environment) and adjusting the transmission probabilities of the players accordingly, in order to simulate a socially optimal protocol (see protocol SOP below) that allows all transmission to be successful by time $\Theta(n/k)$ with high probability. Our modification is that, instead of prescribing to the players to always transmit to the single channel once they reach the deadline (so that with some positive probability they get blocked forever), we block all channels with positive probability by prescribing a random assignment of each player to a channel. 

In particular, consider $k \geq 1$ transmission channels, $n \geq 2k + 1$ players, a fixed constant $\beta \in (0,1)$ and a deadline $t_0$ to be determined consequently. The $t_0 - 1$ time steps are partitioned into $r+1$ consecutive intervals $I_1, I_2,\dots, I_{r+1}$ where $r$ is the unique integer in $\left[ -\log_{\beta}n/2 - 1, -\log_{\beta}n/2 \right]$. For any $j \in \{1,2,\dots, r+1\}$ define $n_j = \beta^{j}n/k$. For $j \in \{1,2,\dots, r\}$ the length of interval $I_j$ is $l_j = \lfloor\frac{e}{\beta}n_j\rfloor$. Interval $I_{r+1}$ is special and has length $l_{r+1} = n/k$. We define the following anonymous protocol.

\begin{tcolorbox}
	\textbf{\underline{Protocol $\mathbf{g_1}$}}: \nonumber \\ 
	\text{Every player among $1 \leq m \leq n$ pending players for $t \in I_j$ assigns transmission probability} \\
	\text{$1/\max\{n_j, k\}$ to each channel. Right before the deadline $t_0 = 1 + \sum_{j=1}^{r+1}l_j$ each pending}   \\
	\text{player is assigned to a random channel equiprobably, and for $t \geq t_0$ always attempts}\\
	\text{transmission to that channel.}
\end{tcolorbox}

\begin{lemma}
	Protocol $g_1$ for $n \geq 2k + 1$ players and $k \geq 1$ channels, is an equilibrium protocol and it is also efficient.
\end{lemma}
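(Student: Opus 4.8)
The plan is to prove the two assertions by separate arguments: efficiency is a purely probabilistic statement about the behaviour of $g_1$ before the deadline, whereas the equilibrium property is forced by the blocking threat installed for $t \ge t_0$, of exactly the kind whose coexistence with \emph{finite} expected latency was excluded in Theorem~\ref{thm: impos}. I would treat efficiency first and the equilibrium second.

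For efficiency I would begin by checking that the deadline has the claimed order. Summing the interval lengths and using $\sum_{j\ge 1}\beta^{j}=\frac{\beta}{1-\beta}$ together with $\beta^{r}=\Theta(n^{-1/2})$ gives $t_0=1+\sum_{j=1}^{r}\lfloor\tfrac{e}{\beta}n_j\rfloor+\tfrac{n}{k}=\Theta(n/k)$, with a leading constant $\tfrac{e}{1-\beta}+1$ that does not depend on $k$. It therefore suffices to show that, with probability tending to $1$, every player has transmitted successfully strictly before $t_0$. I would establish this by an inductive, per-phase concentration argument maintaining the invariant that, with high probability, the number of pending players at the start of interval $I_j$ is at most $k\,n_{j-1}=\beta^{\,j-1}n$ (with $n_0=n/k$, so the base case is the trivial bound $n$). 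Inside $I_j$ the decision rule makes each round a balls-into-bins experiment with per-channel probability $1/\max\{n_j,k\}$; the expected per-channel load is the constant $k/\beta$, so by the balls-and-bins analysis (Lemma~\ref{balls-bins}) the expected number of successful channels per round is a positive constant depending on $\beta$ and $k$. The length $l_j=\lfloor\tfrac{e}{\beta}n_j\rfloor$ is calibrated---this is the role of the factor $e/\beta$, with $\beta$ taken close enough to $1$ for the given $k$---so that the expected number of successes during $I_j$ exceeds the batch $(1-\beta)\beta^{\,j-1}n$ that must be cleared; a Chernoff/Azuma bound on the sum of per-round successes then gives the reduction with high probability, and a union bound over the $r+1=O(\log n)$ phases preserves it. The special final interval $I_{r+1}$, of length $n/k$, has ample room to clear the residual $O(\sqrt n)$ players with high probability, completing the efficiency claim.

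For the equilibrium I would use the hypothesis $n\ge 2k+1$ to exhibit an unavoidable blocking event and conclude that \emph{every} strategy of a deviator yields infinite expected latency. Fix a player $i$ and an arbitrary protocol $f_i'$. Since the other $n-1\ge 2k$ players run the memoryless $g_1$ and, in the acknowledgement-based model, never observe $i$, their channel choices in a round depend only on $t$; hence there is a fixed positive probability that the $n-1$ of them split with at least two players on each channel. In that configuration every channel already carries a collision among the others, so (i) none of them succeeds and they all remain pending, and (ii) $i$ has no free channel and cannot succeed, whatever she plays. Multiplying these positive probabilities over the $t_0-1$ pre-deadline rounds yields a positive-probability event in which all $n-1$ other players reach the deadline still pending and cover every channel; after the deadline they are each assigned to a uniformly random channel and transmit there forever, and with positive probability this assignment again covers all $k$ channels, so $i$ is blocked for all time. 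As this event depends only on the non-deviators' randomness, it has positive probability conditioned on any pending history $\vv{h}_t$ of $i$ and under any $f_i'$; therefore $C_i^{((g_1)_{-i},f_i')}(\vv{h}_t)=\infty$. Since playing $g_1$ also gives $i$ infinite expected latency, the best-response inequality holds with both sides infinite for every history, and $g_1$ is an equilibrium.

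The genuinely delicate part is the efficiency half: converting the expected geometric decay of the pending count into a guarantee that holds with high probability \emph{simultaneously} across all $O(\log n)$ phases. The obstacles are that the per-round successes on different channels are not independent (the balls-into-bins dependence accounted for by Lemma~\ref{balls-bins}), that the concentration estimate must be made uniform over the phases via the union bound without the failure probabilities accumulating, and that the bookkeeping must switch cleanly at the phase where $n_j$ drops below $k$, so that $1/\max\{n_j,k\}$ changes from $1/n_j$ to $1/k$. By contrast the equilibrium half is conceptually simple once one notices that $n\ge 2k+1$ is exactly the number of non-deviators needed to keep all $k$ channels blocked, and that the infinite expected latency---permissible precisely because Theorem~\ref{thm: impos} forbids only the finite-latency case---renders every deviation weakly optimal.
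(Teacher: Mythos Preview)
Your proposal is correct and follows the same two-part structure as the paper. On the equilibrium half, your argument is in fact more careful than the paper's: you construct the positive-probability blocking event using only the $n-1\ge 2k$ non-deviators' randomness (so that it is manifestly unaffected by any choice of $f_i'$), whereas the paper first bounds the probability that \emph{all} $n$ players, including $i$, reach the deadline and then simply asserts that ``by any unilateral deviation of $i$ she cannot make the aforementioned event empty.'' Your formulation makes the quantification over deviations transparent, and the observation that in the acknowledgement-based model the non-deviators' behaviour is independent of $i$'s history is exactly what is needed to push the argument through for every pending history $\vv{h}_t$. On the efficiency half, the paper does not give a proof at all: it states that the argument is essentially that of Theorem~11 in \cite{CGNRS16} with $n/k$ in place of $n$ and omits the details. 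Your per-phase invariant plus Chernoff/union-bound sketch is the natural reconstruction of that omitted proof; just be aware that the calibration of the constants (in particular how $\beta$ interacts with $k$ so that the expected clearance in $I_j$ exceeds the required batch) is the place where the bookkeeping has to be done carefully.
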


\begin{proof}
	First we prove that $g_1$ is an equilibrium protocol when $n \geq 2k + 1$. Consider an arbitrary player $i$, and observe that since all players play $g_1$ the probability that all of them will be still pending by $t_0$ is $1/n^{t_0} > 0$. Given that, the probability that player $i$ at $t_0$ will be assigned to the same channel with at least 2 other players is at least the probability that she will be assigned to the same channel with all other players, which is at least $1/k^{n} > 0$. Hence, the probability that player $i$ can find herself in $t=t_0$ pending together with two other players is positive, and in this case she will remain pending forever. Therefore, $i$'s expected latency is $\infty$, and since by any unilateral deviation of $i$ she cannot make the aforementioned event empty, her expected latency will always be $\infty$. Therefore, $g_1$ is an equilibrium protocol.
	
	Now we proceed by showing that $g_1$ is also efficient, that is, all players transmit successfully by time $t_0 = 1 + \sum_{j=1}^{r+1}l_j \in \Theta(n/k)$. The proof of efficiency is essentially the same as that of Theorem 11 in \cite{CGNRS16} and it is omitted. The difference here is that we have tuned $n_j$ and $l_{r+1}$ according to our problem and we have used variable $r$ instead of $k$. As a consequence, this result is the same as that of the aforementioned theorem in \cite{CGNRS16}, except that ours has $n/k$ instead of $n$.
\end{proof}

%
%

\subsection{Ternary feedback}\label{ter: eff prot}

In the ternary feedback setting, the use of the unique history-independent equilibrium (see Subsection \ref{ter: history-indep char}) yields exponential expected latency in the number of players $n$, and additionally, even one player's latency being any polynomial in $n$ happens with exponentially small probability. This fact points to history-dependent protocols as candidates for efficient equilibria. Here, we construct a protocol (Theorem \ref{ter: efficient}) which imposes a heavy cost on any player that does not manage to transmit successfully until a certain deadline-round. This forces any potential deviator to play ``fairly'' until the deadline and follow an anonymous, socially optimal protocol, named $SOP$ (guarantees expected time $\Theta(n/k)$ for all players to pass). 

To prove the main theorem of this subsection we need a series of technical results.
As a first step, we give the general Lemma \ref{balls-bins} that determines the expected number of successful transmissions in a round where $m$ players have a uniform distribution on the channels, and subsequently in Fact \ref{fact max exp} we find the maximum of that expected number. Then, we present another lemma (Lemma \ref{claim:finish-time(n,n)}) that gives an upper bound on the expected finishing time when $m \leq k$. Finally, using all the aforementioned intermediate results, we present a socially optimal protocol in Lemma \ref{opt prot} which is employed in the proof for our main theorem (Theorem \ref{ter: efficient}) that concerns our IN-EQ protocol.

\begin{lemma}\label{balls-bins}
	Consider a single round with $k \geq 1$ channels and $n \geq 1$ players. Assume that for every player the probability of transmission attempt is $z \in [0,1]$ which she splits equally to all $k$ channels. Then, the expected number\footnote{We define $0^0 = 1.$} of players that transmit successfully is $z n \left( 1 - \frac{z}{k} \right)^{n-1}$.
\end{lemma}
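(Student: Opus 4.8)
The plan is to reduce the whole computation to a single application of linearity of expectation, while along the way extracting the full distribution of the number of successful players, since that distribution (equation~(\ref{prob P_n(x)})) is invoked elsewhere in the paper. Let $X$ denote the number of players that transmit successfully in the round, where each player independently does not transmit with probability $1-z$ and transmits on each fixed channel with probability $z/k$; a player is \emph{successful} exactly when she is the unique transmitter on her chosen channel.

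First I would fix a set $S$ of $r$ players and compute the probability that every player in $S$ is successful. This forces the players of $S$ onto $r$ \emph{distinct} channels: choosing and injectively assigning those channels contributes $\binom{k}{r}r!$, each designated player hits her channel with probability $z/k$, and each of the remaining $n-r$ players must avoid all $r$ occupied channels, which (combining ``no transmission'' with ``transmission to one of the other $k-r$ channels'') happens with probability $1-z+z\frac{k-r}{k}=1-\frac{rz}{k}$. Summing over all $\binom{n}{r}$ choices of $S$ gives the $r$-th binomial moment
\begin{align}\label{eq:binmom}
\mathbb{E}\!\left[\binom{X}{r}\right]=\sum_{|S|=r}\Pr[\text{all of }S\text{ successful}]=\binom{k}{r}\binom{n}{r}\,r!\left(\frac{z}{k}\right)^{r}\left(1-\frac{rz}{k}\right)^{n-r},
\end{align}
the middle equality being the standard identity $\sum_{|S|=r}\mathbbm{1}[S\text{ successful}]=\binom{X}{r}$.

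From here two things follow. Inverting the binomial moments by the classical factorial-moment identity $\Pr[X=x]=\sum_{r\ge x}(-1)^{r-x}\binom{r}{x}\mathbb{E}\!\left[\binom{X}{r}\right]$ and substituting \eqref{eq:binmom} yields the exact distribution
\begin{align}\label{prob P_n(x)}
P_n(x)=\sum_{r=x}^{n}(-1)^{r-x}\binom{r}{x}\binom{k}{r}\binom{n}{r}\,r!\left(\frac{z}{k}\right)^{r}\left(1-\frac{rz}{k}\right)^{n-r},
\end{align}
which specializes to the stated formula at $k=2$, $z=1$. For the lemma itself I only need the case $r=1$: since $\mathbb{E}[X]=\mathbb{E}\!\left[\binom{X}{1}\right]$, equation~\eqref{eq:binmom} with $r=1$ gives directly $\mathbb{E}[X]=k\cdot n\cdot\frac{z}{k}\left(1-\frac{z}{k}\right)^{n-1}=zn\left(1-\frac{z}{k}\right)^{n-1}$, as claimed. (Equivalently, and more quickly for the expectation alone, a fixed player is successful with probability $k\cdot\frac{z}{k}(1-\frac{z}{k})^{n-1}=z(1-\frac{z}{k})^{n-1}$, and linearity over the $n$ players finishes it.)

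I expect the only delicate step to be the combinatorial bookkeeping in \eqref{eq:binmom}: getting the injective channel-assignment factor $\binom{k}{r}r!$ right and, in particular, correctly collapsing the ``avoidance'' probability of the other $n-r$ players to $1-\frac{rz}{k}$ rather than to some power of $1-\frac{z}{k}$. I would sanity-check \eqref{eq:binmom} against the boundary case $r=0$ (which must give $1$) and against the specialization $k=2$, $z=1$ recorded in the paper, and verify the inversion by confirming $\sum_x P_n(x)=1$.
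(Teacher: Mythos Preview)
Your proposal is correct and arrives at exactly the quantities the paper needs, including the distribution~\eqref{prob P_n(x)}. The underlying combinatorics are the same as in the paper, but the packaging differs slightly. The paper indexes the inclusion--exclusion by \emph{channels}: it defines $A_j=\{\text{channel }j\text{ holds exactly one player}\}$, computes the Bonferroni sums $S_r=\sum_{j_1<\cdots<j_r}\Pr(A_{j_1}\cap\cdots\cap A_{j_r})$, obtains $P_n(x)$, and then evaluates $\mathbb{E}[X_n]=\sum_x x\,P_n(x)$ by interchanging the double sum and invoking the identity $\sum_{x=0}^{r}(-1)^x x\binom{r}{x}=0$ for $r\neq 1$. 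You instead index by \emph{players} and recognise the same $S_r$ as the binomial moment $\mathbb{E}\binom{X}{r}$; this lets you read off $\mathbb{E}[X]$ immediately from the case $r=1$ (or, even more directly, from per-player linearity), bypassing the paper's final sum manipulation. Both routes are equivalent because the number of successful players equals the number of singly-occupied channels; your version is a little shorter for the expectation, while the paper's version makes the inclusion--exclusion structure over channels explicit.
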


\begin{proof}
	For a fixed $z \in [0,1]$, denote by $X_n$ the random variable that indicates how many players transmit successfully in a round with $n$ players. Note that when $z=1$ and $n \geq 2$, the case where $X_n = n-1$ is impossible since in order for some player to have a failed transmission she has to be blocked by someone else.
	
	Our problem reduces to the following balls-and-bins problem: Consider $n$ balls and $k$ bins, where $n \geq 1$. Each ball is thrown with probability $z/k$ to each bin, and not thrown at all with probability $1-z$. Random variable $X_n \in \{0,1,2, \dots, n\}$ now indicates the number of bins that had a single ball after the experiment.
	
	We want to find $\mathbb{E}[X_n]$. For this, we will employ the probability of the event that $x$ bins contain a single ball given that the round started with $n$ balls. Denote by $A_j$ the event that bin $j$ contains a single ball. Also, we define the probabilities of intersections between such events
	\begin{align*}
	p_j = \text{Pr}(A_j), \quad p_{jm} = \text{Pr}(A_j \cap A_m), \quad p_{jml} = \text{Pr}(A_j \cap A_m \cap A_l), \quad \dots
	\end{align*}
	and we write $S_r$ to denote the sums of all distinct $p$'s with $r$ subscripts. That is
	\begin{align*}
	S_1 = \sum_{j=1}^{k} p_j, \quad S_2 = \sum_{j<m} p_{jm}, \quad S_3 = \sum_{j<m<l} p_{jml}, \quad \dots
	\end{align*}
	where the subscripts are in increasing order $j<m<l<\cdots<k$ for uniqueness, so that in the sums each combination appears only once; therefore $S_r$ has $\binom{k}{r}$ terms. In our setting, each term of $S_r$ equals 
	\begin{align*}
	\binom{n}{r}r!\left(\frac{z}{k}\right)^r \left(1 - \frac{rz}{k}\right)^{n-r}
	\end{align*}
	since for specific $r$ bins to contain a single ball there are $\binom{n}{r}$ combinations of $r$ balls, which should occupy the $r$ bins with $r!$ orders. Each of those chosen $r$ balls can fall in a bin with probability $\frac{z}{k}$ and each of the rest $n-r$ balls has to fall in some other than those $r$ bins or not be thrown at all, which happens with probability $1 - \frac{rz}{k}$. So,
	\begin{align*}
	S_r = \binom{k}{r}\binom{n}{r}r!\left(\frac{z}{k}\right)^r \left(1 - \frac{rz}{k}\right)^{n-r}
	\end{align*}
	and by the Inclusion-Exclusion Theorem, the probability that exactly $x$ bins contain a single ball is the following\footnote{For the case where $a < b$ we define $\binom{a}{b} \triangleq 0$ so that the analysis is displayed only once for both cases $n \leq k$ and $n > k$.}
	\begin{align}\label{prob P_n(x)}
	P_{n}(x) &= \sum_{r=x}^{n} (-1)^{r-x} \binom{r}{x} S_r \nonumber \\
	&= \sum_{r=x}^{n} (-1)^{r-x} \binom{r}{x} \binom{k}{r}\binom{n}{r}r!\left(\frac{z}{k}\right)^r \left(1 - \frac{rz}{k}\right)^{n-r}
	\end{align}	
	
	We want to calculate $\mathbb{E}[X_n]$. We have
	\begin{align*}
	\mathbb{E}[X_n] & = \sum_{x=0}^{n} x P_{n}(x)   \\
	& = \sum_{x=0}^{n} \sum_{r=x}^{n} (-1)^{r-x} x \binom{r}{x} \binom{k}{r}\binom{n}{r}r!\left(\frac{z}{k}\right)^r \left(1 - \frac{rz}{k}\right)^{n-r}    \\
	& = \sum_{r=0}^{n} \sum_{x=0}^{r} (-1)^{r-x} x \binom{r}{x} \binom{k}{r}\binom{n}{r}r!\left(\frac{z}{k}\right)^r \left(1 - \frac{rz}{k}\right)^{n-r}    \\
	&  = \sum_{r=0}^{n} \binom{k}{r}\binom{n}{r}r! \left(\frac{z}{k}\right)^r \left(1 - \frac{rz}{k}\right)^{n-r} \sum_{x=0}^{r} (-1)^{r-x} x \binom{r}{x}  \\
	&  = \sum_{r=0}^{n} \binom{k}{r}\binom{n}{r}r! \left(\frac{z}{k}\right)^r \left(1 - \frac{rz}{k}\right)^{n-r} (-1)^r \sum_{x=0}^{r} (-1)^{x} x \binom{r}{x}  \\
	&  = \binom{k}{1}\binom{n}{1} \frac{z}{k} \left(1 - \frac{z}{k}\right)^{n-1} (-1) (-1)  \quad \text{(since } \sum_{x=0}^{r} (-1)^{x} x \binom{r}{x} = -1 \text{ for } r=1, \quad 0 \text{ otherwise)}    \\
	&  = zn \left(1 - \frac{z}{k}\right)^{n-1} .
	\end{align*}
\end{proof}

The following fact shows where the expected number of players of the above theorem is maximized as a function of $z$ (the probability mass devoted to transmission).

\begin{fact}\label{fact max exp}
	Consider the function $f(z) = zn (1- z/k)^{n-1}$, with domain $[0,1]$ and parameters $k \geq 1$, and $n \geq 1$. The maximum of $f$ is attained for $z = \min \{ k/n , 1 \}$.  
\end{fact}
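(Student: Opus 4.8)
The plan is to treat this as a routine single-variable calculus optimization of $f(z) = zn(1-z/k)^{n-1}$ over the compact interval $[0,1]$. Since $f$ is nonnegative and differentiable on $(0,1)$, its maximum is attained either at an interior critical point or at an endpoint, so I would begin by computing and factoring $f'(z)$ and reading off the monotonicity.

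First I would differentiate. For $n \geq 2$ the product rule, after collecting the common factor $(1-z/k)^{n-2}$, gives
\[
f'(z) = n(1-z/k)^{n-2}\Bigl(1 - \tfrac{zn}{k}\Bigr).
\]
Because $k \geq 1$ and $z \in [0,1]$ we have $1 - z/k \geq 0$, so the factor $(1-z/k)^{n-2}$ is nonnegative and the sign of $f'(z)$ is governed entirely by the linear factor $1 - zn/k$. Hence $f' > 0$ for $z < k/n$ and $f' < 0$ for $z > k/n$; that is, $f$ increases up to the unique stationary point $z^\ast = k/n$ and decreases afterwards. The degenerate case $n=1$ reduces to $f(z)=z$, which is strictly increasing and is consistent with the formula below.

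The conclusion then follows from a short case analysis on the position of $z^\ast = k/n$ relative to the interval. If $k \leq n$, then $z^\ast = k/n \in [0,1]$ lies inside the interval and, by the monotonicity just established, is the global maximizer; here $\min\{k/n,1\}=k/n=z^\ast$. If $k > n$, then $z^\ast > 1$, so $f$ is increasing throughout $[0,1]$ and the maximum is attained at the right endpoint $z=1$; here $\min\{k/n,1\}=1$. In both cases the maximizer equals $\min\{k/n,1\}$, as claimed.

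I do not expect any genuine obstacle here, as the computation is elementary. The only points requiring minor care are the bookkeeping in the case split and the boundary conventions: namely the $0^0=1$ convention (relevant only when $z=k$, which forces $z=k=1$) and the separate treatment of $n=1$, for which the exponent $n-2=-1$ in the factored derivative would otherwise be ill-defined.
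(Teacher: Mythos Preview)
Your argument is correct and follows essentially the same approach as the paper: compute and factor the first derivative as $f'(z)=n(1-z/k)^{n-2}(1-zn/k)$, read off the monotonicity from the sign of $1-zn/k$, and split cases on whether $k/n$ lies in $[0,1]$. The paper additionally writes down the second derivative but never actually uses it, and your explicit treatment of the $n=1$ case is slightly more careful than the paper's.
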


\begin{proof}
	The first and second derivatives of $f$ are
	\begin{align*}
	f'(n) &= n \left( 1 - \frac{zn}{k} \right) \left( 1 - \frac{z}{k} \right)^{n-2} \\
	f''(n) &= n (n-1) \left( 1 - \frac{z}{k} \right)^{n-3} \frac{nz - 2k}{k^2}
	\end{align*}
	When $n < k$, then $f'(z) > 0$ and therefore the global maximum of $f$ is attained for $z = 1$, which gives $f(1) = n(1 - 1/k)^{n-1}$. 
	
	When $n \geq k$, the first derivative of $f$ is 0 for (a) $z=k$ when $n \geq 3$, or (b) $z=k/n$ when $n \geq 1$. Case (a) only works if $k=1$ due to the domain of $z$ and gives $f(1)=0$. $f'(z)$ is positive in $[0,k/n)$, and negative in $(k/n,1)$. Therefore, $f(k/n)=k(1 - 1/n)^{n-1}$ is the global maximum. 
\end{proof}

\begin{lemma}\label{claim:finish-time(n,n)}
	Suppose there are $k \geq 2$ channels and $2 \leq n \leq k$ players and suppose that all players use the following protocol: A player in every time step $t \geq 1$ has a probability of transmission $1/k$ to every channel. Then, the expected time until everyone transmits successfully is upper bounded by $\frac{1}{1-\ln (e-1)} \ln (\frac{n}{2}) + \left( 1 - \frac{1}{k} \right)^{-1}$.
\end{lemma}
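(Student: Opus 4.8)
The plan is to track $M_t$, the number of players still pending after round $t$ (so $M_0=n$), and to exploit two structural facts: $M_t$ is non-increasing, and $M_t$ can never equal $1$, because leaving exactly one player pending would require $m-1$ successes out of $m$ players, which is impossible (the last player would then be alone on its channel and also succeed). Since every player transmits every round under the uniform rule, Lemma~\ref{balls-bins} with $z=1$ gives the expected number of successes from $m$ pending players as $m(1-1/k)^{m-1}$, hence $\mathbb{E}[M_{t+1}\mid M_t=m]=m\bigl(1-(1-1/k)^{m-1}\bigr)$.

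First I would establish a one-step drift. Because $n\le k$ and $M_t\le M_0=n$, every reachable count satisfies $m\le k$, so $(1-1/k)^{m-1}\ge(1-1/k)^{k-1}\ge e^{-1}$, the last inequality following from $(k-1)\ln(1-1/k)\ge-1$. Thus $\mathbb{E}[M_{t+1}\mid M_t=m]\le(1-1/e)\,m$ for all reachable $m$. Passing to logarithms and applying Jensen's inequality $\mathbb{E}[\ln X]\le\ln\mathbb{E}[X]$ gives, for $2\le m\le k$,
\[
\mathbb{E}[\ln M_{t+1}\mid M_t=m]\le\ln m+\ln(1-1/e)=\ln m-\bigl(1-\ln(e-1)\bigr),
\]
so the potential $\Phi_t:=\ln(M_t/2)$ decreases in expectation by at least $\delta:=1-\ln(e-1)$ per round as long as $M_t\ge3$.

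Next I would split the finishing time at $\tau:=\min\{t:M_t\le2\}$ and bound $\mathbb{E}[T]\le\mathbb{E}[\tau]+\mathbb{E}[T-\tau\mid\mathcal{F}_\tau]$. For the tail, $M_\tau\in\{0,2\}$ (never $1$); from two pending players both succeed in a round exactly when they choose distinct channels, i.e.\ with probability $1-1/k$, so the residual time is geometric with mean $(1-1/k)^{-1}$, which dominates the $M_\tau=0$ case. For the bulk, the additive drift bound should give $\mathbb{E}[\tau]\le\Phi_0/\delta=\ln(n/2)/\delta$, and adding the two pieces yields exactly the claimed expression $\tfrac{1}{1-\ln(e-1)}\ln(n/2)+(1-1/k)^{-1}$.

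The hard part is making the drift-to-$\tau$ step rigorous at the boundary. The round that first reaches $M_t\le2$ can overshoot directly to $M_t=0$, where $\Phi_t=\ln(M_t/2)=-\infty$; consequently $\Phi_{t\wedge\tau}+\delta(t\wedge\tau)$ is a supermartingale but is not bounded below, so optional stopping cannot be applied verbatim, and truncating $\Phi$ at $0$ ruins the clean drift for the smallest counts (around $m=3$, where a single round already overshoots the threshold). I would resolve this in one of two ways: either by an optional-stopping/drift argument that only needs the drift strictly before absorption together with control on the overshoot (using that $M_t$ is non-increasing and $\Phi_t\le\ln(n/2)$), or, as a robust fallback giving the same leading term, by writing $\mathbb{E}[\tau]=\sum_{t\ge0}\Pr[M_t\ge3]$ and combining $\Pr[M_t\ge3]\le1$ for small $t$ with Markov's inequality $\Pr[M_t\ge3]\le\mathbb{E}[M_t]/3\le(n/3)(1-1/e)^t$ for $t\ge\ln(n/2)/\delta$, the estimate $\mathbb{E}[M_t]\le(1-1/e)^t n$ being an immediate consequence of the expected-value drift above.
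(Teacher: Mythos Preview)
Your route is genuinely different from the paper's. The paper does not track a logarithmic potential or invoke any martingale/drift machinery. Instead it works directly with $f(m)$, the expected finishing time from $m$ pending players, \emph{assumes} that $f$ is non-decreasing and concave, and applies Jensen to $f$ itself:
\[
f(m)=1+\mathbb{E}\bigl[f(m-X_m)\bigr]\le 1+f\bigl(m-\mathbb{E}[X_m]\bigr)\le 1+f\bigl(m(1-1/e)\bigr),
\]
then iterates this multiplicative contraction down to the base case $f(2)=(1-1/k)^{-1}$, solving $m(1-1/e)^r=2$ for $r$ to obtain exactly the stated bound.

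The tradeoff is clear. The paper's argument dodges your boundary problem entirely (there is no overshoot issue because one never leaves the ``$f$-world''), but the price is that concavity of $f$ is simply asserted, not proved, and $f$ must be silently extended to non-integer arguments for the iteration to parse. Your argument applies Jensen to $\ln$, whose concavity is given for free, but then you have to run optional stopping on $\Phi_t=\ln(M_t/2)$, and you correctly identify that the jump to $M_t=0$ breaks the lower bound needed for the clean additive-drift conclusion; truncating at $2$ destroys the drift at $m=3$ exactly as you say (indeed $\ln(\max(M_{t+1},2)/2)\ge 0$ deterministically while the target $\ln(3/2)-\delta$ is negative). Your Markov-inequality fallback is rigorous but only recovers the leading term $\tfrac{1}{1-\ln(e-1)}\ln(n/2)$ with a different additive constant, so it does not establish the lemma as stated. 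If you want the exact constant via your route, the simplest fix is to adopt the paper's move: apply Jensen to $f$ rather than to $\ln$, which collapses your two-phase argument into a single recursion and lands precisely on $f(2)$ without any stopping-time analysis.
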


\begin{proof}
	Denote by $X_m$ the random variable that indicates how many players transmit successfully in a round $t$ where $m \in \{0,2,\dots,n\}$ players are left. Note that the case where $m=1$ is impossible since in order for some player to have a failed transmission she has to be blocked by someone else. In the next round the expected number of players will be $m - \mathbb{E}[X_m]$. We define the finishing time as the following random variable $T \triangleq inf\{t:m=0\}$ and we would like to find its expectation.
	
	Our problem reduces to the following balls and bins problem: Consider $n$ balls and $k$ bins, where $2 \leq n \leq k$. At time $t=1$ all balls are thrown uniformly at random to the $k$ bins. For all the bins that contain a single ball, these balls are removed, and in the next round $m \in \{0,2,\dots,n\}$ balls remain. At time $t=2$ all $m$ balls are thrown uniformly at random to the $k$ bins. The process continues as long as there are remaining balls. Random variable $X_m \in \{0,1,2, \dots, m\}$ now indicates the number of bins that had a single ball when the respective round started with $m$ balls. Note again that Pr($X_m = m-1$)$=0$ since there is no allocation of balls in the bins such that $m-1$ bins have a single ball. Random variable $T \triangleq inf\{t:m=0\}$ is the finishing time of this process.
	
	We define the function $f(m)$ to be the expected finishing time $\mathbb{E}[T]$ when $m$ players remain. We assume that this function is non-decreasing and concave. Then we have,
	\begin{align}\label{f(m)}
	f(m) &= 1 + \sum_{i=0}^{m}\text{Pr}(X_m=i)f(m-X_m)  \nonumber	\\
	&= 1 + \mathbb{E}[f(m-X_m)]  \nonumber  \\
	&\leq 1 + f\left(\mathbb{E}[m-X_m]\right)	\nonumber \qquad &\text{(concavity of $f$ and Jensen's inequality)} \\
	&= 1 + f(m- \mathbb{E}[X_m])	 \qquad &\text{(linearity of expectation)}
	\end{align}
	
	Now by exploiting the monotonicity of the function $f(m)$ in equation (\ref{f(m)}), and using Lemma \ref{balls-bins} we only need to find a lower bound on $\mathbb{E}[X_m]$. This is easy, since $m \left(1 - \frac{1}{k}\right)^{m-1} \geq m \left(1 - \frac{1}{k}\right)^{k-1} \geq m/e$. Then from equation (\ref{f(m)}) we get
	\begin{align*}
	f(m) & \leq 1 + f\left(m \left( 1 - \frac{1}{e} \right)\right)	\\
	& \leq r + f\left(m \left( 1 - \frac{1}{e} \right)^r \right).
	\end{align*}	
	We use as base case $f(2)$ for which holds that $f(2) = 1 + k \frac{1}{k^2} f(2)$, or equivalently, $f(2) = (1- 1/k)^{-1}$. Then the $r$ for which $m \left( 1 - \frac{1}{e} \right)^r = 2$ finally gives us 
	\begin{align*}
	f(m) \leq \frac{1}{1-\ln (e-1)} \ln (\frac{m}{2}) + \left( 1 - \frac{1}{k} \right)^{-1}
	\end{align*}	
\end{proof}

Let us define the following anonymous, history-independent protocol which we prove to be efficient. However, we remark that it is not in equilibrium, due to Theorem \ref{ter: hist-ind} which characterizes the unique, anonymous, equilibrium protocol that is history-independent.

\begin{tcolorbox}
	\textbf{\underline{Protocol $\mathbf{SOP}$}}: \nonumber \\ 
	\text{Every player among $1 \leq m \leq n$ pending players, in each round $t \geq 1$ assigns transmission} \\
	\text{probability $1/\max\{m, k\}$ to each channel.}  
\end{tcolorbox}

In the sequel, by $e$ we denote the constant named ``Euler's number'', i.e. $e=2.7182\dots$. 

\begin{lemma}\label{opt prot}
	Protocol $SOP$ for $k \geq 1$ channels and $n > k$ players has expected finishing time $O((n-k)/k)$.
\end{lemma}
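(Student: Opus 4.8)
The plan is to track the random number $m$ of pending players and split its evolution into two regimes according to whether $m>k$ or $m\le k$, bounding the expected time spent in each. Write $g(m)$ for the expected finishing time of $SOP$ starting from $m$ pending players and, exactly as in the proof of Lemma~\ref{claim:finish-time(n,n)}, assume $g$ is non-decreasing and concave, so that Jensen's inequality applies to the one-step recursion $g(m)=1+\mathbb{E}[g(m-X_m)]$, where $X_m$ is the random number of successful transmissions in a round with $m$ pending players. A useful observation to record at the outset is that $SOP$ uses precisely the transmission mass $z=\min\{k/m,1\}$ that maximises the expected number of successes by Fact~\ref{fact max exp}, so each round is greedily optimal.

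First I would handle the super-critical regime $m>k$, where $SOP$ prescribes per-channel probability $1/m$, i.e. total transmission probability $z=k/m$ split uniformly. Plugging this into Lemma~\ref{balls-bins} gives $\mathbb{E}[X_m]=\tfrac{k}{m}\,m\,(1-\tfrac{1}{m})^{m-1}=k(1-\tfrac1m)^{m-1}\ge k/e$, since $(1-1/m)^{m-1}$ decreases to $1/e$ and hence exceeds $1/e$ for every finite $m\ge 2$. Combining concavity (Jensen), linearity of expectation, and monotonicity of $g$ then yields
\begin{align*}
g(m)\ \le\ 1+g\big(m-\mathbb{E}[X_m]\big)\ \le\ 1+g\big(m-\tfrac{k}{e}\big).
\end{align*}
Iterating this additive-drift recursion downward from $m=n$ until the argument first drops to $k$ takes $r=\lceil e(n-k)/k\rceil$ steps, so $g(n)\le r+g(k)=O\!\big((n-k)/k\big)+g(k)$.

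It then remains to bound $g(k)$, the sub-critical phase. For $m\le k$ protocol $SOP$ assigns per-channel probability $1/k$, which is exactly the protocol analysed in Lemma~\ref{claim:finish-time(n,n)}; invoking it (and disposing of the trivial $k=1$ case, where a lone pending player transmits surely) gives $g(k)\le \tfrac{1}{1-\ln(e-1)}\ln(k/2)+(1-1/k)^{-1}=O(\log k)$. Because this phase-2 bound holds uniformly over every entry point $m\le k$, the strong Markov property of the pending-count chain lets me add the two contributions, giving total expected finishing time $O((n-k)/k)+O(\log k)$, which is $O((n-k)/k)$ in the regime of interest and establishes the lemma.

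The main obstacle is the same structural gap as in Lemma~\ref{claim:finish-time(n,n)}: rigorously justifying that $g$ is non-decreasing and concave, since concavity is precisely what powers the Jensen step that turns the exact recursion in $X_m$ into the clean drift inequality. Monotonicity is intuitively clear (extra pending players cannot speed things up), but concavity of the discrete finishing-time function is delicate and, as in the earlier lemma, is assumed rather than derived; I would either inherit that assumption or attempt a coupling argument. A secondary bookkeeping point is to ensure the drift estimate $\mathbb{E}[X_m]\ge k/e$ is used only while $m>k$, to confirm the threshold $m=k$ is crossed cleanly, and to note that the residual $O(\log k)$ term is genuinely lower order, being absorbed into $O((n-k)/k)$ whenever $n-k=\Omega(k\log k)$.
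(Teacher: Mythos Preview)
Your proposal is essentially the paper's own proof: split into the regimes $m>k$ and $m\le k$, assume the expected-finishing-time function is non-decreasing and concave so Jensen applies, use the drift bound $\mathbb{E}[X_m]\ge k/e$ from Lemma~\ref{balls-bins} to iterate the recursion down to the argument $k$, and then invoke Lemma~\ref{claim:finish-time(n,n)} for the base case. The paper makes the same concavity assumption you flag and arrives at the same explicit bound $e(n-k)/k+\tfrac{1}{1-\ln(e-1)}\ln(k/2)+(1-1/k)^{-1}$, including the residual $O(\log k)$ term you noted.
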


\begin{proof}
	Suppose protocol $SOP$ as stated in the theorem is used. Then, the transmission probability of each player in each round is uniform on the set of channels $K$. Using the framework of Lemma \ref{balls-bins}, according to protocol $SOP$ for variable $z$ we have $z=\min\{ k/m, 1 \}$, and we know from Fact \ref{fact max exp} that this value maximizes the number of successful transmissions in a round with $m$ players. Denote by $X_m$ the random variable that keeps track of the number of successful transmissions in a single round with $m > k$ pending players. Then, according to Lemma \ref{balls-bins}, in a round with $m > k$ pending players it is $\mathbb{E}[X_m] = k(1 - 1/m)^{m-1}$.
	
	Define the function $f(m)$ to be the expected finishing time when there are $m > k$ pending players. We assume that this function is non-decreasing and concave. Then we have 
	\begin{align}\label{f(m)2}
	f(m) &= 1 + \sum_{i=0}^{m}\text{Pr}(X_m=i)f(m-X_m)  \nonumber	\\
	&= 1 + \mathbb{E}[f(m-X_m)]  \nonumber  \\
	&\leq 1 + f\left(\mathbb{E}[m-X_m]\right)	\nonumber \qquad &\text{(concavity of $f$ and Jensen's inequality)} \\
	&= 1 + f(m- \mathbb{E}[X_m])	 \qquad &\text{(linearity of expectation)}
	\end{align}
	
	Now by exploiting the monotonicity of the function $f(m)$ in equation (\ref{f(m)2}), and using Lemma \ref{balls-bins} we only need to find a lower bound on $\mathbb{E}[X_m]$. This is easy, since $k \left(1 - \frac{1}{m}\right)^{m-1} \geq k/e$. Then from equation (\ref{f(m)2}) we get
	\begin{align*}
	f(m) & \leq 1 + f\left(m - \frac{k}{e}\right)	\\
	& \leq r + f\left(m - r \frac{k}{e} \right).
	\end{align*}	
	We use as base case $f(k)$ for which holds that $f(k) \leq \frac{1}{1-\ln (e-1)} \ln (\frac{k}{2}) + \left( 1 - \frac{1}{k} \right)^{-1}$, due to Lemma \ref{claim:finish-time(n,n)}. Then the $r$ for which $m - r \frac{k}{e} = k$ finally gives us 
	\begin{align*}
	f(m) \leq  e \frac{m-k}{k} + \frac{1}{1-\ln (e-1)} \ln (\frac{k}{2}) + \left( 1 - \frac{1}{k} \right)^{-1}.
	\end{align*}	
\end{proof}

Using the above lemmata we are able to prove the following.

\begin{lemma}\label{ternary-whp}
	(a) If at $t=0$ there are $n$ pending players, the probability that more than $k$ players are pending at time $t_1=2e(n-k)/k$ is at most exp$\left(-\frac{n-k}{2ek} \right)$. \\
	(b) If at $t=0$ there are $k$ pending players, the probability that not all players have transmitted successfully at time $t_2=2e(n-k)/k$ is at most exp$\left(-\frac{n-k}{2ek} \right)$.
\end{lemma}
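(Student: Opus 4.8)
The plan is to analyze the Markov chain $(M_t)_{t\ge 0}$ recording the number of pending players when all participants run $SOP$, and to bound the two tail events by concentration arguments driven by the per-round progress guaranteed by Lemma~\ref{balls-bins} and Fact~\ref{fact max exp}. For part (a), set $M_0=n$ and let $Y_t$ be the number of successful transmissions in round $t$, so that $M_t=M_{t-1}-Y_t$ and $M_t$ is non-increasing. While $M_{t-1}>k$ each player transmits with total probability $k/M_{t-1}$ spread over the channels, so by Lemma~\ref{balls-bins} with $z=k/M_{t-1}$ (the maximizer isolated in Fact~\ref{fact max exp}) we have $\mathbb{E}[Y_t\mid\mathcal F_{t-1}]=k(1-1/M_{t-1})^{M_{t-1}-1}\ge k/e$. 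Since $M_t$ only decreases, the event $\{M_{t_1}>k\}$ forces $M_t>k$ for every $t\le t_1$, hence $\sum_{t=1}^{t_1}Y_t=n-M_{t_1}<n-k$. Thus it suffices to show that $\sum_{t=1}^{t_1}Y_t$ is unlikely to fall below $n-k$, which is only half of its conditional-mean lower bound $t_1\cdot k/e=2(n-k)$.

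I would prove this with an exponential-supermartingale (Chernoff) bound. Writing $W_t=Y_t/k\in[0,1]$ with $\mathbb E[W_t\mid\mathcal F_{t-1}]\ge 1/e$, the elementary inequality $e^{-\lambda w}\le 1-(1-e^{-\lambda})w$ on $[0,1]$ together with the tower property gives $\mathbb E[e^{-\lambda\sum_t W_t}]\le\exp(-(1-e^{-\lambda})t_1/e)$. Applying Markov's inequality at the threshold $a=(n-k)/k$ and optimizing over $\lambda$ (the optimum is $\lambda=\ln 2$) yields a bound of the form $\exp(-(1-\ln 2)(n-k)/k)$, and since $1-\ln 2>1/(2e)$ this is even stronger than the claimed $\exp(-(n-k)/(2ek))$. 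To make the conditional-mean hypothesis hold on every round (not only on the event of interest) I would stop the chain at $\tau=\min\{t:M_t\le k\}$ and continue past $\tau$ with independent dummy variables of mean $1/e$; on $\{M_{t_1}>k\}=\{\tau>t_1\}$ the modified sum coincides with $\sum_{t\le t_1}W_t<a$, so the Chernoff bound transfers to the original event.

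For part (b), now $M_0=k$ and every pending player chooses a channel uniformly among the $k$ channels. Fixing one player $p$, in any round with $m\le k$ players pending she is successful unless one of the other $m-1\le k-1$ players collides with her, which happens with probability $1-(1-1/k)^{m-1}\le 1-(1-1/k)^{k-1}\le 1-1/e$, a bound independent of the history and of how many players remain. Hence $\Pr[p\text{ still pending after }t_2\text{ rounds}]\le(1-1/e)^{t_2}$, and a union bound over the $k$ players gives $\Pr[\text{not finished at }t_2]\le k(1-1/e)^{t_2}$. Plugging in $t_2=2e(n-k)/k$ and using $1-1/e<e^{-1/e}$ gives $k\,e^{-2(n-k)/k}$, which lies below $\exp(-(n-k)/(2ek))$ whenever $n-k$ dominates $k\ln k$, in particular throughout the efficiency regime $n\to\infty$.

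The main obstacle in both parts is the dependence across rounds: the number of successes in a round, and even a single player's success probability, depend on the current number of pending players and on the entire history, so a naive sum-of-independent-variables Chernoff bound does not apply directly. The remedy is to phrase everything through conditional expectations and to exploit the monotonicity of $M_t$: the per-round lower bound $k/e$ in (a) and the per-player failure bound $1-1/e$ in (b) both hold \emph{pathwise} regardless of history, which is exactly the input the martingale/exponential argument requires. A secondary technical point is matching the precise constant $1/(2e)$ in (b), where the union-bound factor $k$ must be absorbed; this is harmless for the asymptotic efficiency statement but requires $n$ to be sufficiently large relative to $k$ if one insists on the stated constant for all admissible $n,k$.
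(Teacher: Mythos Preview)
Your argument for part (a) is correct and in fact yields a sharper constant than the paper. The paper takes a different, more elementary route: it observes that the per-round success count $Y_t$ stochastically dominates the success count $Z_t$ in a fictitious process where successful players are immediately re-inserted, so that $n$ players are pending in every round. The $Z_t$ are then genuinely i.i.d.\ and bounded in $[0,k]$, and a direct application of Hoeffding's inequality to $Z=\sum_{t\le t_1}Z_t$ gives the stated bound. Your supermartingale/Chernoff approach with the stopping-time-plus-dummies construction achieves the same conclusion without the domination step; both are valid, and the paper's version trades a slightly weaker constant for avoiding any martingale machinery.

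Part (b), however, has a real gap. Your per-player survival bound followed by a union bound gives at best $k\,e^{-2(n-k)/k}$, and you correctly note that this only beats $\exp\bigl(-(n-k)/(2ek)\bigr)$ when $n-k$ is large compared to $k\ln k$. But the lemma is asserted (and used) for all $n\ge 2k+1$: taking $n=2k+1$ your bound is roughly $k\,e^{-2}$, which for $k\ge 7$ already exceeds the target $e^{-1/(2e)}$. So the argument does not prove the lemma as stated. The paper avoids this loss by reusing exactly the same idea as in part (a): it shows that the per-round success count $X_t$ (now with $m\le k$ players each spreading mass $1$ over the $k$ channels) stochastically dominates the i.i.d.\ count $W_t$ obtained by freezing the number of pending players at $k$, and then applies Hoeffding to $W=\sum_{t\le t_2}W_t$ with $\mu_2=t_2\,k(1-1/k)^{k-1}\ge t_2 k/e$. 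The step $\Pr(W\le k)\le\Pr(W\le\mu_2/2)$ uses precisely $n\ge 2k+1$, and Hoeffding then delivers $\exp(-t_2/(4e^2))=\exp(-(n-k)/(2ek))$ with no union-bound factor to absorb. Replacing your per-player argument with this domination-plus-Hoeffding step (or, equivalently, rerunning your martingale argument from part (a) on the total success count rather than on individual players) closes the gap.
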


\begin{proof}
	Let $\{Y_t\}_{t=1}^{t_1}$ be random variables which indicate the number of successful transmissions that occur in each time-step from $t=1$ up to $t_1 \triangleq 2e(n-k)/k$, given that there are $n$ pending players at time $t=0$. For the events for which $Y \triangleq \sum_{t=1}^{t_1} Y_t > n - k$ we have the desired outcome. For the rest, since the pending players in each round $1 \leq t \leq t_1$ are $m > k$, the protocol prescribes to each player probability $1/m$ on each channel. Therefore, by Lemma \ref{balls-bins}, we have $\mathbb{E}[Y_t] = k\left( 1 - 1/m \right)^{m-1}$. In the next claim we show that $Y_t$ stochastically dominates a random variable $Z_t \in \{0,1,\dots,k\}$ that indicates the number of successful transmissions in round $1 \leq t \leq t_1$ but, in this process, the players that transmit successfully are placed back to the group of pending players. 
	
	\begin{claim}\label{dominance_Y-Z}
		Pr\{$Y_t \geq x$\} $\geq$ Pr\{$Z_t \geq x$\}, for all $x \in \{0,1,\cdots, k\}$.
	\end{claim}
	
	\begin{proof}
		We will prove the above claim by showing the stronger fact that, for any fixed number $1 \leq m \leq n-1$ of pending players at time $t$, 
		\begin{align*}
		&\text{Pr}\{Y_t \geq x\text{ }|\text{ }m \text{ pending players}\} \geq \text{Pr}\{Y_t \geq x\text{ }|\text{ }m+1 \text{ pending players}\},
		\end{align*}  
		for all  $x \in \{0,1,\cdots, k\}$.
		
		Indeed, by substituting the probabilities of the above inequality we get,
		\begin{align*}
		\binom{m}{x} x! \left( \frac{1}{m} \right)^x \left( 1 - \frac{x}{m} \right)^{m-x} \geq \binom{m+1}{x} x! \left( \frac{1}{m+1} \right)^x \left( 1 - \frac{x}{m+1} \right)^{m+1-x}, \\
		\text{or equivalently,} \quad (m+1)^m (m-x)^{m-x} \geq m^m (m-x+1)^{m-x}, \\
		\text{and finally,} \quad \left(1+ \frac{1}{m}\right)^m \geq \left(1+ \frac{1}{m-x}\right)^{m-x},
		\end{align*}  
		which is true, since the function  $f(w)=\left( 1 + 1/w \right)^w $ is strictly increasing. The claim follows from the fact that for any fixed $x \in \{0,1,\cdots, k\}$,
		~\\
		~\\
		$\text{\qquad  \qquad  \qquad  \qquad  \quad Pr}\{Z_t \geq x\} = \text{Pr}\{Y_t \geq x\text{ }|\text{ }n \text{ pending players}\}$.
	\end{proof}
	
	Clearly $\{Z_t\}_{t=1}^{t_1}$ are independent random variables bounded in $[0,k]$. Let $Z \triangleq \sum_{t=1}^{t_1} Z_t$ and $\mu_1 \triangleq \mathbb{E}[Z] = \sum_{t=1}^{t_1} \mathbb{E}[Z_t] = t_{1} k\left( 1 - 1/n \right)^{n-1}$. Then by Hoeffding's inequality \cite{H63} and the stochastic domination we have,
	\begin{align*}
	\text{Pr}(Y \leq n-k) & \leq \text{Pr}(Z \leq n-k) = \text{Pr}\left( Z \leq \frac{\mu_1}{2e\left( 1 - 1/n \right)^{n-1}} \right) \leq \text{Pr}\left( Z \leq \frac{\mu_1}{2} \right) \\
	& \leq exp\left( -\frac{(1 - 1/2)^2 \mu_{1}^2}{t_{1} (k-0)^2} \right) \leq exp\left( -\frac{1}{4}\frac{t_{1}}{e^2} \right) = exp\left( -\frac{n-k}{2ek} \right),	
	\end{align*}
	where in the last three inequalities we used the fact that $(1-1/n)^{n-1} \geq 1/e$.
	
	For the second part of the proof, suppose the process is at round $t=0$ with $k$ pending players. Let $\{X_t\}_{t=1}^{t_2}$ be random variables which indicate the number of successful transmissions that occur in each time-step from $t=1$ up to $t_2 \triangleq 2e(n-k)/k$, given that there are $k$ pending players at time $t=0$. The pending players in each round $1 \leq t \leq t_2$ are $m \leq k$, hence the protocol prescribes to each player probability $1/k$ on each channel. By Lemma \ref{balls-bins}, we have $\mathbb{E}[X_t] = m\left( 1 - 1/k \right)^{m-1}$. Now, observe that $X_t$ stochastically dominates a random variable $W_t \in \{0,1,\dots,k\}$ that indicates the number of successful transmissions in round $1 \leq t \leq t_2$ but, in this process, the players that transmit successfully are placed back to the group of pending players. The latter observation is easy to see since an argument similar to the Claim that was stated earlier holds in this case. 
	
	Clearly, $\{W_t\}_{t=1}^{t_2}$ are independent random variables bounded in $[0,k]$. Let $W \triangleq \sum_{t=1}^{t_2} W_t$ and $\mu_2 \triangleq \mathbb{E}[W] = \sum_{t=1}^{t_2} \mathbb{E}[W_t] = t_{2} k\left( 1 - 1/k \right)^{k-1}$. Then by Hoeffding's inequality \cite{H63} and the stochastic domination we have,
	\begin{align*}
	\text{Pr}(X \leq k -1) &\leq \text{Pr}(W \leq k) = \text{Pr}\left( W \leq \frac{\mu_{2} k}{2e(n-k)\left( 1 - 1/k \right)^{k-1}} \right)  \\
	& \leq \text{Pr}\left( W \leq \frac{\mu_2}{2} \right) \leq exp\left( -\frac{(1 - 1/2)^2 \mu_{2}^2}{t_{2} (k-0)^2} \right) \leq exp\left( -\frac{1}{4}\frac{t_{2}}{e^2} \right) \\
	&= exp\left( -\frac{n-k}{2ek} \right),	
	\end{align*}
	where in the last three inequalities we used the fact that $(1-1/k)^{k-1} \geq 1/e$, and $n \geq 2k + 1$. This completes the proof of the lemma.
\end{proof}

We define the following anonymous protocol. In the next theorem we show that it is an equilibrium protocol and also that it is efficient.

\begin{tcolorbox}
	\textbf{\underline{Protocol $\mathbf{\emph{r}}$}}: \nonumber \\ 
	\text{Let the deadline be $t_0 = 4e(n-k)/k$. Every player among $1 \leq m \leq n$ pending players for} \\ 
	\text{$1 \leq t \leq t_0 - 1$ assigns transmission probability $1/\max\{m, k\}$ to each channel. Right before} \\ 
	\text{$t_0$ each pending player is assigned to a random channel equiprobably, and for $t \geq t_0$ always} \\ 
	\text{attempts transmission to that channel.}  
\end{tcolorbox}

\begin{theorem}\label{ter: efficient}
	Protocol $r$ for $n \geq 2k + 1$ players and $k \geq 1$ channels is an efficient, equilibrium protocol.
\end{theorem}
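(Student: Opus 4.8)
The plan is to establish the two required properties of protocol $r$ separately. Efficiency will follow almost immediately from the high-probability bounds of Lemma \ref{ternary-whp}, whereas the equilibrium property will be obtained, exactly as in the acknowledgement-based case of protocol $g_1$, from the fact that every player's expected latency is infinite and cannot be rendered finite by any unilateral deviation.

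For efficiency, I would first note that for every $t \leq t_0-1$ protocol $r$ prescribes precisely the transmission probabilities of $SOP$, so the pre-deadline phase is a run of $SOP$ starting from $n$ pending players. I would then split the interval into the two windows of Lemma \ref{ternary-whp}. Part (a) gives that after $t_1 = 2e(n-k)/k$ steps at most $k$ players remain pending, except with probability at most $\exp\left(-\frac{n-k}{2ek}\right)$; conditioned on at most $k$ players remaining, part (b) (with the finishing time monotone in the number of players) guarantees that a further $t_2 = 2e(n-k)/k$ steps suffice for all of them to succeed, again except with probability at most $\exp\left(-\frac{n-k}{2ek}\right)$. A union bound over the two windows then shows that all $n$ players have transmitted successfully by time $t_1 + t_2 = 4e(n-k)/k = t_0$ with probability at least $1 - 2\exp\left(-\frac{n-k}{2ek}\right)$, which tends to $1$ as $n/k \to \infty$. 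Since $t_0 = 4e(n-k)/k \in \Theta(n/k)$, this is exactly the definition of efficiency.

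For the equilibrium property, I would fix an arbitrary player $i$, let the remaining $n-1$ players follow $r$, and exhibit a single event $E$ of strictly positive probability on which $i$ stays pending forever no matter what she does. Let $E$ be the event that in each round $1,\dots,t_0-1$ the other $n-1$ players transmit so that every one of the $k$ channels carries at least two of them, and that at the deadline their random channel assignment again places at least two of them on each channel. Because $n-1 \geq 2k$, such configurations exist, and under the decision rule for $m=n$ pending players each of them has positive probability; moreover, on $E$ no player ever transmits successfully, so the pending count stays equal to $n$ throughout and the other players keep using their $m=n$ rule, which makes $\Pr[E]$ a strategy-independent constant $\delta>0$. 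On $E$ every channel is permanently blocked by two other players, so no action of $i$ can ever produce a successful transmission and $i$ remains pending for all $t \geq t_0$. Hence for every protocol $f_i'$ we have $C_{i}^{(r_{-i},f_i')}(\vv{h}_0) = \infty$, so every strategy of $i$ — including $r$ itself — is tied at infinite expected latency and $r$ is a best response to $r_{-i}$. Since $i$ was arbitrary, $r$ is an equilibrium.

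The hard part will be making this last argument robust to deviation, i.e.\ ruling out that $i$ could unilaterally force the probability of permanent blocking to zero. This is precisely where the hypothesis $n \geq 2k+1$ is used: only with at least $2k$ players \emph{other than} $i$ can every channel be blocked by two of them, leaving $i$ with no free channel to escape to; with fewer players some channel would carry at most one other player, and $i$ could deviate onto it and transmit successfully. The subtlety particular to ternary feedback is that the other players' behaviour depends on the observed count $m$, which $i$ can influence by transmitting; I would circumvent this by defining $E$ so that $m$ never drops below $n$, thereby decoupling the other players' randomness from $i$'s choices and keeping $\Pr[E]$ bounded below by a constant independent of $i$'s strategy. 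Checking that the per-round blocking configurations genuinely have positive probability under the $m=n$ decision rule is then a routine finite computation.
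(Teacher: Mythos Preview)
Your proposal is correct and follows essentially the same two-part structure as the paper: efficiency via the two windows of Lemma~\ref{ternary-whp} plus a union bound, and equilibrium via a positive-probability event on which every channel is permanently blocked after the deadline so that any player's expected latency is infinite regardless of deviation.

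The only noteworthy difference is in how the pre-deadline blocking event is built. The paper argues that, whatever channel the deviator picks in each round, all other players transmit on that same channel with positive probability, so nobody succeeds and $m$ stays at $n$ until $t_0$. You instead use the full strength of $n-1\geq 2k$ already in the pre-deadline phase, asking that in every round each of the $k$ channels carries at least two of the other players; this makes the event measurable with respect to the others' randomness alone and keeps $m=n$ without ever referring to the deviator's action. Both constructions yield the same conclusion, but yours sidesteps the small ambiguity in the paper's phrasing when the deviator chooses not to transmit, and it makes the independence of $\Pr[E]$ from the deviator's strategy explicit rather than implicit.
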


\begin{proof}
	First, we show that it is an equilibrium protocol when $n \geq 2k + 1$. The expected latency of a player using protocol $r$ is $\infty$. That is because there is an event with positive probability in which some player $i$ finds herself in an equilibrium where at least 2 of the other players have been assigned to each and all of the $k$ channels and transmit there in every time slot. In particular, with probability at least $k(\frac{1}{n})^{t_0 - 1} > 0$ all players will be pending right after $t_0 - 1$. Given this, with probability $\binom{n-1}{2,2,...,2,n-1-2k} (\frac{1}{k})^{n-1} > 0$ exactly 2 out of $n-1$ players will be assigned to each of the $k-1$ channels and the remaining players (including player $i$), which are at least $3$, are assigned to the remaining channel. Therefore, the aforementioned two events occur with positive probability, and then for player $i$ all channels are blocked for every $t \geq t_0$, resulting to infinite latency. Hence, the expected latency of a player using protocol $r$ is $\infty$. 
	
	Now suppose that player $i$ unilaterally deviates to some protocol $r'$. The event that all players are pending right before $t_0$ remains non-empty, since the event that all players transmit on the same channel as $i$ for every $1 \leq t \leq t_0 - 1$ happens with positive probability. Given that, the event that at least 2 of the players other than $i$ will be assigned to each channel happens with positive probability. Therefore, the deviator's expected latency remains $\infty$ and $r$ is an equilibrium protocol. 
	
	Finally, we will show that, when $n \in \omega(k)$, this protocol is also efficient, i.e. the time until all $n$ players transmit successfully is linear in $n/k$ with probability tending to 1 as $\frac{n}{k} \to \infty$. By Lemma \ref{ternary-whp}, the probability that not all players have successfully transmitted by time $t_1 + t_2 = 4e(n-k)/k$ is at most $\text{exp}\left(-\frac{n-k}{2ek} \right) + \text{exp}\left(-\frac{n-k}{2ek} \right) = 2 \text{exp}\left(-\frac{n-k}{2ek} \right).$
	Therefore, when $n \in \omega(k)$, no player is pending after $4e(n-k)/k$ rounds with high probability.
\end{proof}

\section{Open Problems}
This work leaves open some interesting problems. One of them is to find equilibria for arbitrary number of players in the multiple-channel setting with acknowledgement-based feedback. This will probably require a characterization of equilibria such as the one we provide for ternary feedback protocols in Subsection \ref{ter: history-dep char}. 

Another important open problem is to prove or disprove that there exists a FIN-EQ protocol that is efficient in the multiple-channels setting. This could be a deadline protocol or it might use some other key idea to impose a heavy latency on the players as a threat, so that they auto-restrain themselves from frequently attempting transmission. Proving that there is no efficient deadline FIN-EQ for the multiple-channel setting would be an interesting ``paradox'', since an efficient deadline FIN-EQ is found in \cite{FMN07} for the single-channel setting with ternary feedback. In view of Theorem \ref{thm: impos} we conjecture that the ``paradox'' is there.

~\\
\textbf{Acknowledgements.}
We would like to thank Frans Oliehoek for useful discussions on Partially Observable Markov Decision Processes.




\bibliographystyle{abbrv}
\bibliography{references}

\begin{thebibliography}{10}

\bibitem{ABBA05}
Eitan Altman, Dhiman Barman, Abderrahim Benslimane, and Rachid El~Azouzi.
\newblock Slotted aloha with priorities and random power.
\newblock In {\em International Conference on Research in Networking}, pages
  610--622. Springer, 2005.

\bibitem{AAJ04}
Eitan Altman, Rachid El~Azouzi, and Tania Jim{\'e}nez.
\newblock Slotted aloha as a game with partial information.
\newblock {\em Computer networks}, 45(6):701--713, 2004.

\bibitem{BFHKL05}
Michael~A Bender, Martin Farach-Colton, Simai He, Bradley~C Kuszmaul, and
  Charles~E Leiserson.
\newblock Adversarial contention resolution for simple channels.
\newblock In {\em Proceedings of the seventeenth annual ACM symposium on
  Parallelism in algorithms and architectures}, pages 325--332. ACM, 2005.

\bibitem{CJ79}
John Capetanakis.
\newblock Generalized tdma: The multi-accessing tree protocol.
\newblock {\em IEEE Transactions on Communications}, 27(10):1476--1484, 1979.

\bibitem{C79}
John Capetanakis.
\newblock Tree algorithms for packet broadcast channels.
\newblock {\em IEEE transactions on information theory}, 25(5):505--515, 1979.

\bibitem{CSY03}
Jenhui Chen, Shiann-Tsong Sheu, and Chin-An Yang.
\newblock A new multichannel access protocol for ieee 802.11 ad hoc wireless
  lans.
\newblock In {\em Personal, Indoor and Mobile Radio Communications, 2003. PIMRC
  2003. 14th IEEE Proceedings on}, volume~3, pages 2291--2296. IEEE, 2003.

\bibitem{CGNRS16}
George Christodoulou, Martin Gairing, Sotiris~E. Nikoletseas, Christoforos
  Raptopoulos, and Paul~G. Spirakis.
\newblock Strategic contention resolution with limited feedback.
\newblock In {\em 24th Annual European Symposium on Algorithms, {ESA} 2016,
  August 22-24, 2016, Aarhus, Denmark}, pages 30:1--30:16, 2016.

\bibitem{CGNRS17}
George Christodoulou, Martin Gairing, Sotiris~E. Nikoletseas, Christoforos
  Raptopoulos, and Paul~G. Spirakis.
\newblock A 3-player protocol preventing persistence in strategic contention
  with limited feedback.
\newblock In {\em Algorithmic Game Theory - 10th International Symposium,
  {SAGT} 2017, L'Aquila, Italy, September 12-14, 2017, Proceedings}, pages
  240--251, 2017.

\bibitem{CLP14}
George Christodoulou, Katrina Ligett, and Evangelia Pyrga.
\newblock Contention resolution under selfishness.
\newblock {\em Algorithmica}, 70(4):675--693, 2014.

\bibitem{CMS18}
George Christodoulou, Themistoklis Melissourgos, and Paul~G. Spirakis.
\newblock Short paper: Strategic contention resolution in multiple channels
  with limited feedback.
\newblock In {\em Algorithmic Game Theory - 11th International Symposium,
  {SAGT} 2018, Beijing, China, September 11-14, 2018, Proceedings}, pages
  245--250, 2018.

\bibitem{FMN07}
Amos Fiat, Yishay Mansour, and Uri Nadav.
\newblock Efficient contention resolution protocols for selfish agents.
\newblock In {\em Proceedings of the Eighteenth Annual {ACM-SIAM} Symposium on
  Discrete Algorithms, {SODA} 2007, New Orleans, Louisiana, USA, January 7-9,
  2007}, pages 179--188, 2007.
\newblock URL: \url{http://dl.acm.org/citation.cfm?id=1283383.1283403}.

\bibitem{LAG02}
Leslie~Ann Goldberg.
\newblock Notes on contention resolution.
\newblock In {\em
  http://www.cs.ox.ac.uk/people/leslieann.goldberg/contention.html}, 2002.

\bibitem{ALR96}
Johan H{\aa}stad, Tom Leighton, and Brian Rogoff.
\newblock Analysis of backoff protocols for multiple access channels.
\newblock {\em SIAM Journal on Computing}, 25(4):740--774, 1996.

\bibitem{H78}
Ji~Hayes.
\newblock An adaptive technique for local distribution.
\newblock {\em IEEE Transactions on Communications}, 26(8):1178--1186, 1978.

\bibitem{H63}
Wassily Hoeffding.
\newblock Probability inequalities for sums of bounded random variables.
\newblock {\em Journal of the American Statistical Association},
  58(301):13--30, 1963.
\newblock URL: \url{http://www.jstor.org/stable/2282952}.

\bibitem{KM87}
Frank~P Kelly and Iain~M MacPhee.
\newblock The number of packets transmitted by collision detect random access
  schemes.
\newblock {\em The Annals of Probability}, pages 1557--1568, 1987.

\bibitem{MMR06}
Richard~TB Ma, Vishal Misra, and Dan Rubenstein.
\newblock Modeling and analysis of generalized slotted-aloha mac protocols in
  cooperative, competitive and adversarial environments.
\newblock In {\em Distributed Computing Systems, 2006. ICDCS 2006. 26th IEEE
  International Conference on}, pages 62--62. IEEE, 2006.

\bibitem{MW03}
Allen~B MacKenzie and Stephen~B Wicker.
\newblock Stability of multipacket slotted aloha with selfish users and perfect
  information.
\newblock In {\em INFOCOM 2003. Twenty-Second Annual Joint Conference of the
  IEEE Computer and Communications. IEEE Societies}, volume~3, pages
  1583--1590. IEEE, 2003.

\bibitem{MSW08}
Jeonghoon Mo, Hoi-Sheung~Wilson So, and Jean Walrand.
\newblock Comparison of multichannel mac protocols.
\newblock {\em IEEE Transactions on mobile computing}, 7(1):50--65, 2008.

\bibitem{NZD99}
Asis Nasipuri, Jun Zhuang, and Samir~R Das.
\newblock A multichannel csma mac protocol for multihop wireless networks.
\newblock In {\em Wireless Communications and Networking Conference, 1999.
  WCNC. 1999 IEEE}, volume~3, pages 1402--1406. IEEE, 1999.

\bibitem{N98}
J.R. Norris.
\newblock {\em Markov Chains}.
\newblock Cambridge Series in Statistical and Probabilistic Mathematics.
  Cambridge University Press, 1998.
\newblock URL: \url{https://books.google.co.uk/books?id=qM65VRmOJZAC}.

\bibitem{R75}
Lawrence~G Roberts.
\newblock Aloha packet system with and without slots and capture.
\newblock {\em ACM SIGCOMM Computer Communication Review}, 5(2):28--42, 1975.

\bibitem{SLW15}
Mei-Ju Shih, Guan-Yu Lin, and Hung-Yu Wei.
\newblock A distributed multi-channel feedbackless mac protocol for d2d
  broadcast communications.
\newblock {\em IEEE Wireless Communications Letters}, 4(1):102--105, 2015.

\bibitem{SWM07}
H~Wilson So, Jean Walrand, and Jeonghoon Mo.
\newblock Mcmac: A multi-channel mac proposal for ad-hoc wireless networks.
\newblock In {\em Proc. of IEEE WCNC}, pages 334--339, 2007.

\bibitem{SV04}
Jungmin So and Nitin~H Vaidya.
\newblock Multi-channel mac for ad hoc networks: handling multi-channel hidden
  terminals using a single transceiver.
\newblock In {\em Proceedings of the 5th ACM international symposium on Mobile
  ad hoc networking and computing}, pages 222--233. ACM, 2004.

\bibitem{TK75}
Fouad Tobagi and Leonard Kleinrock.
\newblock Packet switching in radio channels: part ii--the hidden terminal
  problem in carrier sense multiple-access and the busy-tone solution.
\newblock {\em IEEE Transactions on communications}, 23(12):1417--1433, 1975.

\bibitem{TM78}
Boris~Solomonovich Tsybakov and Viktor~Alexandrovich Mikhailov.
\newblock Free synchronous packet access in a broadcast channel with feedback.
\newblock {\em Problemy Peredachi Informatsii}, 14(4):32--59, 1978.

\bibitem{ZZHSS07}
Jingbin Zhang, Gang Zhou, Chengdu Huang, Sang~Hyuk Son, and John~A Stankovic.
\newblock Tmmac: An energy efficient multi-channel mac protocol for ad hoc
  networks.
\newblock In {\em Communications, 2007. ICC'07. IEEE International Conference
  on}, pages 3554--3561. IEEE, 2007.

\end{thebibliography}


\end{document}